\newtheorem{theo}{Theorem}[section]
\newtheorem{lem}[theo]{Lemma}
\newtheorem{con}[theo]{Conjecture}
\newtheorem{prop}[theo]{Proposition}
\theoremstyle{remark}
\numberwithin{equation}{section}
\newcommand{\M}{\operatorname{M}}
\newcommand{\al}{\alpha}
\newcommand{\be}{\beta}
\newcommand{\z}{\zeta}
\newcommand{\ch}{\operatorname{q}}
\newcommand{\Z}{\mathbb Z}
\newcommand{\Q}{\mathbb Q}
\newcommand{\C}{\mathbb C}
\newcommand{\q}{\operatorname{q}}
\newcommand{\proj}{\operatorname{proj}}
\newcommand{\dee}{\operatorname{d}}
\renewcommand{\mod}{\operatorname{mod}}
\mathchardef\pFcomma=\mathcode`, 
\begin{document}

\title[The effect of microscopic gap displacement on the correlation of gaps]{The effect of microscopic gap displacement \\on the correlation of gaps in dimer systems}

\author[Mihai Ciucu]{\box\Adr}

\newbox\Adr
\setbox\Adr\vbox{ 
\centerline{ \large Mihai Ciucu} \vspace{0.3cm}
\centerline{Indiana University, Department of Mathematics}
\centerline{Bloomington, IN 47401, USA}
\centerline{{\tt mciucu@indiana.edu}} 
\vspace{0.5cm}
}

\thanks{Supported in part by National Science Foundation, DMS grant 1501052.}

\begin{abstract} In earlier work we showed that in the bulk, the correlation of gaps in dimer systems on the hexagonal lattice is governed, in the fine mesh limit, by Coulomb's law for 2D electrostatics. We also proved that the scaling limit of the discrete field ${\bold F}$ of average tile orientations is, up to a multiplicative constant, the electric field produced by a 2D system of charges corresponding to the gaps. In this paper we show that in the bulk, the relative change $T_{\al,\be}$ in correlation caused by displacing a hole by a fixed vector $(\al,\be)$ is, in the fine mesh limit, the projection on $(\al,\be)$ of a new field ${\bold T}$, which is also equal up to a multiplicative constant to the electric field of the corresponding system of charges.
We also discuss the differences between the fields ${\bold T}$ and ${\bold F}$ and present conjectures for their fine mesh limits in the more general case of a dimer system with boundary. The new field ${\bold T}$ can be viewed as capturing the instantaneous pull on each gap in the surrounding fluctuating sea of dimers. From the point of view of the parallel to physics, the electrostatic force emerges then as an entropic force.

\end{abstract}

\keywords{lozenge tilings, plane partitions, determinant evaluations, product formulas, electrostatics, emergence, entropic force}

\maketitle

\section{Introduction}

Consider a large region on the triangular lattice, say the region illustrated in Figure \ref{fjb}, with some fixed collection of holes. We want to consider lozenge tilings of such regions with holes, with two kinds of boundary conditions: Along the zig-zag portions of the boundary (shown in solid lines in Figure~\ref{fjb}) the lozenges are constrained to be inside the region, while along the straight lattice line portions (shown in dashed lines) lozenges are allowed to protrude out halfway.

Two natural questions that arise are:

\medskip
(1) How does the number of lozenge tilings\footnote{ A lozenge is the union of two unit triangles that share an edge. A lozenge tiling of a lattice region on the triangular lattice is a covering of the region by lozenges with no gaps or overlaps.} change for a fixed region if the holes are moved around?

\smallskip
(2) What is the average orientation\footnote{ When averaged over all lozenge tilings.} of the lozenge covering any given left-pointing unit triangle in the region?

\medskip
%
%
For the case when the region has no boundary (i.e. for lozenge tilings of the plane with a finite collection of holes), question (1) --- with ``moved around'' interpreted as displacing the holes {\it macroscopically} --- was answered in \cite{sc} and \cite{ec}, where we showed that for quite general collections of holes their correlation (a quantity defined as a limit of lozenge tilings; see Section 2) is governed in the scaling limit by the Coulomb energy of a 2D system of electrical charges naturally corresponding to the holes\footnote{ This system is obtained by replacing each hole by an electrical point charge of magnitude equal to the number of right-pointing unit triangles in the hole minus the number of left-pointing unit triangles in the hole.}. We conjectured in \cite{ov} that this holds for arbitrary collections of holes. Dub\'edat proved the square lattice counterpart of this conjecture in \cite{Dub} for the special case of unit holes. We proved the case of holes of arbitrary magnitude lined up along a diagonal of the square lattice in \cite{gd}. In \cite{spg} we extended these results to arbitrary planar, bipartite, weighted 2-periodic graphs in the liquid phase of the Kenyon-Okounkov-Sheffield classification of dimer models \cite{KOS}. 

For a half-plane with zig-zag boundary and a quite general collection of triangular holes of side-length two, question (1) was answered in \cite{sc}. The case of a half-plane with free lattice line boundary and a single triangular hole of side two was worked out in \cite{free}. We answered the square lattice analog of question (1) for the case of Aztec rectangles with holes along a symmetry axis in \cite{fin}. Further examples were worked out in \cite{angle} and \cite{rangle}. In the latter we also gave a conjectural answer for the general case of question (1) in the scaling limit.

For the case of no boundary, question (2) was answered in \cite{ef}, where we showed (for quite general collections of holes) that in the scaling limit the lozenges line up along the lines of the 2D electric field of the naturally corresponding system of charges mentioned above.


In this paper we consider the following finer version of question (1):

\medskip
(1'): What is the relative change in the number of tilings when a hole is displaced {\it microscopically}?

\medskip
We give the answer to question (1') in the case of no boundary, for a fairly general collection of holes (see Figure \ref{fbb} for an illustrative example), in Theorem \ref{tba}, which states that in the scaling limit the relative change in the number of tilings caused by microscopic displacements of a hole is governed by the logarithmic gradient of the Coulomb energy. We also give a conjectural answer for the general case with boundary in Conjecture \ref{tjb}. It turns out that this answer depends on the answer to question~(2), which we phrase in Conjecture \ref{tja}.




We can interpret the statement of Theorem \ref{tba} as follows. Suppose a hole moves to one of the six nearest possible positions (while the other holes are kept fixed), with probabilities proportional to the number of tilings compatible with the new position. Then the average displacement vector is lined up, in the scaling limit, along the electric field produced by the charges corresponding to the other holes, measured at the location of the moving hole. From this point of view, the electrostatic force emerges as an entropic force.
%
%
We discuss this in detail in Section 8.

Counting tilings of a lattice region in the plane --- equivalently, counting perfect matchings (also called dimer coverings) of its planar dual --- is a classical problem in combinatorics. The number of lozenge tilings of a hexagon (in the equivalent language of plane partitions; see \cite{DT}) was determined by MacMahon \cite{MacM}. A product formula for the number of domino tilings of a rectangular region on the square lattice was found independently by Kasteleyn \cite{Kast} and by Temperley and Fisher \cite{TF,FisherDimer}. The concept of correlation of gaps in a sea of dimers was introduced by Fisher and Stephenson in \cite{FS}. In related work, Zuber and Itzykson \cite{Zuber} studied the correlation of spins in the Ising model on the square lattice. For an overview of connections with the Coulomb gas model, see the survey \cite{Nienhuis} by Nienhuis. In \cite{KeLocal} Kenyon gave a practical way to compute the correlation of certain gaps in dimer systems as a determinant. In \cite{KOS} Kenyon, Okounkov and Sheffield gave a general classification of dimer models on planar, bipartite, weighted 2-periodic lattices. Another point of contact with the previous literature is the work by Baik, Kriecherbauer, McLaughlin and Miller \cite{BaikOne,BaikTwo} on the correlation of collinear edges in large hexagonal regions on the hexagonal lattice.

\section{Definitions and statement of results}
















The unit triangles of the triangular lattice can be conveniently coordinatized as follows. Each unit triangle has one vertical side. Each vertical unit segment in the lattice is contained in a unique right-pointing unit triangle, and in a unique left-pointing unit triangle. Mark the midpoints of the vertical unit segments, and coordinatize the marked points by a 60 degree oblique system of coordinates as indicated in Figure \ref{fba}. Then each right-pointing unit triangle gets the coordinates of the midpoint of its vertical side; left-pointing unit triangles are coordinatized the same way.

We denote by 
${\begin{matrix} \triangleright &  \\[-10.45pt] \triangleleft & \hskip-0.233in\triangleright \\[-10.45pt] \triangleright & \end{matrix}}\!\!\!_{a,b}$
the right-pointing triangular hole of side length two, placed in the plane so that its central unit triangle has coordinates $(a,b)$; 
${\begin{matrix} & \hskip-0.153in \triangleleft \\[-10.45pt]
\triangleleft & \hskip-0.153in\triangleright \\[-10.45pt]
& \hskip-0.153in\triangleleft \end{matrix}}_{a,b}$ has the obvious analogous meaning.

For $q\in\Q$ and a strictly increasing list of integers ${\bold a}=(a_1,\dotsc,a_s)$ for which $qa_i\in\Z$ and the ${\begin{matrix} \triangleright &  \\[-10.45pt] \triangleleft & \hskip-0.233in\triangleright \\[-10.45pt] \triangleright & \end{matrix}}\!\!\!_{a_i,qa_i}$'s (equivalently, the
${\begin{matrix} & \hskip-0.153in \triangleleft \\[-10.45pt]
\triangleleft & \hskip-0.153in\triangleright \\[-10.45pt]
& \hskip-0.153in\triangleleft \end{matrix}}_{a_i,qa_i}$'s)
are mutually disjoint, $i=1,\dotsc,s$, define the {\it linear multiholes} 
${\begin{matrix} \triangleright &  \\[-10.45pt] \triangleleft & \hskip-0.233in\triangleright \\[-10.45pt] \triangleright & \end{matrix}}_{\!\!\!\!\!\bold a}^{\!\!\!\!\!q}$
and
${\begin{matrix} & \hskip-0.153in \triangleleft \\[-10.45pt]
\triangleleft & \hskip-0.153in\triangleright \\[-10.45pt]
& \hskip-0.153in\triangleleft \end{matrix}}_{\,\bold a}^{\,q}$
(called linear multiholes of {\it slope} $q$) by
\begin{align*}
{\begin{matrix} \triangleright &  \\[-10.45pt] \triangleleft & \hskip-0.233in\triangleright \\[-10.45pt] \triangleright & \end{matrix}}_{\!\!\!\!\!\bold a}^{\!\!\!\!\!q}
={\begin{matrix} \triangleright &  \\[-10.45pt] \triangleleft & \hskip-0.233in\triangleright \\[-10.45pt] \triangleright & \end{matrix}}\!\!\!_{a_1,qa_1}
\cup\cdots\cup
{\begin{matrix} \triangleright &  \\[-10.45pt] \triangleleft & \hskip-0.233in\triangleright \\[-10.45pt] \triangleright & \end{matrix}}\!\!\!_{a_s,qa_s}
\\[10pt]
{\begin{matrix} & \hskip-0.153in \triangleleft \\[-10.45pt]
\triangleleft & \hskip-0.153in\triangleright \\[-10.45pt]
& \hskip-0.153in\triangleleft \end{matrix}}_{\,\bold a}^{\,q}
=
{\begin{matrix} & \hskip-0.153in \triangleleft \\[-10.45pt]
\triangleleft & \hskip-0.153in\triangleright \\[-10.45pt]
& \hskip-0.153in\triangleleft \end{matrix}}_{a_1,qa_1}
\cup\cdots\cup
{\begin{matrix} & \hskip-0.153in \triangleleft \\[-10.45pt]
\triangleleft & \hskip-0.153in\triangleright \\[-10.45pt]
& \hskip-0.153in\triangleleft \end{matrix}}_{a_s,qa_s}.
\end{align*}
Figure \ref{fbb} shows several examples of linear multiholes.

\begin{figure}[t]
\centerline{
\hfill
{\includegraphics[width=0.30\textwidth]{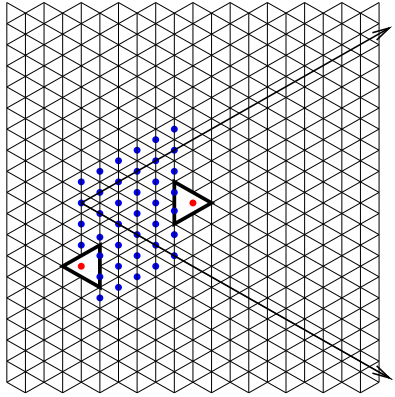}}
\hfill
}
\caption{\label{fba} The $60^\circ$ coordinate system and two triangular holes of side 2.}
\end{figure}

\begin{figure}[h]
\centerline{
\hfill
{\includegraphics[width=0.70\textwidth]{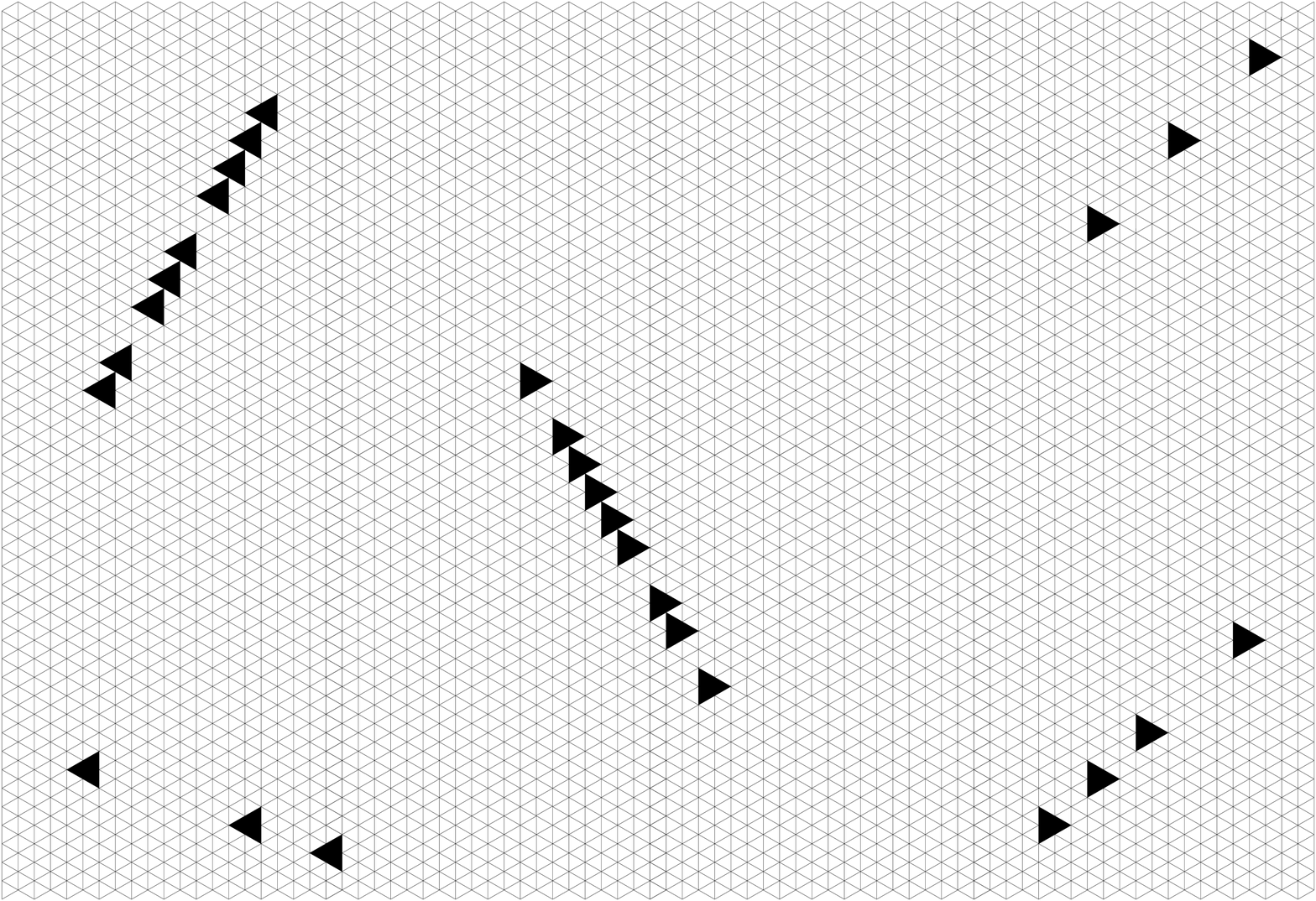}}
\hfill
}
\caption{\label{fbb} Linear multiholes of various slopes.}
\end{figure}



Our results concern the behavior of the correlation of such linear multiholes in the limit of large separation distances between them. To phrase these results, we need some notation for translations of given holes. This requires us to distinguish between the {\it shape} of a hole $O$ --- two holes have the same shape if they are translates of one another --- and the {\it placement} $O(x,y)$ of a hole of a given shape, which is its translation with the property that some distinguished unit triangle in the shape (say the topmost and leftmost) acquires coordinates $(x,y)$. For brevity, we refer to a hole placed at a specific location simply as a hole.

The joint correlation $\hat\omega$ of holes is defined as follows (see \cite{ef}).
For $j=1,\dotsc,k$, let $Q_j$ be either a lozenge-hole or a lattice triangular 
hole of side two. Define the {\it charge} $\q(O)$ of the hole $O$ to be the number of right-pointing unit triangles in $O$ minus the number of left-pointing unit triangles in $O$.
 
It is enough to define $\hat\omega(Q_1,\dotsc,Q_k)$ when the total charge $c=\sum_{j=1}^k \ch(Q_j)\geq0$ (the other case reduces 
to this by reflection across a vertical lattice line). Our definition is inductive on $c$:

\medskip
$(i)$. If $c=0$, let $N$ be large enough so that the lattice rhombus of side $N$ centered at the origin encloses 
all $Q_j$'s, and denote by $T_N$ the torus obtained from this large lattice rhombus by identifying its opposite 
sides. Set\footnote{\,$\M(\mathcal R)$ denotes the number of lozenge tilings of the lattice region $\mathcal R$.}
\begin{equation*}
\hat\omega(Q_1,\dotsc,Q_k):=
\lim_{N\to\infty}\frac{\M\left(T_N\setminus Q_1\cup\cdots \cup Q_k\right)}{\M\left(T_N\right)}.
\end{equation*}
$(ii)$. If $c>0$, define
\begin{equation*}
\hat\omega(Q_1,\dotsc,Q_k):=\lim_{R\to\infty}R^c\,\hat\omega\left(Q_1,\dotsc,Q_k,{\begin{matrix} & \hskip-0.153in \triangleleft \\[-10.45pt]
\triangleleft & \hskip-0.153in\triangleright \\[-10.45pt]
& \hskip-0.153in\triangleleft \end{matrix}}_{R,0}\right).
\end{equation*}

The above limits exist by Proposition \ref{tca}.

Given a hole $O$, we denote by $O+(x,y)$ its translation by the vector $(x,y)$.


For a given collection of holes $O_1,\dotsc,O_n$, 
define the {\it relative change $T^{O_1}_{\al,\be}(O_1,\dotsc,O_n)$ in the correlation $\omega(O_1,\dotsc,O_n)$ under displacing $O_1$ by $(\al,\be)$} by
\begin{equation}
T^{O_1}_{\al,\be}(O_1,\dotsc,O_n):=
\frac
{\omega(O_1+(\al,\be),\dotsc,O_n)}
{\omega(O_1,\dotsc,O_n)}
-1.
\label{eba}
\end{equation}

Define
\begin{equation}
{\bold E}(a,b;c,d):=\dfrac{(c-a,d-b)}{(c-a)^2+(c-a)(d-b)+(d-b)^2},
\label{ebaa}
\end{equation}
the vector pointing from $(a,b)$ to $(c,d)$ (two marked points in our $60^\circ$ coordinate system) and having length 
$\dfrac{1}{\dee((a,b),(c,d))}$ (where $\dee$ is the Euclidean distance).

%

\bigskip
\begin{theo}
\label{tba}
Suppose $O_i$ is either of type
${\begin{matrix} \triangleright &  \\[-10.45pt] \triangleleft & \hskip-0.233in\triangleright \\[-10.45pt] \triangleright & \end{matrix}}_{\!\!\!\!\!{\bold a}_i}^{\!\!\!\!\!q_i}$
with $3|1-q_i$, or of type
${\begin{matrix} & \hskip-0.153in \triangleleft \\[-10.45pt]
\triangleleft & \hskip-0.153in\triangleright \\[-10.45pt]
& \hskip-0.153in\triangleleft \end{matrix}}_{\,{\bold b}_i}^{\,q'_i}$
with $3|1-q'_i$, for $i=1,\dotsc,n$.
Let $x_1^{(R)},\dotsc,x_n^{(R)},y_1^{(R)},\dotsc,y_n^{(R)}\in\Z$ with
$\lim_{R\to\infty}{x_i^{(R)}}/{R}=x_i$, $\lim_{R\to\infty}{y_i^{(R)}}/{R}=y_i$, $(x_i,y_i)$ distinct.

Then if $O_i^{(R)}=O_i(x_i^{(R)},y_i^{(R)})$, for any fixed pair of integers $(\al,\be)\neq(0,0)$ we have as $R\to\infty$ that
\begin{align}
\frac{1}{\q(O_1)}\frac{1}{\sqrt{\alpha^2+\alpha\beta+\beta^2}}
\,T_{\alpha,\beta}^{O_1^{(R)}}(O_1^{(R)},\dotsc,O_n^{(R)})
=
\proj_{(\alpha,\beta)}\frac12\sum_{j=2}^n \q(O_j){\bold E(x_j,y_j;x_1,y_1)}\frac{1}{R}+o\left(\frac{1}{R}\right),
\label{ebb}
\end{align}
where $\proj_{(a,b)}(x,y)$ is the length of the orthogonal projection of vector $(x,y)$ on the direction of the vector $(a,b)$. 
\end{theo}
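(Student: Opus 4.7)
The plan is to derive the theorem by differentiating the Coulomb-law asymptotic for $\omega$ established for these hole configurations in \cite{sc} and \cite{ec}. Under the present hypotheses (the slope conditions $3\mid 1-q_i$ and $3\mid 1-q'_i$ put the linear multiholes in the class treated there), that result gives
\[
\omega(O_1^{(R)},\dotsc,O_n^{(R)}) = K\prod_{i<j}\dee\bigl((x_i^{(R)},y_i^{(R)}),(x_j^{(R)},y_j^{(R)})\bigr)^{\q(O_i)\q(O_j)/2}\bigl(1+o(1)\bigr),
\]
with $K$ depending only on the shapes of the $O_i$'s, not on their placements. Since displacing $O_1$ by the fixed vector $(\al,\be)$ leaves every shape intact, $K$ cancels in the ratio defining $T_{\al,\be}^{O_1^{(R)}}$, and taking logarithms reduces the problem to a first-order evaluation of
\[
\sum_{j=2}^n \frac{\q(O_1)\q(O_j)}{2}\,\log\frac{\dee\bigl((x_1^{(R)}+\al,y_1^{(R)}+\be),(x_j^{(R)},y_j^{(R)})\bigr)}{\dee\bigl((x_1^{(R)},y_1^{(R)}),(x_j^{(R)},y_j^{(R)})\bigr)}.
\]

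The second step is a routine expansion in the oblique coordinates. Writing $u_j=x_1^{(R)}-x_j^{(R)}$ and $v_j=y_1^{(R)}-y_j^{(R)}$ (both of order $R$) and differentiating $\dee_{1j}^{2}=u_j^{2}+u_jv_j+v_j^{2}$ shows that the log-ratio above equals
\[
\frac{\al u_j+\be v_j+\tfrac12(\al v_j+\be u_j)}{u_j^{2}+u_jv_j+v_j^{2}}+O\!\left(\tfrac{1}{R^{2}}\right).
\]
The numerator is precisely the Euclidean inner product $(\al,\be)\cdot_E(u_j,v_j)$ in the $60^\circ$ metric, whose Gram matrix has off-diagonal entry $1/2$, while by \eqref{ebaa} the vector $(u_j,v_j)/(u_j^{2}+u_jv_j+v_j^{2})$ is exactly ${\bold E}(x_j^{(R)},y_j^{(R)};x_1^{(R)},y_1^{(R)})$. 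Using $x_i^{(R)}\sim Rx_i$, $y_i^{(R)}\sim Ry_i$ together with the scaling ${\bold E}(Ra,Rb;Rc,Rd)=R^{-1}{\bold E}(a,b;c,d)$, and noting that $\log(1+T)=T+O(T^2)=T+o(1/R)$ once one sees $T=O(1/R)$, summing over $j\ge 2$ yields
\[
T_{\al,\be}^{O_1^{(R)}}=\frac{\q(O_1)}{2R}\sum_{j=2}^{n}\q(O_j)\,{\bold E}(x_j,y_j;x_1,y_1)\cdot_E(\al,\be)+o\!\left(\tfrac{1}{R}\right).
\]
Dividing both sides by $\q(O_1)\sqrt{\al^2+\al\be+\be^2}$ converts the Euclidean inner product into the length of the orthogonal projection of $\tfrac12\sum_{j\ge 2}\q(O_j){\bold E}(x_j,y_j;x_1,y_1)/R$ onto the direction of $(\al,\be)$, which is exactly \eqref{ebb}.

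The chief technical obstacle is justifying that the multiplicative error in the Coulomb-law asymptotic for $\omega$ is small enough to survive the subtraction built into $T_{\al,\be}$: what is needed is that the relative error at $O_1^{(R)}+(\al,\be)$ differ from the relative error at $O_1^{(R)}$ by $o(1/R)$, whereas the results of \cite{sc} and \cite{ec} a priori yield only an $o(1)$ bound at each placement separately. Closing this gap will likely require either a refinement of the asymptotic expansion to an error of order $R^{-1-\de}$, or, more plausibly, a direct comparison of the two correlations carried out via the explicit product and determinant formulas available for the linear multiholes singled out by the divisibility hypotheses, before passing to the limit. Once such uniform error control is in hand, the rest of the proof is the routine oblique-coordinate expansion summarized above.
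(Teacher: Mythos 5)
There is a genuine gap, and you have in fact named it yourself: the entire content of the theorem is the uniform error control that your argument defers to a final paragraph. The quantity $T^{O_1^{(R)}}_{\al,\be}$ is of order $1/R$, while the Coulomb-law asymptotics of \cite{sc} and \cite{ec} give $\omega=K\prod_{i<j}\dee_{ij}^{\q(O_i)\q(O_j)/2}(1+o(1))$ with only a multiplicative $o(1)$ error at each placement separately; the ratio of two such expressions is $1+o(1)$, which swamps the $O(1/R)$ signal entirely. So what you present is a correct heuristic derivation of the right-hand side of \eqref{ebb} (your oblique-coordinate expansion and the bookkeeping with $\proj_{(\al,\be)}$ and the Gram matrix do check out against the identity $\proj_{(a,b)}(x,y)=\tfrac12\,[a(2x+y)+b(x+2y)]/\sqrt{a^2+ab+b^2}$ used at the end of Section 5), but not a proof. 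The "more plausible" remedy you sketch in one sentence --- a direct comparison of the two correlations at the level of the exact determinant formulas before passing to the limit --- is precisely what the paper spends Sections 3 through 7 carrying out, and it is not routine: one must first desingularize the determinants by divided-difference row and column operations (the matrices of dominant terms are otherwise singular), then telescope $\det\widetilde{M'}_{\al,\be}-\det\widetilde{M'}$ into a sum of single-row perturbations via Lemma \ref{tea}, and then show that each perturbed row has dominant term equal to $\tfrac1R(\al\partial_{x_1}+\be\partial_{y_1})$ applied to the corresponding row of $\lfloor\lfloor\widetilde{M'}\rfloor\rfloor$, so that the sum becomes the logarithmic derivative of the explicit product formula of Theorem \ref{tdb} --- which is the rigorous incarnation of "differentiate the Coulomb law."

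A second point your heuristic cannot see: the entrywise asymptotics of the coupling function $P$ and of the $U_s$ depend on residues modulo $3$, so the differential-operator identification (Proposition \ref{tgc}) only holds when $3\mid\al-\be$. The remaining residue classes require a separate argument (Section 6) using the recurrence $P(x,y)+P(x-1,y)+P(x,y-1)=0$ to re-express the perturbed rows, followed by a delicate cancellation $\det C_1+\det C_2=0$ of the extra terms. Nothing in your write-up accounts for this case distinction, and it is not a cosmetic issue: without it the displacement $(\al,\be)=(1,0)$ --- the most basic microscopic step --- is not covered. In short, the answer and the motivating computation are right, but the proof of the theorem is the part you have left as future work.
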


\parindent0pt
{\it Remark $1$}. Setting
\begin{equation}
{\bold T}=\frac12\sum_{j=2}^n \q(O_j){\bold E(x_j,y_j;x_1,y_1)}
\label{ebbb}
\end{equation}  
(if we consider a 2D system of electrical charges obtained by replacing $O_i$ by a charge of magnitude $\q(O_i)$, $i=2,\dotsc,n$, then ${\bold T}$ is --- up to a multiplicative constant --- the electric field produced by this system at the position of $O_1$),
equation \eqref{ebb} states that 
\begin{equation}
\,T_{\alpha,\beta}^{O_1^{(R)}}(O_1^{(R)},\dotsc,O_n^{(R)})
\sim
{\bold T}\cdot(\al,\be)\frac{\q(O_1)}{R},\ \ \ R\to\infty.
\label{ebbbb}
\end{equation}

{\it Remark $2$}. In the special case when $x_1^{(R)}=Rx_1,\dotsc,w_n^{(R)}=Rw_n$, the error term in \eqref{ebb} is $O(1/R^2)$. This follows from Propositions 4.1 and 4.5 of \cite{ef} by following through the steps of our proof of Theorem~\ref{tba}.

\parindent15pt

\section{Reducing to exact determinant evaluations}

The starting point for proving our results is a determinant formula which gives the exact value of the correlation of any collection $\mathcal{C}$ of unit holes on the triangular lattice with the property that $\mathcal{C}$ can be partitioned into pairs of adjacent unit triangles (i.e. unit triangles sharing at least one vertex).
The case when the numbers of left- and right-pointing unit holes are equal is due to Kenyon \cite{KeLocal}; we worked out the general case in  \cite{ef}. 

This formula involves the coupling function $P(x,y)$, $x,y\in\Z$ specified by
\begin{equation}
P(x,y)=\frac{1}{2\pi i}\int_{e^{2\pi i/3}}^{e^{4\pi i/3}} t^{-y-1}(-1-t)^{-x-1}dt,\ \ \ x\leq-1
\label{eca}
\end{equation}
and the symmetries $P(x,y)=P(y,x)=P(-x-y-1,x)$ (see \cite{KeLocal}),
and the coefficients $U_s$ of its asymptotic series
\begin{equation}
P(-3r-1+a,-1+b)\sim\sum_{s=0}^\infty (3r)^{-s-1}U_s(a,b),\ \ \ r\to\infty,\ a,b\in\Z
\label{ecb}
\end{equation}
which we found explicitly in \cite[Proposition 4.1]{ec} to be
\begin{equation}
U_s(a,b)=-\frac{i}{2\pi}\left.\left[\zeta^{a-b-1}(1-D\zeta^{-1})^{-b}-
         \zeta^{-a+b+1}(1-D\zeta)^{-b}\right](x^s)\right|_{x=a+b-1},
\label{ecbb}
\end{equation}
where $D$ is the difference operator\footnote{ For a function $f$ defined on $\Z$, $Df$
is defined by $Df(x)=f(x+1)-f(x)$.} and $\zeta=e^{2\pi i/3}$.

Denote by $r(a,b)$ and $l(a,b)$ the right- and left-pointing monomers (i.e. unit triangles) of coordinates $(a,b)$, 
respectively. The above mentioned formula from \cite{ef} is the following\footnote{ The matrix $M_U$ used in \cite[Proposition 2.1]{ef} is obtained from the one shown here by swapping columns $2i-1$ and $2i$, for $i=1,\cdots(m-n)/2-1$; this clearly doesn't change the absolute value of the determinant on the right hand side of \eqref{ecc}.}.

\begin{prop}[\cite{ef}]
\label{tca}
  Assume that $\{r(a_1,b_1),\dotsc,r(a_m,b_m),l(c_1,d_1),\dotsc,l(c_n,d_n)\}$
can be partitioned into subsets of size two so that the monomers in each subset share at least one 
vertex.
Then if $m\geq n$ we have
\begin{equation}
\hat{\omega}(r(a_1,b_1),\dotsc,r(a_m,b_m),l(c_1,d_1),\dotsc,l(c_n,d_n))=
\left|\det\left[\begin{matrix} M_P\!\!\!\!\!\!\!\!&&M_U\end{matrix}\right]\right|,
\label{ecc}
\end{equation}
where
\begin{equation*}
M_P=\left[
\begin{matrix}
P(a_1-c_1,b_1-d_1)&&\cdots&&P(a_1-c_n,b_1-d_n)\\
P(a_2-c_1,b_2-d_1)&&\cdots&&P(a_2-c_n,b_2-d_n)\\
\vdots&&\ &&\vdots\\
P(a_m-c_1,b_m-d_1)&&\cdots&&P(a_m-c_n,b_m-d_n)
\end{matrix}\right]
\end{equation*}
and 
\begin{equation*}
M_U=
\left[
\begin{matrix}
U_0(a_1+1,b_1)\!\!\!\!\!\!\!\!&&U_0(a_1,b_1+1)\!\!\!\!&&\cdots\!\!\!\!&&
U_{\frac{m-n}{2}-1}(a_1+1,b_1)\!\!\!\!\!\!\!\!&&U_{\frac{m-n}{2}-1}(a_1,b_1+1)\\
U_0(a_2+1,b_2)\!\!\!\!\!\!\!\!&&U_0(a_2,b_2+1)\!\!\!\!&&\cdots\!\!\!\!&&
U_{\frac{m-n}{2}-1}(a_2+1,b_2)\!\!\!\!\!\!\!\!&&U_{\frac{m-n}{2}-1}(a_2,b_2+1)\\
\vdots&&\vdots&&\ &&\vdots&&\vdots\\
U_0(a_m+1,b_m)\!\!\!\!\!\!\!\!&&U_0(a_m,b_m+1)\!\!\!\!&&\cdots\!\!\!\!&&
U_{\frac{m-n}{2}-1}(a_m+1,b_m)\!\!\!\!\!\!\!\!&&U_{\frac{m-n}{2}-1}(a_m,b_m+1)
\end{matrix}\right]\!\!.
\end{equation*}

\end{prop}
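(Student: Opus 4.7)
My plan is to treat the balanced case $m=n$ and the unbalanced case $m>n$ separately. In the balanced case the formula reduces to $\hat\omega=|\det M_P|$, which is Kenyon's theorem from \cite{KeLocal}. The idea is that on the torus $T_N$ the ratio $\M(T_N\setminus Q_1\cup\cdots\cup Q_k)/\M(T_N)$ equals, in absolute value, an $n\times n$ minor of the inverse Kasteleyn matrix indexed by the positions of the unit holes (this is a standard application of Kasteleyn's theorem combined with the Jacobi identity for the effect of deleting vertices in a bipartite graph). As $N\to\infty$, the entries of the inverse Kasteleyn matrix on $T_N$ converge entrywise to the coupling function $P(x,y)$ defined in \eqref{eca}, yielding $|\det M_P|$ in the limit. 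The hypothesis that the monomers can be partitioned into vertex-sharing pairs is essential here: without it, individual entries of the inverse Kasteleyn matrix would grow without bound, whereas the pairing produces cancellations that keep the determinant finite.

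For the unbalanced case (necessarily with $m-n$ even, for parity reasons on the torus) I would unfold the inductive definition in part $(ii)$. With $c=m-n>0$, the correlation becomes a nested limit in which one places $(m-n)/2$ auxiliary left-pointing triangular holes of side two at positions $(R_1,0),\dotsc,(R_{(m-n)/2},0)$, with $R_k\to\infty$ in sequence and with the prefactor prescribed by $(ii)$ applied at each stage. Each auxiliary hole consists of one right-pointing and three left-pointing unit monomers located within an $O(1)$ neighborhood of $(R_k,0)$, and after all insertions the augmented collection is charge-balanced, so the balanced-case formula produces a Kenyon determinant of size $m+(m-n)/2$ with entries drawn from $P$. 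To extract the claimed $|\det[M_P\mid M_U]|$, I would Laplace-expand along the $(m-n)/2$ auxiliary rows so that the surviving terms leave an $m\times m$ submatrix formed by the original $m$ rows paired with the original $n$ columns plus $m-n$ residual auxiliary columns. The entries of these residual columns (evaluated at original-row positions) are $P$-values of order $R_k^{-1}$ for the appropriate stage, and \eqref{ecb} provides their full asymptotic series $(3R_k)^{-s-1}U_s(a,b)$.

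Column operations that subtract off the shared leading coefficients among the columns coming from a single auxiliary hole then isolate independent combinations at successive orders, producing columns whose leading behavior is precisely $(3R_k)^{-s-1}U_s(a+1,b)$ and $(3R_k)^{-s-1}U_s(a,b+1)$ for $s=0,1,\dotsc,(m-n)/2-1$; these leading powers are exactly cancelled by the prefactors from $(ii)$, and all subleading terms vanish as $R_k\to\infty$. The hard part will be the coefficient and index bookkeeping in this final step: one must verify that the prescribed product of $R_k$-prefactors exactly matches the collective powers of $R_k^{-s-1}$ produced by the asymptotic expansion (so that a finite nonzero limit emerges), that the subleading terms do vanish in the iterated limit, and that the surviving columns assemble into the precise pattern of $M_U$ shown in the statement. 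The absolute value on the right of \eqref{ecc} absorbs both the Kasteleyn sign ambiguity and the column reorderings used during the reduction, and once this bookkeeping is verified the identity \eqref{ecc} follows.
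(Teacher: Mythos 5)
First, note that the paper does not prove Proposition \ref{tca} at all: it is imported verbatim from \cite{ef} (with the balanced case credited to Kenyon \cite{KeLocal}), so there is no in-paper argument to compare against. Judged on its own, your outline does follow the architecture one expects and that the cited works use --- inverse-Kasteleyn/coupling-function determinant for $m=n$, then unfolding the inductive definition $(ii)$ and feeding the expansion \eqref{ecb} into the auxiliary columns to manufacture the $U_s$-block for $m>n$. But there are two substantive problems.

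In the balanced case, your explanation of why the pairing hypothesis is needed is wrong. The entries of the inverse Kasteleyn matrix do not ``grow without bound''; $P(x,y)$ decays like the reciprocal of the distance. The hypothesis is needed because deleting a set of vertices from a Kasteleyn-signed graph and taking the determinant of the reduced matrix produces a \emph{signed} enumeration of the matchings of the punctured graph: the induced orientation fails the clockwise-odd condition on the faces merged by the deletion, so $|\det|$ equals the matching count only when the deleted monomers organize into adjacent pairs. This is not a pedantic point --- for genuinely unpaired monomers the determinant formula is false (the Fisher--Stephenson monomer--monomer correlation \cite{FS} decays like $d^{-1/2}$, whereas $|P|$ would predict $d^{-1}$), so the one nontrivial step of the balanced case is exactly the one your justification misses.

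In the unbalanced case, the part you defer as ``bookkeeping'' is the entire proof, and as sketched it contains a structural error: each auxiliary side-two triangle supplies one extra row and three extra columns, hence after the expansion only \emph{two} residual columns, so a single auxiliary hole cannot produce the columns $U_s(a+1,b)$, $U_s(a,b+1)$ for \emph{all} $s=0,\dotsc,(m-n)/2-1$ as you write; the $(m-n)/2$ holes must collectively supply one value of $s$ each, and which hole supplies which $s$ is dictated by the order of the nested limits $R_{(m-n)/2}\to\infty$ first, $R_1\to\infty$ last. The column operations that cancel the shared $U_0,\dotsc,U_{s-1}$ leading terms must therefore be interleaved with these limits, and one must separately check that the cross entries $P(R_i-R_j+O(1),O(1))$ between distinct auxiliary holes do not contribute, that the surviving power of each $R_k$ exactly cancels the prefactor $R_k^{c-2(k-1)}$ from $(ii)$, and that the subleading terms of \eqref{ecb} vanish in the iterated limit. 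None of this is carried out, so the unbalanced half of \eqref{ecc} is asserted rather than proved.
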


For lists of integers ${\bold a}=(a_1,\dotsc,a_s)$, ${\bold b}=(b_1,\dotsc,b_t)$ and rational numbers $q$, $q'$ with the property that $qa_i,q'b_j\in\Z$, $i=1,\dotsc,s$, $j=1,\dotsc,t$, define the $2s\times 2t$ matrix $A_{x,y,z,w}({\bold a},{\bold b};q,q')$ by

\begin{align}
&
  A_{x,y,z,w}({\bold a},{\bold b};q,q'):=
  \nonumber
\\[15pt]
&\ \ \ \ \ \ \ 
\left[
\begin{matrix}
P\left({\!\!\!\!\!\!\!\!\!\!\!x-z+a_1-b_1-1,\atop\ y-w+qa_1-q'b_1-1\!}\right)
&
P\left({\!\!\!\!x-z+a_1-b_1-2,\atop\ y-w+qa_1-q'b_1\!}\right)
&
\cdots
&
P\left({\!\!\!\!\!\!\!\!\!\!\!x-z+a_1-b_t-1,\atop\ y-w+qa_1-q'b_t-1\!}\right)
&
P\left({\!\!\!\!x-z+a_1-b_t-2,\atop\ y-w+qa_1-q'b_t\!}\right)
\\[10pt]
P\left({\!\!\!\!\!\!\!\!\!\!\!\!\!\!\!\!\!\!x-z+a_1-b_1,\atop\ y-w+qa_1-q'b_1-2\!}\right)
&
P\left({\!\!\!\!\!\!\!\!\!\!\!x-z+a_1-b_1-1,\atop\ y-w+qa_1-q'b_1-1\!}\right)
&
\cdots
&
P\left({\!\!\!\!\!\!\!\!\!\!\!\!\!\!\!\!\!\!x-z+a_1-b_t,\atop\ y-w+qa_1-q'b_t-2\!}\right)
&
P\left({\!\!\!\!\!\!\!\!\!\!\!x-z+a_1-b_t-1,\atop\ y-w+qa_1-q'b_t-1\!}\right)
\\
\\
P\left({\!\!\!\!\!\!\!\!\!\!\!x-z+a_2-b_1-1,\atop\ y-w+qa_2-q'b_1-1\!}\right)
&
P\left({\!\!\!\!x-z+a_2-b_1-2,\atop\ y-w+qa_2-q'b_1\!}\right)
&
\cdots
&
P\left({\!\!\!\!\!\!\!\!\!\!\!x-z+a_2-b_t-1,\atop\ y-w+qa_2-q'b_t-1\!}\right)
&
P\left({\!\!\!\!x-z+a_2-b_t-2,\atop\ y-w+qa_2-q'b_t\!}\right)
\\[10pt]
P\left({\!\!\!\!\!\!\!\!\!\!\!\!\!\!\!\!\!\!x-z+a_2-b_1,\atop\ y-w+qa_2-q'b_1-2\!}\right)
&
P\left({\!\!\!\!\!\!\!\!\!\!\!x-z+a_2-b_1-1,\atop\ y-w+qa_2-q'b_1-1\!}\right)
&
\cdots
&
P\left({\!\!\!\!\!\!\!\!\!\!\!\!\!\!\!\!\!\!x-z+a_2-b_t,\atop\ y-w+qa_2-q'b_t-2\!}\right)
&
P\left({\!\!\!\!\!\!\!\!\!\!\!x-z+a_2-b_t-1,\atop\ y-w+qa_2-q'b_t-1\!}\right)
\\
\\
\cdot & \cdot & & \cdot & \cdot\\
\cdot & \cdot & & \cdot & \cdot\\
\cdot & \cdot & & \cdot & \cdot
\\
\\
P\left({\!\!\!\!\!\!\!\!\!\!\!x-z+a_s-b_1-1,\atop\ y-w+qa_s-q'b_1-1\!}\right)
&
P\left({\!\!\!\!x-z+a_s-b_1-2,\atop\ y-w+qa_s-q'b_1\!}\right)
&
\cdots
&
P\left({\!\!\!\!\!\!\!\!\!\!\!x-z+a_s-b_t-1,\atop\ y-w+qa_s-q'b_t-1\!}\right)
&
P\left({\!\!\!\!x-z+a_s-b_t-2,\atop\ y-w+qa_s-q'b_t\!}\right)
\\[10pt]
P\left({\!\!\!\!\!\!\!\!\!\!\!\!\!\!\!\!\!\!x-z+a_s-b_1,\atop\ y-w+qa_s-q'b_1-2\!}\right)
&
P\left({\!\!\!\!\!\!\!\!\!\!\!x-z+a_s-b_1-1,\atop\ y-w+qa_s-q'b_1-1\!}\right)
&
\cdots
&
P\left({\!\!\!\!\!\!\!\!\!\!\!\!\!\!\!\!\!\!x-z+a_s-b_t,\atop\ y-w+qa_s-q'b_t-2\!}\right)
&
P\left({\!\!\!\!\!\!\!\!\!\!\!x-z+a_s-b_t-1,\atop\ y-w+qa_s-q'b_t-1\!}\right)
\end{matrix}
\right].
\nonumber
\\[5pt]
\label{ecd}
\end{align}
For $k\geq 0$, define also the $2s\times 2k $ matrix $B_{x,y}({\bold a},k;q,q')$ by

\begin{align}
  B_{x,y}({\bold a},k;q):=
  \nonumber
\left[
\begin{matrix}
U_0\left({\!\!\!\!\!x+a_1,\atop\ y+qa_1\!}\right)
&
U_0\left({\!\!\!\!\!x+a_1-1,\atop\ y+qa_1+1\!}\right)
&
\cdots
&
U_{k-1}\left({\!\!\!\!\!x+a_1,\atop\ y+qa_1\!}\right)
&
U_{k-1}\left({\!\!\!\!\!x+a_1-1,\atop\ y+qa_1+1\!}\right)
\\[10pt]
U_0\left({\!\!\!\!\!x+a_1+1,\atop\ y+qa_1-1\!}\right)
&
U_0\left({\!\!\!\!\!x+a_1,\atop\ y+qa_1\!}\right)
&
\cdots
&
U_{k-1}\left({\!\!\!\!\!x+a_1+1,\atop\ y+qa_1-1\!}\right)
&
U_{k-1}\left({\!\!\!\!\!x+a_1,\atop\ y+qa_1\!}\right)
\\
\\
U_0\left({\!\!\!\!\!x+a_2,\atop\ y+qa_2\!}\right)
&
U_0\left({\!\!\!\!\!x+a_2-1,\atop\ y+qa_2+1\!}\right)
&
\cdots
&
U_{k-1}\left({\!\!\!\!\!x+a_2,\atop\ y+qa_2\!}\right)
&
U_{k-1}\left({\!\!\!\!\!x+a_2-1,\atop\ y+qa_2+1\!}\right)
\\[10pt]
U_0\left({\!\!\!\!\!x+a_2+1,\atop\ y+qa_2-1\!}\right)
&
U_0\left({\!\!\!\!\!x+a_2,\atop\ y+qa_2\!}\right)
&
\cdots
&
U_{k-1}\left({\!\!\!\!\!x+a_2+1,\atop\ y+qa_2-1\!}\right)
&
U_{k-1}\left({\!\!\!\!\!x+a_2,\atop\ y+qa_2\!}\right)
\\
\\
\cdot & \cdot & & \cdot & \cdot\\
\cdot & \cdot & & \cdot & \cdot\\
\cdot & \cdot & & \cdot & \cdot
\\
\\
U_0\left({\!\!\!\!\!x+a_s,\atop\ y+qa_s\!}\right)
&
U_0\left({\!\!\!\!\!x+a_s-1,\atop\ y+qa_s+1\!}\right)
&
\cdots
&
U_{k-1}\left({\!\!\!\!\!x+a_s,\atop\ y+qa_s\!}\right)
&
U_{k-1}\left({\!\!\!\!\!x+a_s-1,\atop\ y+qa_s+1\!}\right)
\\[10pt]
U_0\left({\!\!\!\!\!x+a_s+1,\atop\ y+qa_s-1\!}\right)
&
U_0\left({\!\!\!\!\!x+a_s,\atop\ y+qa_s\!}\right)
&
\cdots
&
U_{k-1}\left({\!\!\!\!\!x+a_s+1,\atop\ y+qa_s-1\!}\right)
&
U_{k-1}\left({\!\!\!\!\!x+a_s,\atop\ y+qa_s\!}\right)
\end{matrix}
\right].
\nonumber
\\[5pt]
\label{ece}
\end{align}

Notice that if only the two unit holes lined up along the vertical side of a triangle of side two are present, the lozenge that fits in the notch between them is forced to be part of every tiling of the complement of these unit holes. Therefore, from the point of view of computing the correlation, such a pair of unit holes is equivalent to the triangular hole of side-length two that encloses them.

Suppose the list ${\bold a}_i$ has $s_i$ elements, $i=1,\dotsc,m$, and that ${\bold b}_j$ has $t_j$ elements, $j=1,\dotsc,n$. Using the observation in the previous paragraph, one readily sees that if $\sum_{i=1}^m s_i\geq\sum_{j=1}^n t_j$ (an assumption we can make without loss of generality), Proposition \ref{tca} implies that
\begin{align}
&\!\!\!\!\!\!\!\!\!\!\!\!\!\!\!\!
\omega\left({\begin{matrix}
\triangleright &  \\[-10.45pt] \triangleleft & \hskip-0.233in\triangleright \\[-10.45pt] \triangleright & \end{matrix}}_{\!\!\!\!\!\bold a_1}^{\!\!\!\!\!q_1}(x_1^{(R)},y_1^{(R)}),
\dotsc,
{\begin{matrix} \triangleright &  \\[-10.45pt] \triangleleft & \hskip-0.233in\triangleright \\[-10.45pt] \triangleright & \end{matrix}}_{\!\!\!\!\!\bold a_m}^{\!\!\!\!\!q_m}(x_m^{(R)},y_m^{(R)}),
{\begin{matrix} & \hskip-0.153in \triangleleft \\[-10.45pt]
\triangleleft & \hskip-0.153in\triangleright \\[-10.45pt]
& \hskip-0.153in\triangleleft \end{matrix}}_{\,\bold b_1}^{\,q'_1}(z_1^{(R)},w_1^{(R)}),
\dotsc,
{\begin{matrix} & \hskip-0.153in \triangleleft \\[-10.45pt]
\triangleleft & \hskip-0.153in\triangleright \\[-10.45pt]
& \hskip-0.153in\triangleleft \end{matrix}}_{\,\bold b_n}^{\,q'_n}(z_n^{(R)},w_n^{(R)})
\right)
\nonumber
\\[10pt]
&\ \ \ \ \ \ \ \ \
=
\omega\left(
\begin{matrix} \triangleright  \\[-7.85pt] \triangleright \end{matrix}_{\,\bold a_1}^{\,q_1}(x_1^{(R)},y_1^{(R)}),
\dotsc,
\begin{matrix} \triangleright  \\[-7.85pt] \triangleright \end{matrix}_{\,\bold a_m}^{\,q_m}(x_m^{(R)},y_m^{(R)}),
\begin{matrix} \triangleleft  \\[-7.85pt] \triangleleft \end{matrix}_{\,\bold b_1}^{\,q'_1}(z_1^{(R)},w_1^{(R)}),
\dotsc,
\begin{matrix} \triangleleft  \\[-7.85pt] \triangleleft \end{matrix}_{\,\bold b_n}^{\,q'_n}(z_n^{(R)},w_n^{(R)})
\right)
\nonumber
\\[10pt]
&\ \ \ \ \ \ \ \ \ \ 
=|\det(M)|,
\label{ecf}
\end{align}
with
\begin{align}
&
M=
\nonumber
\\[10pt]
&\ \ \ 
\left[
\begin{matrix}
A_{x_1^{(R)},y_1^{(R)},z_1^{(R)},w_1^{(R)}}({\bold a}_1,{\bold b}_1;q_1,q'_1) &
\dotsc&
A_{x_1^{(R)},y_1^{(R)},z_n^{(R)},w_n^{(R)}}({\bold a}_1,{\bold b}_n;q_1,q'_n) &
B_{x_1^{(R)},y_1^{(R)}}({\bold a}_1,\nu;q_1)
\\
\\
A_{x_2^{(R)},y_2^{(R)},z_1^{(R)},w_1^{(R)}}({\bold a}_2,{\bold b}_1;q_2,q'_1) &
\dotsc&
A_{x_2^{(R)},y_2^{(R)},z_n^{(R)},w_n^{(R)}}({\bold a}_2,{\bold b}_n;q_2,q'_n) &
B_{x_2^{(R)},y_2^{(R)}}({\bold a}_2,\nu;q_2)
\\
\\
\cdot & & \cdot & \cdot\\
\cdot & & \cdot & \cdot\\
\cdot & & \cdot & \cdot\\
\\
\\
A_{x_m^{(R)},y_m^{(R)},z_1^{(R)},w_1^{(R)}}({\bold a}_m,{\bold b}_1;q_m,q'_1) &
\dotsc&
A_{x_m^{(R)},y_m^{(R)},z_n^{(R)},w_n^{(R)}}({\bold a}_m,{\bold b}_n;q_2,q'_n) &
B_{x_m^{(R)},y_m^{(R)}}({\bold a}_m,\nu;q_m)
\end{matrix}  
\right],
\nonumber
\\
\label{ecg}
\end{align}
where $\nu=\sum_{i=1}^m s_i-\sum_{j=1}^n t_j$.

\medskip
Replacing $x_1^{(R)}$ by $x_1^{(R)}+\alpha$ and $y_1^{(R)}$ by $y_1^{(R)}+\beta$, \eqref{ecf} becomes
\begin{align}
&
\omega\left({\begin{matrix}
\triangleright &  \\[-10.45pt] \triangleleft & \hskip-0.233in\triangleright \\[-10.45pt] \triangleright & \end{matrix}}_{\!\!\!\!\!\bold a_1}^{\!\!\!\!\!q_1}(x_1^{(R)}+\alpha,y_1^{(R)}+\beta),
\dotsc,
{\begin{matrix} \triangleright &  \\[-10.45pt] \triangleleft & \hskip-0.233in\triangleright \\[-10.45pt] \triangleright & \end{matrix}}_{\!\!\!\!\!\bold a_m}^{\!\!\!\!\!q_m}(x_m^{(R)},y_m^{(R)}),
{\begin{matrix} & \hskip-0.153in \triangleleft \\[-10.45pt]
\triangleleft & \hskip-0.153in\triangleright \\[-10.45pt]
& \hskip-0.153in\triangleleft \end{matrix}}_{\,\bold b_1}^{\,q'_1}(z_1^{(R)},w_1^{(R)}),
\dotsc,
{\begin{matrix} & \hskip-0.153in \triangleleft \\[-10.45pt]
\triangleleft & \hskip-0.153in\triangleright \\[-10.45pt]
& \hskip-0.153in\triangleleft \end{matrix}}_{\,\bold b_n}^{\,q'_n}(z_n^{(R)},w_n^{(R)})
\right)
\nonumber
\\[10pt]
&\ \ \ \ \ \ \ \ \ \ 
=|\det(M_{\alpha,\beta})|
\label{ech}
\end{align}
where $M_{\alpha,\beta}$ is the matrix
\begin{align}
\left[
\begin{matrix}
A_{x_1^{(R)}+\alpha,y_1^{(R)}+\beta,z_1^{(R)},w_1^{(R)}}({\bold a}_1,{\bold b}_1;q_1,q'_1) &
\dotsc&
A_{x_1^{(R)}+\alpha,y_1^{(R)}+\beta,z_n^{(R)},w_n^{(R)}}({\bold a}_1,{\bold b}_n;q_1,q'_n) &
B_{x_1^{(R)}+\alpha,y_1^{(R)}+\beta}({\bold a}_1,\nu;q_1)
\\
\\
A_{x_2^{(R)},y_2^{(R)},z_1^{(R)},w_1^{(R)}}({\bold a}_2,{\bold b}_1;q_2,q'_1) &
\dotsc&
A_{x_2^{(R)},y_2^{(R)},z_n^{(R)},w_n^{(R)}}({\bold a}_2,{\bold b}_n;q_2,q'_n) &
B_{x_2^{(R)},y_2^{(R)}}({\bold a}_2,\nu;q_2)
\\
\\
\cdot & & \cdot & \cdot\\
\cdot & & \cdot & \cdot\\
\cdot & & \cdot & \cdot\\
\\
\\
A_{x_m^{(R)},y_m^{(R)},z_1^{(R)},w_1^{(R)}}({\bold a}_m,{\bold b}_1;q_m,q'_1) &
\dotsc&
A_{x_m^{(R)},y_m^{(R)},z_n^{(R)},w_n^{(R)}}({\bold a}_m,{\bold b}_n;q_2,q'_n) &
B_{x_m^{(R)},y_m^{(R)}}({\bold a}_m,\nu;q_m)
\end{matrix}  
\right]
\nonumber
\\
\label{eci}
\end{align}
(in particular, the matrix $M$ given by \eqref{ecg} is just $M_{0,0}$).

In order to prove Theorem \ref{tba}, we need to determine the $R\to\infty$ asymptotics of
\begin{equation}
\left|\frac{\det M_{\alpha,\beta}}{\det M}\right|-1. 
\label{ecj}
\end{equation}

The first obstacle to surmount is that the matrices in \eqref{ecj} are in general {\it asymptotically singular} --- i.e. the matrices formed by the dominant terms of their entries are singular (more precisely, as we will see below, this happens unless all $s_i$'s and $t_j$'s are equal to 1). We resolve this by repeatedly applying a determinant preserving operator (which we first considered in \cite[Section 3]{ec}) --- patterned on Newton's divided differences operator --- that acts on certain groups of rows and columns of the matrices $M$ and $M_{\alpha,\beta}$, and transforms them into asymptotically non-singular matrices $M'$ and $M'_{\alpha,\beta}$.

Namely, consider the following operation on a square matrix $X$ in which rows 
$i_1,\dotsc,i_k$ are of the form
$f(c_1),\dotsc,f(c_k)$, respectively, for some vector function $f$: Transform
rows $i_1,\dotsc,i_k$ of $X$ as
\begin{equation}
\left[\begin{matrix}
f(c_1)\\
f(c_2)\\
.\\
.\\
.\\
f(c_k)
\end{matrix}\right]
\mapsto
\left[\begin{matrix}
{\mathcal D}^0f(c_1)\\
(c_2-c_1){\mathcal D}^1 f(c_1)\\
.\\
.\\
.\\
(c_k-c_1)(c_k-c_2)\dotsc(c_k-c_{k-1}){\mathcal D}^{k-1}f(c_1)
\end{matrix}\right],
\label{eck}
\end{equation}
where $\mathcal D$ is Newton's divided difference operator, whose powers are defined inductively
by ${\mathcal D}^0 f=f$ and 
${\mathcal D}^r f(c_j)=({\mathcal D}^{r-1} f(c_{j+1})- {\mathcal D}^{r-1} f(c_{j}))/(c_{j+r}-c_j)$, $r\geq1$ (see \cite{Jordan}) .

This operation has an obvious analog for columns.

As noted in \cite[Section 5]{ec} (and as can be seen by looking at \eqref{ecd}, \eqref{ece} and \eqref{ecg}),  
operation \eqref{eck} can be applied a total of $2m+2n$ different times to the matrix $M$: each row 
of block matrices in the expression for $M$ given by  \eqref{ecg}, \eqref{ecd} and \eqref{ece} provides two
opportunities 
(along the odd-indexed rows and along the even-indexed ones), and each column consisting of
$A$-blocks provides two more\footnote{ This is due to the structure of the linear multiholes, and to the fact that each multihole slope $q$ satisfies $3|1-q$. We need this so that for instance the asymptotics of the odd-index entries in each column of matrix \eqref{ecd} are evaluations of the same function $f$ (for such multiholes this follows from the special case $k=l=0$ of \eqref{ege}).}. Let $M'$ be the matrix obtained from $M$ after applying these
$2m+2n$ operations.

The matrix $M'$ can be described as follows\footnote{ See \cite[Section 5]{ec}, which details the special case $x_1^{(R)}=Rx_1,\dotsc,w_n^{(R)}=Rw_n$.}: While the $2\times 2$ matrix in position $(k,l)$ in the
$(i,j)$-block of the $P$-part of $M$ ($1\leq k\leq s_i$, $1\leq l\leq t_j$) is
\begin{equation}
\left[
\begin{matrix}
P\left({\!\!\!\!\!\!\!\!\!\!\!x_i^{(R)}-z_j^{(R)}+a_{ik}-b_{jl}-1,\atop\ y_i^{(R)}-w_j^{(R)}+q_ia_{ik}-q'_jb_{jl}-1\!}\right)
&
P\left({\!\!\!\!x_i^{(R)}-z_j^{(R)}+a_{ik}-b_{jl}-2,\atop\ y_i^{(R)}-w_j^{(R)}+q_ia_{ik}-q'_jb_{jl}\!}\right)
\\[10pt]
P\left({\!\!\!\!\!\!\!\!\!\!\!\!\!\!\!\!\!\!x_i^{(R)}-z_j^{(R)}+a_{ik}-b_{jl},\atop\ y_i^{(R)}-w_j^{(R)}+q_ia_{ik}-q'_jb_{jl}-2\!}\right)
&
P\left({\!\!\!\!\!\!\!\!\!\!\!x_i^{(R)}-z_j^{(R)}+a_{ik}-b_{jl}-1,\atop\ y_i^{(R)}-w_j^{(R)}+q_ia_{ik}-q'_jb_{jl}-1\!}\right)
\end{matrix}
\right],
\label{ecl}
\end{equation}
by our construction, the corresponding $2\times 2$ matrix in $M'$ is 
\begin{equation}
\alpha_k^{(i)}\beta_l^{(j)}{\mathcal D}^{l-1}_b{\mathcal D}^{k-1}_a
\left[
\begin{matrix}
P\left({\!\!\!\!\!\!\!\!\!\!\!x_i^{(R)}-z_j^{(R)}+a_{ik}-b_{jl}-1,\atop\ y_i^{(R)}-w_j^{(R)}+q_ia_{ik}-q'_jb_{jl}-1\!}\right)
&
P\left({\!\!\!\!x_i^{(R)}-z_j^{(R)}+a_{ik}-b_{jl}-2,\atop\ y_i^{(R)}-w_j^{(R)}+q_ia_{ik}-q'_jb_{jl}\!}\right)
\\[10pt]
P\left({\!\!\!\!\!\!\!\!\!\!\!\!\!\!\!\!\!\!x_i^{(R)}-z_j^{(R)}+a_{ik}-b_{jl},\atop\ y_i^{(R)}-w_j^{(R)}+q_ia_{ik}-q'_jb_{jl}-2\!}\right)
&
P\left({\!\!\!\!\!\!\!\!\!\!\!x_i^{(R)}-z_j^{(R)}+a_{ik}-b_{jl}-1,\atop\ y_i^{(R)}-w_j^{(R)}+q_ia_{ik}-q'_jb_{jl}-1\!}\right)
\end{matrix}
\right],
\label{ecm}
\end{equation}
where the powers of ${\mathcal D}$ act entry-wise (${\mathcal D}^{k-1}_a$ acting on the sequence $a_{i1},a_{i2},\dotsc,a_{is_i}$ consisting of the elements of ${\bold a}_i$, ${\mathcal D}^{l-1}_b$ on the sequence
$-b_{j1},-b_{j2},\dotsc,-b_{jt_j}$
consisting of
the negatives of
the elements of ${\bold b}_j$), and 
\begin{align}
\alpha_k^{(i)}&=(a_{ik}-a_{i1})(a_{ik}-a_{i2})\cdots(a_{ik}-a_{i,k-1})
\label{ecn}
\\
\beta_l^{(j)}&=(-b_{jl}+b_{j1})(-b_{jl}+b_{j2})\cdots(-b_{jl}+b_{j,l-1}).
\label{eco}
\end{align}
Similarly, while the $2\times 2$ sub-matrix of $M$ at the intersection of rows $2k-1$ and $2k$
of block-row $i$ with columns $2T+2l-1$ and $2T+2l$ ($1\leq k\leq s_i$, $1\leq l\leq S-T$) is 
\begin{equation}
\left[
\begin{matrix}
U_{l-1}\left({\!\!\!\!\!x_i^{(R)}+a_{ik},\atop\ y_i^{(R)}+q_ia_{ik}\!}\right)
&
U_{l-1}\left({\!\!\!\!\!x_i^{(R)}+a_{ik}-1,\atop\ y_i^{(R)}+q_ia_{ik}+1\!}\right)
\\[10pt]
U_{l-1}\left({\!\!\!\!\!x_i^{(R)}+a_{ik}+1,\atop\ y_i^{(R)}+q_ia_{ik}-1\!}\right)
&
U_{l-1}\left({\!\!\!\!\!x_i^{(R)}+a_{ik},\atop\ y_i^{(R)}+q_ia_{ik}\!}\right)
\end{matrix}\right],
\label{ecp}
\end{equation}
the corresponding $2\times 2$ sub-matrix of $M'$ is
\begin{equation}
\alpha_k^{(i)}{\mathcal D}^{k-1}_a
\left[
\begin{matrix}
U_{l-1}\left({\!\!\!\!\!x_i^{(R)}+a_{ik},\atop\ y_i^{(R)}+q_ia_{ik}\!}\right)
&
U_{l-1}\left({\!\!\!\!\!x_i^{(R)}+a_{ik}-1,\atop\ y_i^{(R)}+q_ia_{ik}+1\!}\right)
\\[10pt]
U_{l-1}\left({\!\!\!\!\!x_i^{(R)}+a_{ik}+1,\atop\ y_i^{(R)}+q_ia_{ik}-1\!}\right)
&
U_{l-1}\left({\!\!\!\!\!x_i^{(R)}+a_{ik},\atop\ y_i^{(R)}+q_ia_{ik}\!}\right)
\end{matrix}\right]
\label{ecq}
\end{equation}
with $\alpha_k^{(i)}$ given by \eqref{ecn}.

Since operations \eqref{eck} preserve the determinant (see \cite[Lemma 5.2]{ec}), we have
\begin{equation}
\det M=\det M',
\label{ecr}
\end{equation}
and as we will see below, $M'$ turns out to be asymptotically non-singular. Furthermore --- and this is crucial for our proof --- the determinant of the matrix consisting of the dominant terms of $M'$ turns out to have a simple product evaluation (see Theorem \ref{tdb}).

The matrix $M_{\alpha,\beta}$ at the numerator in \eqref{ecj} can be handled the same way. It differs from matrix $M$ only along the first block-row (consisting of rows $1,\dotsc,2s_1$). Applying to $M_{\alpha,\beta}$ precisely the same $2m+2n$ operations as to $M$, the resulting matrix $M'_{\alpha,\beta}$ is therefore given by \eqref{ecm}--\eqref{eco} and \eqref{ecq} for $i\geq 2$. For $i=1$, it follows from our construction that the 
$2\times 2$ matrix in  position $(k,l)$ in the
$(1,j)$-block of the $P$-part of $M'_{\alpha,\beta}$ ($1\leq k\leq s_1$, $1\leq l\leq t_j$) is 
\begin{equation}
\alpha_k^{(1)}\beta_l^{(j)}{\mathcal D}^{l-1}_b{\mathcal D}^{k-1}_a
\left[
\begin{matrix}
P\left({\!\!\!\!\!\!\!\!\!\!\!x_1^{(R)}-z_j^{(R)}+a_{1k}-b_{jl}+\alpha-1,\atop\ y_1^{(R)}-w_j^{(R)}+q_1a_{1k}-q'_jb_{jl}+\beta-1\!}\right)
&
P\left({\!\!\!\!x_1^{(R)}-z_j^{(R)}+a_{1k}-b_{jl}+\alpha-2,\atop\ y_1^{(R)}-w_j^{(R)}+q_1a_{1k}-q'_jb_{jl}+\beta\!}\right)
\\[10pt]
P\left({\!\!\!\!\!\!\!\!\!\!\!\!\!\!\!\!\!\!x_1^{(R)}-z_j^{(R)}+a_{1k}-b_{jl}+\alpha,\atop\ y_1^{(R)}-w_j^{(R)}+q_1a_{1k}-q'_jb_{jl}+\beta-2\!}\right)
&
P\left({\!\!\!\!\!\!\!\!\!\!\!x_1^{(R)}-z_j^{(R)}+a_{1k}-b_{jl}+\alpha-1,\atop\ y_1^{(R)}-w_j^{(R)}+q_1a_{1k}-q'_jb_{jl}+\beta-1\!}\right)
\end{matrix}
\right],
\label{ecs}
\end{equation}
and the $2\times 2$ sub-matrix of $M'_{\alpha,\beta}$ at the intersection of rows $2k-1$ and $2k$
of the first block-row  with columns $2T+2l-1$ and $2T+2l$ ($1\leq l\leq S-T$) is 
\begin{equation}
\alpha_k^{(1)}{\mathcal D}^{k-1}_a
\left[
\begin{matrix}
U_{l-1}\left({\!\!\!\!\!x_i^{(R)}+a_{ik}+\alpha,\atop\ y_i^{(R)}+q_ia_{ik}+\beta\!}\right)
&
U_{l-1}\left({\!\!\!\!\!x_i^{(R)}+a_{ik}+\alpha-1,\atop\ y_i^{(R)}+q_ia_{ik}+\beta+1\!}\right)
\\[10pt]
U_{l-1}\left({\!\!\!\!\!x_i^{(R)}+a_{ik}+\alpha+1,\atop\ y_i^{(R)}+q_ia_{ik}+\beta-1\!}\right)
&
U_{l-1}\left({\!\!\!\!\!x_i^{(R)}+a_{ik}+\alpha,\atop\ y_i^{(R)}+q_ia_{ik}+\beta\!}\right)
\end{matrix}\right],
\label{ect}
\end{equation}
where --- since the multihole
${\begin{matrix} \triangleright &  \\[-10.45pt] \triangleleft & \hskip-0.233in\triangleright \\[-10.45pt] \triangleright & \end{matrix}}_{\!\!\!\!\!{\bold a}_1}^{\!\!\!\!\!q_1}(x_1^{(R)}+\al,y_1^{(R)}+\be)$ has the same shape as
${\begin{matrix} \triangleright &  \\[-10.45pt] \triangleleft & \hskip-0.233in\triangleright \\[-10.45pt] \triangleright & \end{matrix}}_{\!\!\!\!\!{\bold a}_1}^{\!\!\!\!\!q_1}(x_1^{(R)},y_1^{(R)})$
--- the prefactors $\alpha_k^{(1)}$ and $\beta_l^{(j)}$ in \eqref{ecs} and \eqref{ect} are precisely the same as in \eqref{ecn} and \eqref{eco}.

%
\begin{prop}
\label{tcb}  
Let the matrices $\widetilde{M'}$ and $\widetilde{M'}_{\alpha,\beta}$ be defined precisely as the matrices $M'$ and $M'_{\alpha,\beta}$ given by  \eqref{ecm}--\eqref{eco}, \eqref{ecq} and \eqref{ecs}--\eqref{ect}, with the one exception that the prefactors $\alpha_k^{(i)}$ and $\beta_l^{(j)}$ are not included. Then 
\begin{equation}
\frac{\det M_{\alpha,\beta}}{\det M}
=
\frac{\det \widetilde{M'}_{\alpha,\beta}}{\det \widetilde{M'}}.
\label{ecu}
\end{equation}
\end{prop}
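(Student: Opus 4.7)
The plan is to reduce the ratio $\det M_{\alpha,\beta}/\det M$ to $\det M'_{\alpha,\beta}/\det M'$, and then to observe that $M'$ and $M'_{\alpha,\beta}$ differ from $\widetilde{M'}$ and $\widetilde{M'}_{\alpha,\beta}$ only by left- and right-multiplication by the \emph{same} pair of diagonal matrices, so that those factors cancel when the ratio is taken.

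First I would recall that \eqref{ecr} records $\det M = \det M'$, obtained by applying to $M$ a sequence of $2m+2n$ determinant-preserving operations \eqref{eck} via \cite[Lemma 5.2]{ec}. I would then apply the identical sequence of operations to $M_{\alpha,\beta}$. This is well-defined because each instance of \eqref{eck} replaces rows (or columns) $f(c_1),\dotsc,f(c_k)$ by a fixed linear combination of them whose coefficients depend only on the differences among the parameters $c_j$; here those parameters are the entries $a_{ik}$ (or $-b_{jl}$) of the ${\bold a}_i$ and ${\bold b}_j$, which are common to $M$ and $M_{\alpha,\beta}$. Hence the resulting matrix is precisely $M'_{\alpha,\beta}$ as described by \eqref{ecs}--\eqref{ect}, giving $\det M_{\alpha,\beta}=\det M'_{\alpha,\beta}$.

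Next I would read off from \eqref{ecm} and \eqref{ecq} the factorization
\[
M' \;=\; D_r\,\widetilde{M'}\,D_c,
\]
where $D_r$ is the diagonal matrix whose two diagonal entries in positions $2k-1$ and $2k$ of block-row $i$ are both equal to $\alpha_k^{(i)}$, and $D_c$ is the diagonal matrix whose two diagonal entries in positions $2l-1$ and $2l$ of the $j$-th block-column of the $P$-part are both equal to $\beta_l^{(j)}$ (with $1$'s in the positions corresponding to the $B$-part). The sentence highlighted right after \eqref{ect} states explicitly that the same prefactors $\alpha_k^{(1)}$ and $\beta_l^{(j)}$ also appear in the first block-row of $M'_{\alpha,\beta}$, while for block-rows with $i\geq 2$ the matrices $M'$ and $M'_{\alpha,\beta}$ simply coincide; therefore $M'_{\alpha,\beta} = D_r\,\widetilde{M'}_{\alpha,\beta}\,D_c$ with the \emph{same} $D_r,D_c$.

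Finally, since the lists ${\bold a}_i$ and ${\bold b}_j$ are strictly increasing, \eqref{ecn}--\eqref{eco} yield $\alpha_k^{(i)},\beta_l^{(j)}\neq 0$ for all relevant indices, so $\det D_r,\det D_c\neq 0$, and multiplicativity of the determinant gives
\[
\frac{\det M_{\alpha,\beta}}{\det M}
=
\frac{\det M'_{\alpha,\beta}}{\det M'}
=
\frac{\det \widetilde{M'}_{\alpha,\beta}}{\det \widetilde{M'}},
\]
which is precisely \eqref{ecu}. The step requiring the most care is the first one --- confirming that the very operations used to define $M'$ from $M$ also transform $M_{\alpha,\beta}$ into the matrix $M'_{\alpha,\beta}$ described by \eqref{ecs}--\eqref{ect} --- but this is really just the observation that the coefficients of the row and column operations in \eqref{eck} depend only on the parameters $a_{ik}$ and $-b_{jl}$, which are untouched by the $(\alpha,\beta)$ shift.
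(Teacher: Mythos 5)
Your proof is correct and follows essentially the same route as the paper: reduce to $\det M'_{\alpha,\beta}/\det M'$ via the determinant-preserving operations, then cancel the common prefactors $\alpha_k^{(i)}$, $\beta_l^{(j)}$ in the ratio. Your packaging of the prefactor extraction as $M'=D_r\,\widetilde{M'}\,D_c$ and $M'_{\alpha,\beta}=D_r\,\widetilde{M'}_{\alpha,\beta}\,D_c$ with the same nonzero diagonal matrices is just a more explicit rendering of the paper's ``pull the prefactors out along rows and columns'' step.
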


\begin{proof}
Since the operations that transformed $M$ and $M_{\alpha,\beta}$ into $M'$ and $M'_{\alpha,\beta}$ preserve the determinant, we have
\begin{equation*}
\frac{\det M_{\alpha,\beta}}{\det M}
=
\frac{\det {M'}_{\alpha,\beta}}{\det {M'}}.
\end{equation*}
Note that all the prefactors $\alpha_k^{(i)}$ and $\beta_l^{(j)}$ involved in the determinants on the right hand side above can be pulled out in front of the determinants, by suitably factoring them out along rows and columns. Since, as noted above,  the prefactors $\alpha_k^{(i)}$ and $\beta_l^{(j)}$ in \eqref{ecs} and \eqref{ect} are precisely the same as in \eqref{ecm} and \eqref{ecq}, all of them cancel out in the ratio on the right hand side above, yielding \eqref{ecu}.
\end{proof}  

\section{A product formula for $\det\lfloor\lfloor\widetilde{M}'\rfloor\rfloor$}

As we stated in the previous section, the determinant of the matrix $\lfloor\lfloor\widetilde{M}'\rfloor\rfloor$ formed by the dominant terms of the entries of $\widetilde{M}'$ turns out to be non-singular\footnote{ Throughout this paper, for a matrix $A$ whose entries depend on a large parameter, we denote by $\lfloor\lfloor A\rfloor\rfloor$ the matrix formed by the dominant parts (as the large parameter approaches infinity) of the entries of $A$.}. In fact, $\det\lfloor\lfloor\widetilde{M}'\rfloor\rfloor$ turns out to have an explicit, simple (and visibly non-zero) product expression.

By \eqref{ecm} and \eqref{ecq}, the matrix $\widetilde{M}'$ defined in the statement of Proposition \ref{tcb} is the block matrix
\begin{align}
&
\widetilde{M}'=
\nonumber
\\[10pt]
&
\left[
\begin{matrix}
A'_{x_1^{(R)},y_1^{(R)},z_1^{(R)},w_1^{(R)}}({\bold a}_1,{\bold b}_1;q_1,q'_1) &
\dotsc&
A'_{x_1^{(R)},y_1^{(R)},z_n^{(R)},w_n^{(R)}}({\bold a}_1,{\bold b}_n;q_1,q'_n) &
B'_{x_1^{(R)},y_1^{(R)}}({\bold a}_1,S-T;q_1)
\\
\\
A'_{x_2^{(R)},y_2^{(R)},z_1^{(R)},w_1^{(R)}}({\bold a}_2,{\bold b}_1;q_2,q'_1) &
\dotsc&
A'_{x_2^{(R)},y_2^{(R)},z_n^{(R)},w_n^{(R)}}({\bold a}_2,{\bold b}_n;q_2,q'_n) &
B'_{x_2^{(R)},y_2^{(R)}}({\bold a}_2,S-T;q_2)
\\
\\
\cdot & & \cdot & \cdot\\
\cdot & & \cdot & \cdot\\
\cdot & & \cdot & \cdot\\
\\
\\
A'_{x_m^{(R)},y_m^{(R)},z_1^{(R)},w_1^{(R)}}({\bold a}_m,{\bold b}_1;q_m,q'_1) &
\dotsc&
A'_{x_m^{(R)},y_m^{(R)},z_n^{(R)},w_n^{(R)}}({\bold a}_m,{\bold b}_n;q_2,q'_n) &
B'_{x_m^{(R)},y_m^{(R)}}({\bold a}_m,S-T;q_m)
\end{matrix}  
\right]
\nonumber
\\
\label{eda}
\end{align}
whose blocks are given by
\begin{align}
&
A'_{x,y,z,w}({\bold a},{\bold b};q,q')
=
\nonumber
\\[10pt]
&\ \ \ 
\left[
\begin{matrix}
& & \cdot & &\\  
& & \cdot & &\\  
& & \cdot & &\\  
 & {\mathcal D}^{l-1}_b {\mathcal D}^{k-1}_a P\left({\!\!\!\!\!\!\!\!\!\!\!x-z+a_{k}-b_{l}-1,\atop\ y-w+qa_{k}-q'b_{l}-1\!}\right)
&
&
{\mathcal D}^{l-1}_b {\mathcal D}^{k-1}_a P\left({\!\!\!\!x-z+a_{k}-b_{l}-2,\atop\ y-w+qa_{k}-q'b_{l}\!}\right)
&
\\
\ \ \ \cdots & & & & \cdots\ \ \ 
\\
&
{\mathcal D}^{l-1}_b {\mathcal D}^{k-1}_a P\left({\!\!\!\!\!\!\!\!\!\!\!\!\!\!\!\!\!\!x-z+a_{k}-b_{l},\atop\ y-w+qa_{k}-q'b_{l}-2\!}\right)
&
&
{\mathcal D}^{l-1}_b {\mathcal D}^{k-1}_a P\left({\!\!\!\!\!\!\!\!\!\!\!x-z+a_{k}-b_{l}-1,\atop\ y-w+qa_{k}-q'b_{l}-1\!}\right)
&
\\
& & \cdot & &\\  
& & \cdot & &\\  
& & \cdot & &\\  
\end{matrix}
\right]_{1\leq k\leq s,1\leq l\leq t}
\nonumber
\\
\label{tdb}
\end{align}
and
\begin{align}
B'_{x,y}({\bold a},S-T;q)
=
\left[
\begin{matrix}
& & \cdot & &\\  
& & \cdot & &\\  
& & \cdot & &\\  
 &{\mathcal D}^{k-1}_a  U_{l-1}\left({\!\!\!\!\!x+a_{k},\atop\ y+qa_{k}\!}\right)
&
&
{\mathcal D}^{k-1}_a U_{l-1}\left({\!\!\!\!\!x+a_{k}-1,\atop\ y+qa_{k}+1\!}\right)
&
\\
\ \ \ \cdots & & & & \cdots\ \ \ 
\\
&
{\mathcal D}^{k-1}_a U_{l-1}\left({\!\!\!\!\!x+a_{k}+1,\atop\ y+qa_{k}-1\!}\right)
&
&
{\mathcal D}^{k-1}_a U_{l-1}\left({\!\!\!\!\!x+a_{k},\atop\ y+qa_{k}\!}\right)
&
\\
& & \cdot & &\\  
& & \cdot & &\\  
& & \cdot & &\\  
\end{matrix}
\right]_{1\leq k\leq s,1\leq l\leq S-T}
\nonumber
\\
\label{edc}
\end{align}
%


Before giving the formula for the determinant of $\lfloor\lfloor \widetilde{M}'\rfloor\rfloor$, it will be useful to record its entries. They depend on the residue classes modulo 3 of the sequences $x_1^{(R)},\dotsc,w_n^{(R)}$.
The following result follows directly from equations \eqref{eda}--\eqref{edc}, \eqref{ege} and \eqref{egi}.

Throughout this paper $\zeta=e^{2\pi i/3}$.

\begin{prop}
\label{tda}
Assume $3|1-q_i,1-q'_j$, $i=1,\dotsc,m$, $j=1,\dotsc,n$.
Then if $x_1^{(R)}/R\to x_1,\dotsc,w_n^{(R)}/R\to w_n$ and $x_1^{(R)}=\alpha_1\,(\mod 3),\dotsc,w_n^{(R)}=\delta_n\,(\mod 3)$, we have
\begin{align}
&
\lfloor\lfloor\widetilde{M}'\rfloor\rfloor=\frac{1}{2\pi i}
\nonumber
\\[10pt]
&\ \ \ \ \ \ \ \ \ \ 
\times
\left[
\begin{matrix}
\dot{A}_{x_1,y_1,z_1,w_1}^{\alpha_1,\beta_1,\gamma_1,\delta_1}({\bold a}_1,{\bold b}_1;q_1,q'_1) &
\dotsc&
\dot{A}_{x_1,y_1,z_n,w_n}^{\alpha_1,\beta_1,\gamma_n,\delta_n}({\bold a}_1,{\bold b}_n;q_1,q'_n) &
\dot{B}_{x_1,y_1}^{\alpha_1,\beta_1}({\bold a}_1,S-T;q_1)
\\
\\
\dot{A}_{x_2,y_2,z_1,w_1}^{\alpha_2,\beta_2,\gamma_1,\delta_1}({\bold a}_2,{\bold b}_1;q_2,q'_1) &
\dotsc&
\dot{A}_{x_2,y_2,z_n,w_n}^{\alpha_2\beta_2,\gamma_n,\delta_n}({\bold a}_2,{\bold b}_n;q_2,q'_n) &
\dot{B}_{x_2,y_2}^{\alpha_2,\beta_2}({\bold a}_2,S-T;q_2)
\\
\\
\cdot & & \cdot & \cdot\\
\cdot & & \cdot & \cdot\\
\cdot & & \cdot & \cdot\\
\\
\\
\dot{A}_{x_m,y_m,z_1,w_1}^{\alpha_m,\beta_m,\gamma_1,\delta_1}({\bold a}_m,{\bold b}_1;q_m,q'_1) &
\dotsc&
\dot{A}_{x_m,y_m,z_n,w_n}^{\alpha_m,\beta_m,\gamma_n,\delta_n}({\bold a}_m,{\bold b}_n;q_2,q'_n) &
\dot{B}_{x_m,y_m}^{\alpha_m,\beta_m}({\bold a}_m,S-T;q_m)
\end{matrix}  
\right]
\nonumber
\\
\label{edd}
\end{align}
where the blocks $\dot{A}$ and $\dot{B}$ are given by\footnote{ Here and throughout the rest of the paper $\langle f(\zeta)\rangle$ stands for $f(\zeta)-f(\zeta^{-1})$.}
\begin{align}
&
\dot{A}_{x,y,z,w}^{\alpha,\beta,\gamma,\delta}({\bold a},{\bold b};q,q')
=
\nonumber
\\[10pt]
&
\left[
\begin{matrix}
& & \cdot & &\\  
& & \cdot & &\\  
& & \cdot & &\\  
 & \left\langle \frac{\zeta^{-1+\alpha-\beta+\gamma-\delta}{k+l-2\choose l-1}(1-q\zeta)^{k-1}(1-q'\zeta)^{l-1}}{[z-x-\zeta(w-y)]^{k+l-1}R^{k+l-1}}\right\rangle
&
&
\left\langle \frac{\zeta^{-3+\alpha-\beta+\gamma-\delta}{k+l-2\choose l-1}(1-q\zeta)^{k-1}(1-q'\zeta)^{l-1}}{[z-x-\zeta(w-y)]^{k+l-1}R^{k+l-1}}\right\rangle
&
\\
\cdots & & & & \cdots
\\
&
\left\langle \frac{\zeta^{1+\alpha-\beta+\gamma-\delta}{k+l-2\choose l-1}(1-q\zeta)^{k-1}(1-q'\zeta)^{l-1}}{[z-x-\zeta(w-y)]^{k+l-1}R^{k+l-1}}\right\rangle
&
&
\left\langle \frac{\zeta^{-1+\alpha-\beta+\gamma-\delta}{k+l-2\choose l-1}(1-q\zeta)^{k-1}(1-q'\zeta)^{l-1}}{[z-x-\zeta(w-y)]^{k+l-1}R^{k+l-1}}\right\rangle
&
\\
& & \cdot & &\\  
& & \cdot & &\\  
& & \cdot & &\\  
\end{matrix}
\right]_{1\leq k\leq s\atop 1\leq l\leq t}
\nonumber
\\[5pt]
\label{ede}
\end{align}
and
\begin{align}
\dot{B}_{x,y}^{\alpha,\beta}({\bold a},S-T;q)
=
\left[
\begin{matrix}
& & \cdot & &\\  
& & \cdot & &\\  
& & \cdot & &\\  
 &\left\langle \frac{\zeta^{-1+\alpha-\beta}{l-1\choose k-1}(1-q\zeta)^{k-1}}{(x-\zeta y)^{k-l}R^{k-l}}\right\rangle
&
&
\left\langle \frac{\zeta^{-3+\alpha-\beta}{l-1\choose k-1}(1-q\zeta)^{k-1}}{(x-\zeta y)^{k-l}R^{k-l}}\right\rangle
&
\\
\ \ \ \cdots & & & & \cdots\ \ \ 
\\
&
\left\langle \frac{\zeta^{1+\alpha-\beta}{l-1\choose k-1}(1-q\zeta)^{k-1}}{(x-\zeta y)^{k-l}R^{k-l}}\right\rangle
&
&
\left\langle \frac{\zeta^{-1+\alpha-\beta}{l-1\choose k-1}(1-q\zeta)^{k-1}}{(x-\zeta y)^{k-l}R^{k-l}}\right\rangle
&
\\
& & \cdot & &\\  
& & \cdot & &\\  
& & \cdot & &\\  
\end{matrix}
\right]_{1\leq k\leq s\atop 1\leq l\leq S-T}
\nonumber
\\
\label{edf}
\end{align}

\end{prop}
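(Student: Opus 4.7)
The plan is to verify the matrix identity \eqref{edd} entry-by-entry by substituting the asymptotic expansions \eqref{ege} and \eqref{egi} into the explicit expressions \eqref{eda}--\eqref{edc} for the blocks of $\widetilde M'$. These two referenced expansions are precisely the large-$R$ asymptotics of the objects $\mathcal D_b^{l-1}\mathcal D_a^{k-1} P(\cdots)$ and $\mathcal D_a^{k-1} U_{l-1}(\cdots)$ that appear in \eqref{tdb} and \eqref{edc}, so the proof reduces to a bookkeeping task once one identifies which shifted arguments contribute to each of the four entries of every $2\times 2$ tile.

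First I would observe that in each $A'$-block the arguments of $P$ have the form
\[
\xi = (x - z) + (a_k - b_l) + \epsilon_1, \qquad \eta = (y - w) + (qa_k - q'b_l) + \epsilon_2,
\]
with $(\epsilon_1,\epsilon_2)$ running over the four pairs $(-1,-1)$, $(-2,0)$, $(0,-2)$, $(-1,-1)$ that parameterize the $2\times 2$ tile in \eqref{tdb}. Since $x-z$ and $y-w$ scale as $R$ and all remaining summands are bounded, the expansion \eqref{ege} applies; its leading $R^{-(k+l-1)}$ contribution carries a $\zeta$-phase that depends only on $(\xi \bmod 3,\,\eta \bmod 3)$, and these four residue pairs translate into the four exponents $-1,-3,1,-1$ of $\zeta$ displayed in \eqref{ede}. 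A parallel reading of \eqref{egi} inside the $B'$-blocks yields the analogous four phases in \eqref{edf}.

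Next I would extract the shape of the non-phase factor. The leading terms in \eqref{ege} and \eqref{egi} are built from the rational functions $1/(z - x - \zeta(w - y))$ and $1/(x - \zeta y)$, and the divided differences $\mathcal D_a^{k-1},\mathcal D_b^{l-1}$ act on the fixed nodes $a_{ik}$ and $-b_{jl}$ as the operators extracting the $(k-1)$-th and $(l-1)$-th Taylor coefficients. Because after substitution the denominator depends linearly on $a$ with slope $1 - q\zeta$ (from $a_k$ in $\xi$ and from $qa_k$ in $-\zeta\eta$) and linearly on $-b$ with slope $1 - q'\zeta$, a direct computation of the mixed partial derivative of this rational function, divided by $(k-1)!\,(l-1)!$, produces exactly the factor $\binom{k+l-2}{l-1}(1-q\zeta)^{k-1}(1-q'\zeta)^{l-1}/[z-x-\zeta(w-y)]^{k+l-1}$ appearing in \eqref{ede}. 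The $B'$-block calculation is analogous and produces $\binom{l-1}{k-1}(1-q\zeta)^{k-1}/(x-\zeta y)^{k-l}$ from \eqref{egi}.

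The main subtlety, and the reason for the hypotheses $3\mid 1-q_i,\ 3\mid 1-q'_j$, lies in the residue-class dependence. Because the coordinates of the $k$-th triangle inside the $i$-th multihole are $(a_{ik}, q_i a_{ik})$ with integer second coordinate, and the divisibility condition forces $q_i \equiv 1\pmod 3$, both residues $(\xi \bmod 3,\,\eta\bmod 3)$ advance by the same increment $a_{ik}-a_{i1}\pmod 3$ as $k$ varies. Consequently the $\zeta$-phase produced by \eqref{ege} at the entry indexed by $(k,l)$ differs from the one at $(1,1)$ by a factor of $\zeta^{(a_{ik}-a_{i1})(1 - 1)}=1$, so a single phase $\zeta^{\alpha_i-\beta_i+\gamma_j-\delta_j}$ controls the entire block and can be extracted uniformly. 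Without the divisibility hypothesis this uniform extraction fails, the divided differences become oscillating sums in the residues, and the clean block structure displayed in \eqref{ede}--\eqref{edf} collapses. Once this residue bookkeeping is verified, the rest of the proof is a mechanical substitution.
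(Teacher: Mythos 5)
Your proposal is correct and follows essentially the same route as the paper, which derives Proposition \ref{tda} directly by substituting the asymptotic expansions \eqref{ege} and \eqref{egi} into the block descriptions \eqref{eda}--\eqref{edc} and reading off the dominant terms. The extra bookkeeping you supply --- the four shift pairs within each $2\times2$ tile yielding the exponents $-1,-3,1,-1$, and the role of $3\mid 1-q$ in making the phase constant across the divided-difference nodes --- is exactly the content implicitly invoked there (and already built into \eqref{ege} and \eqref{egi}).
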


\begin{theo} Set $N=2\sum_{1\leq i<j\leq m}s_is_j+2\sum_{1\leq i<j\leq n}t_it_j-2\sum_{i=1}^m\sum_{j=1}^ns_it_j$. Then the determinant of the matrix \eqref{edd} in Proposition $\ref{tda}$ is
\label{tdb}
\begin{align}
\det \lfloor\lfloor \widetilde{M}'\rfloor\rfloor = 
\left(\frac{3}{4\pi^2}\right)^S
\prod_{1\leq i<j\leq m}&[(x_i-x_j)^2+(x_i-x_j)(y_i-y_j)+(y_i-y_j)^2]^{s_is_j}
\\
\times
\prod_{1\leq i<j\leq n}&[(z_i-z_j)^2+(z_i-z_j)(w_i-w_j)+(w_i-w_j)^2]^{t_it_j}
\\
\times
\prod_{i=1}^m\prod_{j=1}^n&[(x_i-z_j)^2+(x_i-z_j)(y_i-w_j)+(y_i-w_j)^2]^{-s_it_j}
R^N.
\label{tdd}
\end{align}
\end{theo}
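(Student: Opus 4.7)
The plan is to reduce the $2S\times 2S$ determinant to a product of two $S\times S$ determinants by writing each $2\times 2$ sub-block of \eqref{ede}--\eqref{edf} as a difference of two rank-one matrices with fixed outer-product vectors. Expanding $\langle f(\zeta)\rangle=f(\zeta)-f(\zeta^{-1})$ in \eqref{ede} and using $\zeta^3=1$ (so $\zeta^{-2}=\zeta$, $\zeta^{-3}=1$), each $2\times 2$ block equals
\begin{equation*}
\tfrac{1}{2\pi i}\bigl[F(\zeta)\,\zeta^{e}\,u_1 v_1^T-F(\zeta^{-1})\,\zeta^{-e}\,u_2 v_2^T\bigr],
\end{equation*}
where $u_1=(\zeta^{-1},\zeta)^T$, $v_1=(1,\zeta)^T$, $u_2=(\zeta,\zeta^{-1})^T$, $v_2=(1,\zeta^{-1})^T$ are fixed vectors in $\mathbb C^2$, the exponent $e=e_{ij}$ depends only on the block indices, and $F=F_{ij;kl}$ absorbs all the intra-block dependence (with $B$-blocks decomposing analogously, using $e=\alpha_i-\beta_i$). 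Setting $P=[u_1,u_2]$ and $Q=[v_1,v_2]$, the full matrix factors as $(2\pi i)^{-1}(I_S\otimes P)\,\mathcal G\,(I_S\otimes Q^T)$, where the $2\times 2$ blocks of $\mathcal G$ are the diagonal matrices $\operatorname{diag}(F(\zeta)\zeta^{e},-F(\zeta^{-1})\zeta^{-e})$. Simultaneously permuting rows and columns to group the two intra-block positions separately turns $\mathcal G$ into the block-diagonal matrix $\operatorname{diag}(F_\zeta,-F_{\zeta^{-1}})$ with two $S\times S$ blocks. A direct calculation gives $\det P\cdot\det Q=3$, and the overall scalar prefactor $(2\pi i)^{-2S}\cdot 3^S\cdot(-1)^S$ collapses exactly to $(3/(4\pi^2))^S$, matching the prefactor in \eqref{tdd}; the problem reduces to evaluating $\det F_\zeta\cdot\det F_{\zeta^{-1}}$.

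For each $\zeta$, I extract common row and column factors from $F_\zeta$: from row $(i,k)$ I pull out $\zeta^{\alpha_i-\beta_i}(1-q_i\zeta)^{k-1}R^{1-k}$, from the $A$-column $(j,l)$ the factor $\zeta^{\gamma_j-\delta_j}(1-q'_j\zeta)^{l-1}R^{-l}$, and from the $B$-column $l$ the factor $R^{l-1}$. What remains is an $S\times S$ matrix whose $A$-block entries are $\binom{k+l-2}{l-1}/(\tau_j-\sigma_i)^{k+l-1}$ and whose $B$-block entries are $\binom{l-1}{k-1}\sigma_i^{l-k}$, where $\sigma_i:=x_i-\zeta y_i$ and $\tau_j:=z_j-\zeta w_j$. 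Up to explicit factorial prefactors these entries are the mixed partials $\partial_\sigma^{k-1}\partial_\tau^{l-1}(\tau-\sigma)^{-1}|_{\sigma_i,\tau_j}$ in the $A$-block and $\partial_\sigma^{k-1}\sigma^{l-1}|_{\sigma_i}$ in the $B$-block, exhibiting the reduced matrix as a confluent Cauchy--Vandermonde matrix with poles of orders $s_i$ at the $\sigma_i$, orders $t_j$ at the $\tau_j$, and $S-T$ polynomial columns. The standard product formula then yields
\begin{equation*}
\det(\text{reduced})=\frac{\prod_{i<i'}(\sigma_{i'}-\sigma_i)^{s_is_{i'}}\prod_{j<j'}(\tau_{j'}-\tau_j)^{t_jt_{j'}}}{\prod_{i,j}(\tau_j-\sigma_i)^{s_it_j}}.
\end{equation*}

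To finish, I multiply the two reduced determinants and use $(X-\zeta Y)(X-\zeta^{-1}Y)=X^2+XY+Y^2$ (from $\zeta+\zeta^{-1}=-1$, $\zeta\zeta^{-1}=1$) to fuse each conjugate pair of linear factors into a quadratic form; applied to the $\sigma$--$\sigma$, $\tau$--$\tau$, and $\tau$--$\sigma$ factors, this produces exactly the three quadratic-form products listed in \eqref{tdd}. The total $R$-power follows by tracking the row and column extractions: each of $\det F_\zeta,\det F_{\zeta^{-1}}$ contributes $R^{-\sum_is_i(s_i-1)/2-\sum_jt_j(t_j+1)/2+(S-T)(S-T-1)/2}$, and doubling this and simplifying via $S^2-\sum_is_i^2=2\sum_{i<i'}s_is_{i'}$ and its analogue for $T$ yields precisely $R^N$ with the stated exponent.

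The main obstacle is verifying the Cauchy--Vandermonde product formula for the reduced matrix in the precise normalization at hand, with implicit combinatorial prefactor equal to $1$ (binomials $\binom{k+l-2}{l-1}$ in $A$-blocks, $\binom{l-1}{k-1}$ in $B$-blocks, no extra factorials). The cleanest approach is to start from the classical non-confluent Cauchy--Vandermonde identity $\det[(\tau_j-\sigma_i)^{-1}\,|\,\sigma_i^{l-1}]=\prod_{i<i'}(\sigma_{i'}-\sigma_i)\prod_{j<j'}(\tau_{j'}-\tau_j)/\prod_{i,j}(\tau_j-\sigma_i)$ and pass to the confluent limit via Newton's divided-difference operator, following \cite[Section 5]{ec} where the scaling case $x_i^{(R)}=Rx_i,\dotsc,w_n^{(R)}=Rw_n$ is treated; since all algebraic identities involved are formal in $\sigma_i,\tau_j$, the argument carries over verbatim to the present more general $x_i^{(R)},\dotsc,w_n^{(R)}$.
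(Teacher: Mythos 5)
Your route is genuinely different from the paper's, which disposes of this theorem by citing Proposition 5.3 and Theorem 14.1 of \cite{ec} together with the extension arguments of \cite{ef}; a self-contained derivation is therefore worthwhile, and much of it checks out. The rank-one splitting of each $2\times 2$ block is correct (each of the two matrices $\zeta^{\pm1}$, $\zeta^{\mp3}$, $\zeta^{\mp1}$, $\zeta^{\pm1}$ in the four positions is indeed singular and factors as $u\,v^T$ with the vectors you give), $\det P\cdot\det Q=-(\zeta-\zeta^{-1})^2=3$, the scalar prefactor collapses to $(3/(4\pi^2))^S$, and your $R$-power bookkeeping reproduces $N$ exactly. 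The sign in your non-confluent Cauchy--Vandermonde identity is off in general (already for $m=2$, $n=1$, all multiplicities $1$ the determinant is $(\sigma_1-\sigma_2)/\prod(\tau-\sigma_i)$, not $(\sigma_2-\sigma_1)/\prod(\tau-\sigma_i)$), but this is harmless since the same sign occurs in both conjugate factors $\det F_\zeta$ and $\det F_{\zeta^{-1}}$.

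The genuine gap is in the reassembly after the row and column extractions. You pull $\zeta^{\alpha_i-\beta_i}(1-q_i\zeta)^{k-1}R^{1-k}$ out of row $(i,k)$ and $\zeta^{\gamma_j-\delta_j}(1-q'_j\zeta)^{l-1}R^{-l}$ out of column $(j,l)$, but at the end you reinstate only the $R$-powers. The residue factors do cancel against their conjugates, which is exactly Remark 3, but the slope factors do not: since $(1-q\zeta)(1-q\zeta^{-1})=1+q+q^2$, the product $\det F_\zeta\det F_{\zeta^{-1}}$ acquires the factor $\prod_i(1+q_i+q_i^2)^{s_i(s_i-1)/2}\prod_j(1+q'_j+q_j'^2)^{t_j(t_j-1)/2}$, and your reduced Cauchy--Vandermonde matrix is $q$-free, so nothing downstream can absorb it. This factor equals $1$ only when every $s_i=t_j=1$. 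You can see the problem already for $m=n=1$, $s_1=t_1=2$: the $4\times 4$ determinant of \eqref{edd} computed directly from \eqref{ede} is $\left(\frac{3}{4\pi^2}\right)^2(1+q_1+q_1^2)(1+q'_1+q_1'^2)\,[(x_1-z_1)^2+(x_1-z_1)(y_1-w_1)+(y_1-w_1)^2]^{-4}R^{-8}$ (the binomial part of the $2\times2$ minor is $1\cdot2-1\cdot1=1$ and $(1-q\zeta)(1-q'\zeta)$ factors out of both products), whereas \eqref{tdd} has no $q$-dependence. So either you must exhibit where these slope factors cancel --- I do not see how they can --- or your argument proves a formula that differs from the stated \eqref{tdd} by a nonzero constant whenever some $s_i\geq2$ or $t_j\geq2$; as it stands the write-up does not establish the theorem as stated. (For the use made of Theorem \ref{tdb} in proving Theorem \ref{tba} only the logarithmic gradient in $(x_1,y_1)$ matters, so a constant factor is immaterial there, but that does not repair the proof of this statement.) Finally, the confluent Cauchy--Vandermonde evaluation with combinatorial prefactor exactly $1$ in your binomial normalization --- which you yourself identify as the main obstacle --- is the real content of the determinant evaluation and remains only sketched; the divided-difference passage from the non-confluent identity needs to be carried out, not just invoked.
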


\parindent0pt
{\it Remark $3$.} Note that, quite remarkably, even though the entries of $\lfloor\lfloor\widetilde{M'}\rfloor\rfloor$ depend on the residues $\alpha_1,\dotsc,\delta_n$ of $x_1^{(R)},\dotsc,w_n^{(R)}$ modulo 3, the value of its determinant does not!

\parindent15pt

\medskip
\begin{proof}
In the special case when $x_i^{(R)}=Rx_i$, $y_i^{(R)}=Ry_i$, $i=1,\dotsc,m$ and $z_j^{(R)}=Rz_j$, $w_j^{(R)}=Rw_j$, $j=1,\dotsc,n$, this is a direct consequence of Proposition 5.3 and Theorem 14.1 of \cite{ec}. The arguments presented there extend to the general case as explained in the proof of Proposition~3.1 of \cite{ef}, using Theorem 4.1, Proposition 4.5 and Lemma 3.3 of \cite{ef} (see also Remark 3.5 in \cite{ef}).
\end{proof}

\section{Proof of Theorem \ref{tba} when $3|\alpha-\beta$
        }






In our proof we will make use of the following elementary result.

\begin{lem}
\label{tea}   
Let $A$ and $B$ be two $n\times n$ matrices of the form
\begin{equation}
A:=\left[\begin{matrix}
a_{11} & \dots & a_{1n}\\
\cdot & & \cdot\\[-7pt]
\cdot & & \cdot\\[-7pt]
\cdot & & \cdot\\
a_{k1} & \dots & a_{kn}\\
x_{k+1,1} & \dots & x_{k+1,n}\\
\cdot & & \cdot\\[-7pt]
\cdot & & \cdot\\[-7pt]
\cdot & & \cdot\\
x_{n1} & \dots & x_{nn}
\end{matrix}\right],
\ \ 
B:=\left[\begin{matrix}
b_{11} & \dots & b_{1n}\\
\cdot & & \cdot\\[-7pt]
\cdot & & \cdot\\[-7pt]
\cdot & & \cdot\\
b_{k1} & \dots & b_{kn}\\
x_{k+1,1} & \dots & x_{k+1,n}\\
\cdot & & \cdot\\[-7pt]
\cdot & & \cdot\\[-7pt]
\cdot & & \cdot\\
x_{n1} & \dots & x_{nn}
\end{matrix}\right].
\label{eea}
\end{equation}
Then
\begin{equation}
\det(A)-\det(B)=\sum_{i=1}^k \det
\left[\begin{matrix}
b_{11} & \dots & b_{1n}\\
\cdot & & \cdot\\[-7pt]
\cdot & & \cdot\\[-7pt]
\cdot & & \cdot\\
b_{i-1,1} & \dots & b_{i-1,n}\\
a_{i1}-b_{i1} & \dots & a_{in}-b_{in}\\
a_{i+1,1} & \dots & a_{i+1,n}\\
\cdot & & \cdot\\[-7pt]
\cdot & & \cdot\\[-7pt]
\cdot & & \cdot\\
a_{k1} & \dots & a_{kn}\\
x_{k+1,1} & \dots & x_{k+1,n}\\
\cdot & & \cdot\\[-7pt]
\cdot & & \cdot\\[-7pt]
\cdot & & \cdot\\
x_{n1} & \dots & x_{nn}
\end{matrix}\right].
\label{eeb}
\end{equation}

\end{lem}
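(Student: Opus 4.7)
The plan is to prove Lemma~\ref{tea} by a telescoping argument based on the multilinearity of the determinant in its rows. Since $A$ and $B$ coincide on rows $k+1,\dotsc,n$, the only thing that changes as we pass from $A$ to $B$ are the first $k$ rows, and we can swap them out one at a time.

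More precisely, I would introduce, for $i=0,1,\dotsc,k$, the hybrid matrix $M_i$ whose rows $1,\dotsc,i$ are the corresponding $b$-rows of $B$, whose rows $i+1,\dotsc,k$ are the corresponding $a$-rows of $A$, and whose last $n-k$ rows are the common $x$-rows. Then $M_0=A$ and $M_k=B$, so
\begin{equation*}
\det(A)-\det(B)=\sum_{i=1}^{k}\bigl(\det(M_{i-1})-\det(M_i)\bigr).
\end{equation*}
Now $M_{i-1}$ and $M_i$ differ only in row $i$, where $M_{i-1}$ has the row $(a_{i1},\dotsc,a_{in})$ and $M_i$ has $(b_{i1},\dotsc,b_{in})$. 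Multilinearity of the determinant in row $i$ gives $\det(M_{i-1})-\det(M_i)=\det(C_i)$, where $C_i$ is obtained from $M_{i-1}$ (equivalently, from $M_i$) by replacing row $i$ with the difference $(a_{i1}-b_{i1},\dotsc,a_{in}-b_{in})$. By construction, $C_i$ is exactly the matrix displayed on the right-hand side of \eqref{eeb}, so summing over $i$ yields the claim.

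There is no real obstacle here: the only point that needs a moment of care is bookkeeping of the rows above and below the ``difference row'' in $C_i$, and checking that they indeed come from $B$ (rows $1,\dotsc,i-1$) and from $A$ (rows $i+1,\dotsc,k$), respectively, as required by the statement. Both facts follow immediately from the definition of $M_{i-1}$.
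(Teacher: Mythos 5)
Your proof is correct and is essentially the paper's argument: both rest on telescoping combined with multilinearity of the determinant in a single row, with the paper merely verifying the identity by expanding each summand of the right-hand side while you build it up constructively via the hybrid matrices $M_i$. The row bookkeeping in your $C_i$ matches the matrix in \eqref{eeb} exactly, so nothing is missing.
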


\begin{proof} Use the linearity of the determinant in row $i$ of the $i$th summand on the right hand side. In the telescoping cancellation that results, all but the terms on the left hand side cancel out. \end{proof}

%

\parindent0pt
\medskip
{\it Proof of Theorem $\ref{tba}$ for $3|\al-\be$.} Let
$
\left\{
{\begin{matrix}
\triangleright &  \\[-10.45pt] \triangleleft & \hskip-0.233in\triangleright \\[-10.45pt] \triangleright & \end{matrix}}_{\!\!\!\!\!\bold a_1}^{\!\!\!\!\!q_1},
\dotsc,
{\begin{matrix} \triangleright &  \\[-10.45pt] \triangleleft & \hskip-0.233in\triangleright \\[-10.45pt] \triangleright & \end{matrix}}_{\!\!\!\!\!\bold a_m}^{\!\!\!\!\!q_m},
{\begin{matrix} & \hskip-0.153in \triangleleft \\[-10.45pt]
\triangleleft & \hskip-0.153in\triangleright \\[-10.45pt]
& \hskip-0.153in\triangleleft \end{matrix}}_{\,\bold b_1}^{\,q'_1},
\dotsc,
{\begin{matrix} & \hskip-0.153in \triangleleft \\[-10.45pt]
\triangleleft & \hskip-0.153in\triangleright \\[-10.45pt]
& \hskip-0.153in\triangleleft \end{matrix}}_{\,\bold b_n}^{\,q'_n}
\right\}
$
be the given collection of holes.
By equations
\eqref{ecf}, \eqref{ech} and Proposition \ref{tcb} we have
%
\begin{align}
& 
\frac
{
\omega\left({\begin{matrix}
\triangleright &  \\[-10.45pt] \triangleleft & \hskip-0.233in\triangleright \\[-10.45pt] \triangleright & \end{matrix}}_{\!\!\!\!\!\bold a_1}^{\!\!\!\!\!q_1}(x_1^{(R)}+\alpha,y_1^{(R)}+\beta),
\dotsc,
{\begin{matrix} \triangleright &  \\[-10.45pt] \triangleleft & \hskip-0.233in\triangleright \\[-10.45pt] \triangleright & \end{matrix}}_{\!\!\!\!\!\bold a_m}^{\!\!\!\!\!q_m}(x_m^{(R)},y_m^{(R)}),
{\begin{matrix} & \hskip-0.153in \triangleleft \\[-10.45pt]
\triangleleft & \hskip-0.153in\triangleright \\[-10.45pt]
& \hskip-0.153in\triangleleft \end{matrix}}_{\,\bold b_1}^{\,q'_1}(z_1^{(R)},w_1^{(R)}),
\dotsc,
{\begin{matrix} & \hskip-0.153in \triangleleft \\[-10.45pt]
\triangleleft & \hskip-0.153in\triangleright \\[-10.45pt]
& \hskip-0.153in\triangleleft \end{matrix}}_{\,\bold b_n}^{\,q'_n}(z_n^{(R)},w_n^{(R)})
\right)
}
{
\omega\left({\begin{matrix}
\triangleright &  \\[-10.45pt] \triangleleft & \hskip-0.233in\triangleright \\[-10.45pt] \triangleright & \end{matrix}}_{\!\!\!\!\!\bold a_1}^{\!\!\!\!\!q_1}(x_1^{(R)},y_1^{(R)}),
\dotsc,
{\begin{matrix} \triangleright &  \\[-10.45pt] \triangleleft & \hskip-0.233in\triangleright \\[-10.45pt] \triangleright & \end{matrix}}_{\!\!\!\!\!\bold a_m}^{\!\!\!\!\!q_m}(x_m^{(R)},y_m^{(R)}),
{\begin{matrix} & \hskip-0.153in \triangleleft \\[-10.45pt]
\triangleleft & \hskip-0.153in\triangleright \\[-10.45pt]
& \hskip-0.153in\triangleleft \end{matrix}}_{\,\bold b_1}^{\,q'_1}(z_1^{(R)},w_1^{(R)}),
\dotsc,
{\begin{matrix} & \hskip-0.153in \triangleleft \\[-10.45pt]
\triangleleft & \hskip-0.153in\triangleright \\[-10.45pt]
& \hskip-0.153in\triangleleft \end{matrix}}_{\,\bold b_n}^{\,q'_n}(z_n^{(R)},w_n^{(R)})
\right)
}
-1
\nonumber
\\[10pt]
&
=
\left|\frac{\det\widetilde{M'_{\al,\be}}}{\det\widetilde{M'}}\right|-1.
\label{eec}
\end{align}

\parindent15pt

Since the points $(x_1,y_1),\dotsc,(z_n,w_n)$ are all distinct, the expression on the right hand side in Theorem \ref{tdb} is non-zero. This gives then the asymptotics of $\det \widetilde{M'}$ as $R\to\infty$. Since the only difference between $\det \widetilde{M'}_{\al,\be}$ and $\det \widetilde{M'}$ is that in the former the sequences $x_1^{(R)}$ and $y_1^{(R)}$ are replaced by $x_1^{(R)}+\al$ and $y_1^{(R)}+\be$ (which clearly still satisfy $(x_1^{(R)}+\al)/R\to x_1$, $(y_1^{(R)}+\be)/R\to y_1$), the same expression gives the asymptotics of $\det \widetilde{M'}_{\al,\be}$ as $R\to\infty$. It follows that for $R$ large enough we have
\begin{equation}
\left|\frac{\det\widetilde{M'}_{\al,\be}}{\det\widetilde{M'}}\right|-1
=
\frac{\det\widetilde{M'}_{\al,\be}}{\det\widetilde{M'}}-1
=
\frac{ \det\widetilde{M'}_{\al,\be} -\det\widetilde{M'}}{\det\widetilde{M'}}.
\label{eed}
\end{equation}  

Note that $\widetilde{M'}_{\al,\be}$ and $\widetilde{M'}$ are $2S\times 2S$ matrices with all but the first $2s_1$ rows the same. Applying Lemma \ref{tea} to them we get
\begin{equation}
\det\widetilde{M'}_{\al,\be} -\det\widetilde{M'}
=
\sum_{i=1}^{2s_1} \det M_i
\label{eee}
\end{equation}
where
\begin{equation}
M_i=
\left[\begin{matrix}
\text{Row 1 of $\widetilde{M'}$}\\
\cdot\\[-7pt]
\cdot\\[-7pt]
\cdot\\
\text{Row $i-1$ of $\widetilde{M'}$}\\[5pt]
\text{(Row $i$ of $\widetilde{M'}_{\al,\be}$)-(Row $i$ of $\widetilde{M'})$}\\[5pt]
\text{Row $i+1$ of $\widetilde{M'}_{\al,\be}$}\\
\cdot\\[-7pt]
\cdot\\[-7pt]
\cdot\\
\text{Row $2s_1$ of $\widetilde{M'}_{\al,\be}$}\\[5pt]
\text{Row $2s_1+1$ of $\widetilde{M'}$}\\
\cdot\\[-7pt]
\cdot\\[-7pt]
\cdot\\
\text{Row $2S$ of $\widetilde{M'}$}
\end{matrix}\right].
\label{eef}
\end{equation}
Because $3|\al-\be$, the main terms in the asymptotics of the entries in all rows except the $i$th row in $M_i$ are the same as the asymptotics of the corresponding entries of $\widetilde{M'}$ (this follows for the $P$-part from the $k=l=0$ specialization of equation \eqref{ege}, and for the $U$-part from the $k=0$ specialization of \eqref{egi}).
%

Furthermore, since we are assuming $3|\al-\be$, Proposition \ref{tgc} applies and gives that the asymptotics of each entry in row $i$ of $M_i$ is obtained by applying the operator $\al\frac{\partial}{\partial x_1}+\be\frac{\partial}{\partial y_1}$ to the main term in the corresponding entry of  $\widetilde{M'}$, and multiplying the result by $1/R$. By the well-known formula expressing the derivative of the determinant of an $n\times n$ matrix $A$ as the sum of the determinants of $n$ matrices, the $i$th of which is obtained from $A$ by replacing row $i$ by its derivative, we obtain
\begin{equation}
\sum_{i=1}^{2s_1} \det\lfloor\lfloor M_i\rfloor\rfloor=\frac{1}{R}\left(\al\frac{\partial}{\partial x_1}+\be\frac{\partial}{\partial y_1}\right)\det\lfloor\lfloor\widetilde{M'}\rfloor\rfloor.
\label{eeg}
\end{equation}
Then the fraction
%
\begin{equation*}
\frac{\sum_{i=1}^{2s_1} M_i}{\det\widetilde{M'}}
\end{equation*}
%
(which, by \eqref{eec}--\eqref{eee}, is equal to the left hand side of \eqref{eec}), satisfies
%
\begin{equation}
\frac{\sum_{i=1}^{2s_1} \det M_i}{\det\widetilde{M'}}
\sim
\frac{1}{R}
\frac
{\left(
\al\frac{\partial}{\partial x_1}+\be\frac{\partial}{\partial y_1}\right)\det\lfloor\lfloor\widetilde{M'}\rfloor\rfloor}
{\det\lfloor\lfloor\widetilde{M'}\rfloor\rfloor},\ \ \ R\to\infty.
\label{eegg}
\end{equation}
Using the explicit product formula for $\det\lfloor\lfloor\widetilde{M'}\rfloor\rfloor$ in Theorem \ref{tdb} one gets
\begin{align}
\frac
{\left(
\al\frac{\partial}{\partial x_1}+\be\frac{\partial}{\partial y_1}\right)\det\lfloor\lfloor\widetilde{M'}\rfloor\rfloor}
{\det\lfloor\lfloor\widetilde{M'}\rfloor\rfloor}
&=
\sum_{i=2}^m s_1s_i\frac{\al[2(x_1-x_i)+(y_1-y_i)]+\be[(x_1-x_i)+2(y_1-y_i)]}{(x_1-x_i)^2+(x_1-x_i)(y_1-y_i)+(y_1-y_i)^2}
\nonumber
\\[10pt]
&-\sum_{j=1}^n s_1t_j\frac{\al[2(x_1-z_j)+(y_1-w_j)]+\be[(x_1-z_j)+2(y_1-w_j)]}{(x_1-z_j)^2+(x_1-z_j)(y_1-w_j)+(y_1-w_j)^2}.
\end{align}  
The statement of Theorem \ref{tba} follows then using \eqref{eba} and the readily verified fact that 
if $(a,b)\neq(0,0)$ and $(x,y)$ are the coordinates of two vectors in our 60 degree system of coordinates, the orthogonal projection of $(x,y)$ on the direction of $(a,b)$ has length
\begin{equation*}
\hskip1.95in\proj_{(a,b)} (x,y) =  \frac{1}{2}\frac{a(2x+y)+b(x+2y)}{\sqrt{a^2+ab+b^2}}.\hskip1.95in\square 
\end{equation*}
%
%

\section{The general case}

To cover the general case, we need to prove the statement of Theorem \ref{tba} for $\al-\be=1\,(\mod 3)$ and for $\al-\be=-1\,(\mod 3)$. We present the details in the former case; the latter is similar. So throughout this section we assume that $\al-\be=1\,(\mod 3)$.

The general approach from Section 5 will also work in this case. However, since it was essential there that $\al-\be$ was a multiple of 3 (so that Proposition \ref{tgc} could be applied), we need to make some changes.

Since we cannot apply Proposition \ref{tgc} to the entries in row $i$ of the matrix $M_i$ resulting by applying Lemma \ref{tea} to the difference $\det {\widetilde M'}_{\al,\be}-\det {\widetilde M'}$, we need to go back to the original difference $\det M_{\al,\be}-\det M$ and apply first some convenient determinant-preserving row operations to the matrix $M_{\al,\be}$.

Namely, swap rows $R_{2i-1}$ and $R_{2i}$, and then replace $R_{2i}$ by $-R_{2i}-R_{2i-1}$, for $i=1,\dotsc,2s_1$ in $M_{\al,\be}$. These turn the $2\times2$ submatrix
\begin{equation}
\left[
\begin{matrix}
P\left({\!\!\!\!\!\!\!\!\!\!\!x_1^{(R)}-z_j^{(R)}+a_{1k}-b_{jl}+\al-1,\atop\ y_1^{(R)}-w_j^{(R)}+q_1a_{1k}-q'_jb_{jl}+\be-1\!}\right)
&
P\left({\!\!\!\!x_1^{(R)}-z_j^{(R)}+a_{1k}-b_{jl}+\al-2,\atop\ y_1^{(R)}-w_j^{(R)}+q_1a_{1k}-q'_jb_{jl}+\be\!}\right)
\\[20pt]
P\left({\!\!\!\!\!\!\!\!\!\!\!\!\!\!\!\!\!\!x_1^{(R)}-z_j^{(R)}+a_{1k}-b_{jl}+\al,\atop\ y_1^{(R)}-w_j^{(R)}+q_1a_{1k}-q'_jb_{jl}+\be-2\!}\right)
&
P\left({\!\!\!\!\!\!\!\!\!\!\!x_1^{(R)}-z_j^{(R)}+a_{1k}-b_{jl}+\al-1,\atop\ y_1^{(R)}-w_j^{(R)}+q_1a_{1k}-q'_jb_{jl}+\be-1\!}\right)
\end{matrix}
\right],
\label{efa}
\end{equation}
of $M_{\al,\be}$ (see \eqref{ecl}) into the $2\times2$ matrix with first column
\begin{equation}
\left[
\begin{matrix}
P\left({\!\!\!\!\!\!\!\!\!\!\!x_1^{(R)}-z_j^{(R)}+a_{1k}-b_{jl}+(\al+1)-1,\atop\ y_1^{(R)}-w_j^{(R)}+q_1a_{1k}-q'_jb_{jl}+(\be-1)-1\!}\right)
\\[20pt]
-P\left({\!\!\!\!\!\!\!\!\!\!\!\!\!\!\!x_1^{(R)}-z_j^{(R)}+a_{1k}-b_{jl}+\al-1,\atop\ y_1^{(R)}-w_j^{(R)}+q_1a_{1k}-q'_jb_{jl}+\be-1\!}\right)
-P\left({\!\!\!\!\!\!\!\!\!\!\!\!\!\!\!x_1^{(R)}-z_j^{(R)}+a_{1k}-b_{jl}+(\al+1)-1,\atop\ y_1^{(R)}-w_j^{(R)}+q_1a_{1k}-q'_jb_{jl}+(\be-1)-1\!}\right)
\end{matrix}
\right]
\label{efb}
\end{equation}
and second column
\begin{equation}
\left[
\begin{matrix}
P\left({\!\!\!\!\!\!\!\!\!\!\!\!x_1^{(R)}-z_j^{(R)}+a_{1k}-b_{jl}+\al-1,\atop\ y_1^{(R)}-w_j^{(R)}+q_1a_{1k}-q'_jb_{jl}+\be-1\!}\right)
\\[20pt]
-P\left({\!\!\!\!\!\!\!\!\!\!\!x_1^{(R)}-z_j^{(R)}+a_{1k}-b_{jl}+(\al-1)-1,\atop\ y_1^{(R)}-w_j^{(R)}+q_1a_{1k}-q'_jb_{jl}+(\be+1)-1\!}\right)
-P\left({\!\!\!\!\!\!\!\!\!\!\!x_1^{(R)}-z_j^{(R)}+a_{1k}-b_{jl}+\al-1,\atop\ y_1^{(R)}-w_j^{(R)}+q_1a_{1k}-q'_jb_{jl}+\be-1\!}\right)
\end{matrix}
\right].
\label{efc}
\end{equation}
However, the coupling function $P$ satisfies $P(x,y)+P(x-1,y)+P(x,y-1)=0$ for all $(x,y)\neq(0,0)$ (see \cite{KeLocal}). Therefore, the $2\times2$ block obtained from \eqref{efa} by the above row operations is 
\begin{equation}
\left[
\begin{matrix}
P\left({\!\!\!\!\!\!\!\!\!\!\!x_1^{(R)}-z_j^{(R)}+a_{1k}-b_{jl}+(\al+1)-1,\atop\ y_1^{(R)}-w_j^{(R)}+q_1a_{1k}-q'_jb_{jl}+(\be-1)-1\!}\right)
&
P\left({\!\!\!\!\!x_1^{(R)}-z_j^{(R)}+a_{1k}-b_{jl}+(\al+1)-2,\atop\ y_1^{(R)}-w_j^{(R)}+q_1a_{1k}-q'_jb_{jl}+(\be-1)\!}\right)
\\[20pt]
P\left({\!\!\!\!\!\!\!\!\!\!\!\!\!\!\!\!\!\!\!\!\!\!\!\!\!\!\!\!\!\!\!\!\!x_1^{(R)}-z_j^{(R)}+a_{1k}-b_{jl}+\al,\atop\ y_1^{(R)}-w_j^{(R)}+q_1a_{1k}-q'_jb_{jl}+(\be+1)-2\!}\right)
&
P\left({\!\!\!\!\!\!\!\!\!\!\!\!\!\!\!\!\!\!\!\!\!\!\!x_1^{(R)}-z_j^{(R)}+a_{1k}-b_{jl}+\al-1,\atop\ y_1^{(R)}-w_j^{(R)}+q_1a_{1k}-q'_jb_{jl}+(\be+1)-1\!}\right)
\end{matrix}
\right].
\label{efd}
\end{equation}
Similarly, using the identity of $U_s(a,b)+U_s(a-1,b)+U_s(a,b-1)=0$ (which follows from \eqref{ecb} and $P(x,y)+P(x-1,y)+P(x,y-1)=0$), the $2\times2$ submatrix
\begin{equation}
\left[
\begin{matrix}
U_{l-1}\left({\!\!\!\!\!x_1^{(R)}+a_{1k}+\al,\atop\ y_1^{(R)}+q_1a_{1k}+\be\!}\right)
&
U_{l-1}\left({\!\!\!\!\!x_1^{(R)}+a_{1k}+\al-1,\atop\ y_1^{(R)}+q_1a_{1k}+\be+1\!}\right)
\\[10pt]
U_{l-1}\left({\!\!\!\!\!x_1^{(R)}+a_{1k}+\al+1,\atop\ y_1^{(R)}+q_1a_{1k}+\be-1\!}\right)
&
U_{l-1}\left({\!\!\!\!\!x_1^{(R)}+a_{1k}+\al,\atop\ y_1^{(R)}+q_1a_{1k}+\be\!}\right)
\end{matrix}\right]
\label{efe}
\end{equation}
of $M_{\al,\be}$ (see \eqref{ecp}) is transformed by the above row operations into
\begin{equation}
\left[
\begin{matrix}
U_{l-1}\left({\!\!\!\!\!x_1^{(R)}+a_{1k}+(\al+1),\atop\ y_1^{(R)}+q_1a_{1k}+(\be-1)\!}\right)
&
U_{l-1}\left({\!\!\!\!\!x_1^{(R)}+a_{1k}+(\al+1)-1,\atop\ y_1^{(R)}+q_1a_{1k}+(\be-1)+1\!}\right)
\\[10pt]
U_{l-1}\left({\!\!\!\!\!\!\!\!\!\!\!\!\!\!\!\!\!\!\!x_1^{(R)}+a_{1k}+\al+1,\atop\ y_1^{(R)}+q_1a_{1k}+(\be+1)-1\!}\right)
&
U_{l-1}\left({\!\!\!\!\!\!\!\!\!\!\!\!\!\!\!\!\!x_1^{(R)}+a_{1k}+\al,\atop\ y_1^{(R)}+q_1a_{1k}+(\be+1)\!}\right)
\end{matrix}\right].
\label{eff}
\end{equation}
The great advantage of the transformed matrix --- let us denote it by $N_{\al,\be}$ --- is that its $(2i-1)$st row is the $(2i-1)$st row of $M_{\al+1,\be-1}$, and its $2i$th row is the $2i$th row of $M_{\al,\be+1}$, for $i=1,\dotsc,s_1$
--- and both $(\al+1)-(\be-1)$ and $\al-(\be+1)$ are multiples of 3 (as we are assuming $\al-\be=1\,(\mod 3)$), so we can now apply Proposition \ref{tgc}. 

Since the entries of $N_{\al,\be}$ have a uniform definition along odd-index rows, and also along even-index rows, the row and column operations described in Section 3 can be applied to it. Let  ${\widetilde N'}_{\al,\be}$ be the matrix obtained from $N_{\al,\be}$ by the procedure that transformed $M_{\al,\be}$ into ${\widetilde M'}_{\al,\be}$.

Following the arguments that led to \eqref{eee} (and using also that the row operations that turned $M_{\al,\be}$ into $N_{\al,\be}$ preserve the determinant), we obtain
\begin{align}
\det M_{\al,\be} -\det M
&=
\det N_{\al,\be} -\det M
\nonumber
\\[10pt]
&=
\det\widetilde{N'}_{\al,\be} -\det\widetilde{M'}
\nonumber
\\[10pt]
&=
\sum_{i=1}^{s_1} \left(\det M_{2i-1}^{\al+1,\be-1}+\det M_{2i}^{\al,\be+1}\right),
\label{efg}
\end{align}
where $M_i^{\al,\be}$ is the matrix given by \eqref{eef} (denoted in Section 5 simply by $M_i$).

The arguments that proved \eqref{eegg} give then
\begin{align}
&
\frac{\sum_{i=1}^{s_1} \det M_{2i-1}^{\al+1,\be-1}+\det M_{2i}^{\al,\be+1}}{\det\widetilde{M'}}
\nonumber
\\[10pt]
&\ 
\sim
\frac{1}{R}
\frac
{\sum_{i=1}^{s_1} \left\{\det\left((\al+1)\frac{\partial}{\partial x_1}+(\be-1)\frac{\partial}{\partial y_1}\right)^{(2i-1)}
\lfloor\lfloor\widetilde{M'}\rfloor\rfloor
+\det\left(\al\frac{\partial}{\partial x_1}+(\be+1)\frac{\partial}{\partial y_1}\right)^{(2i)}
\lfloor\lfloor\widetilde{M'}\rfloor\rfloor\right\}}
{\det\lfloor\lfloor\widetilde{M'}\rfloor\rfloor},
\nonumber
\\[10pt]
&\ \ \ \ \ \ \ \ \ \ \ \ \ \ \ \ \ \ \ \ \ \ \ \ \ \ \ \ \ \ \ \ \ \ \ \ \ \ \ \ \ \ \ \ \ \ \ \ \ \ \ \ \ \ \ \ \ \ \ \ \ \ \ \ \ \ \ \ \ \ \ \ \ \ \ \ \ \ \ \ \ \ \ \ \ \ \ \ \ \ \ \ \ \ \ \ \ \ \ \ \ \ \ \ \ \ \ \ \ \ \ \ 
R\to\infty,
\label{efh}
\end{align}
where for an operator $F$ that acts on the entries of a matrix $M$, $F^{(i)} M$ stands for the matrix obtained from $M$ by applying $F$ to all the elements in row $i$.

Using the linearity of the determinant in the rows, we have
\begin{align}
&
\sum_{i=1}^{s_1} \left\{\det\left((\al+1)\frac{\partial}{\partial x_1}+(\be-1)\frac{\partial}{\partial y_1}\right)^{(2i-1)}
\lfloor\lfloor\widetilde{M'}\rfloor\rfloor
+\det\left(\al\frac{\partial}{\partial x_1}+(\be+1)\frac{\partial}{\partial y_1}\right)^{(2i)}
\lfloor\lfloor\widetilde{M'}\rfloor\rfloor\right\}
\nonumber
\\[10pt]
&
=
\sum_{i=1}^{s_1} \left\{\det\left(\al\frac{\partial}{\partial x_1}+\be\frac{\partial}{\partial y_1}\right)^{(2i-1)}
\lfloor\lfloor\widetilde{M'}\rfloor\rfloor
+\det\left(\frac{\partial}{\partial x_1}-\frac{\partial}{\partial y_1}\right)^{(2i-1)}
\lfloor\lfloor\widetilde{M'}\rfloor\rfloor\right.
\nonumber
\\[10pt]
&\ \ \ \ \ \ \ \ \ \ \ \ \ \ \ \ \ \ \ \ \ \ \ \ \ \ \ \ \ \ \ \ \ \ \ \ \ \ \ \ \ \ \,
+
\left. \det\left(\al\frac{\partial}{\partial x_1}+\be\frac{\partial}{\partial y_1}\right)^{(2i)}
\lfloor\lfloor\widetilde{M'}\rfloor\rfloor
+\det\left(\frac{\partial}{\partial y_1}\right)^{(2i)}
\lfloor\lfloor\widetilde{M'}\rfloor\rfloor\right\}
\nonumber
\\[10pt]
&
=
\left(\al\frac{\partial}{\partial x_1}+\be\frac{\partial}{\partial y_1}\right)
\det\lfloor\lfloor\widetilde{M'}\rfloor\rfloor
+
\sum_{i=1}^{s_1} \left\{\det\left(\frac{\partial}{\partial x_1}-\frac{\partial}{\partial y_1}\right)^{(2i-1)}
\lfloor\lfloor\widetilde{M'}\rfloor\rfloor
+\det\left(\frac{\partial}{\partial y_1}\right)^{(2i)}
\lfloor\lfloor\widetilde{M'}\rfloor\rfloor\right\}.
\label{efi}
\end{align}
We claim that
\begin{equation}
\det\left(\frac{\partial}{\partial x_1}-\frac{\partial}{\partial y_1}\right)^{(2i-1)}
\lfloor\lfloor\widetilde{M'}\rfloor\rfloor
+\det\left(\frac{\partial}{\partial y_1}\right)^{(2i)}
\lfloor\lfloor\widetilde{M'}\rfloor\rfloor
=0,
\label{efj}
\end{equation}
for $i=1,\dotsc,s_1$. Once we prove this, using \eqref{efg}--\eqref{efi}, the statement of Theorem \ref{tba} follows by the arguments of the case $3|\al-\be$.

For ease of reference, let
\begin{align}
C_1&=\left(\frac{\partial}{\partial x_1}-\frac{\partial}{\partial y_1}\right)^{(2i-1)}\lfloor\lfloor\widetilde{M'}\rfloor\rfloor
\label{efk}
\\[10pt]
C_2&=\left(\frac{\partial}{\partial y_1}\right)^{(2i)}\lfloor\lfloor\widetilde{M'}\rfloor\rfloor.
\label{efl}
\end{align}
To complete the proof, we need to show that $\det C_1 +\det C_2 =0$.

Recall the structure of the matrix $\lfloor\lfloor\widetilde{M'}\rfloor\rfloor$, whose entries are given by equations \eqref{edd}--\eqref{edf}: It's ``$A$''-part is an $m\times n$ block matrix, with blocks given by \eqref{ede}, and its ``$B$''-part is an $m\times1$ block matrix with blocks given by \eqref{edf}.

By definition, along all rows except row $2i-1$, $C_1$ is the same as $\lfloor\lfloor\widetilde{M'}\rfloor\rfloor$, whose entries are given by equations \eqref{edd}--\eqref{edf}; its $(2i-1)$st row is obtained by applying $\frac{\partial}{\partial x_1}-\frac{\partial}{\partial y_1}$ to row $2i-1$ of $\lfloor\lfloor\widetilde{M'}\rfloor\rfloor$. Similarly, along all rows except row $2i$, $C_2$ is the same as $\lfloor\lfloor\widetilde{M'}\rfloor\rfloor$, while its $2i$th row is obtained by applying $\frac{\partial}{\partial y_1}$ to row $2i$ of $\lfloor\lfloor\widetilde{M'}\rfloor\rfloor$.

The expressions for the entries in the exceptional rows of $C_1$ and $C_2$ are readily obtained using formulas \eqref{egee} (with $u=x_1-z_j$, $v=y_1-w_j$ for the entries in the positions occupied by block matrix $(1,j)$ in the $A$-part) and \eqref{egii}.
They imply that corresponding $2\times2$ blocks of $C_1$ and $C_2$ along rows $2i-1$ and $2i$ are of the form
\begin{equation}
\left[\begin{matrix}  
\left\langle(1+\z)\z^{e}g(\z)\right\rangle & \left\langle(1+\z)\z^{e-2}g(\z)\right\rangle\\[15pt]
\left\langle\z^{e+2}f(\z)\right\rangle & \left\langle\z^{e}f(\z)\right\rangle
\end{matrix}\right]    
\ \ \ \text{\rm and}\ \ \ 
\left[\begin{matrix}  
\left\langle\z^{e}f(\z)\right\rangle & \left\langle\z^{e-2}f(\z)\right\rangle\\[15pt]
\left\langle(-\z)\z^{e+2}g(\z)\right\rangle & \left\langle(-\z)\z^{e}g(\z)\right\rangle
\end{matrix}\right],
\label{efm}
\end{equation}
respectively, with the same exponents $e$ and the same functions $f$ and $g$ (because the only reflection of $\al$ and $\be$ on the expressions on the right hand sides of equations \eqref{egee} and \eqref{egii} is in the multiplicative factors $\al-\z\be$, shown in parentheses in the matrix entries above).

Since $1+\z=-\z^{-1}$ and $\z^3=1$, the blocks \eqref{efm} can be rewritten as
\begin{equation}
\left[\begin{matrix}  
-\left\langle\z^{e-1}g(\z)\right\rangle & -\left\langle\z^{e}g(\z)\right\rangle\\[15pt]
\left\langle\z^{e-1}f(\z)\right\rangle & \left\langle\z^{e}f(\z)\right\rangle
\end{matrix}\right]    
\ \ \ \text{\rm and}\ \ \ 
\left[\begin{matrix}  
\left\langle\z^{e}f(\z)\right\rangle & \left\langle\z^{e+1}f(\z)\right\rangle\\[15pt]
-\left\langle\z^{e}g(\z)\right\rangle & -\left\langle\z^{e+1}g(\z)\right\rangle
\end{matrix}\right].
\label{efn}
\end{equation}
Let $C'_1$ be the matrix obtained from $C_1$ by multiplying row $2i-1$ by $-1$, and then swapping rows $2i-1$ and $2i$. Then $\det C_1=\det C'_1$, and corresponding blocks of $C'_1$ and $C_2$ along rows $2i-1$ and $2i$ are of the form
\begin{equation}
\left[\begin{matrix}  
\left\langle\z^{e-1}f(\z)\right\rangle & \left\langle\z^{e}f(\z)\right\rangle\\[15pt]
\left\langle\z^{e-1}g(\z)\right\rangle & \left\langle\z^{e}g(\z)\right\rangle
\end{matrix}\right]    
\ \ \ \text{\rm and}\ \ \ 
\left[\begin{matrix}  
\left\langle\z^{e}f(\z)\right\rangle & \left\langle\z^{e+1}f(\z)\right\rangle\\[15pt]
-\left\langle\z^{e}g(\z)\right\rangle & -\left\langle\z^{e+1}g(\z)\right\rangle
\end{matrix}\right].
\label{efo}
\end{equation}
Now regard the set of $2S$ columns of $C'_1$ as consisting of $S$ strips of pairs of consecutive columns, and in each such strip perform two column operations on $C'_1$: (1) swap the two columns in the strip, and (2) replace the second column in the strip by its negative minus the first column. Using that $-\z^e-\z^{e-1}=\z^{e+1}$, one readily sees that the effect on the $2\times2$ block of $C'_1$ on the left in \eqref{efo} is to turn it into 
\begin{equation}
\left[\begin{matrix}  
\left\langle\z^{e}f(\z)\right\rangle & \left\langle\z^{e+1}f(\z)\right\rangle\\[15pt]
\left\langle\z^{e}g(\z)\right\rangle & \left\langle\z^{e+1}g(\z)\right\rangle
\end{matrix}\right],  
\label{efp}
\end{equation}
which is precisely the same as the $2\times2$ block of $C_2$ on the right in \eqref{efo}, with the second row multiplied by $-1$.

However, the resulting matrix $C''_1$ --- which has $\det C''_1=\det C'_1$, as each of the column operations (1) and (2) has the effect of multiplying the determinant by $(-1)$ --- no longer agrees with $C_1$ outside rows $2i-1$ and $2i$ (as $C_1$ and $C'_1$ do), because the column operations we applied to arrive at it changed the entries there. 

This can nevertheless be easily remedied, using row operations. Indeed, each $2\times2$ block outside rows $2i-1$ and $2i$ in $C'_1$ has form
\begin{equation}
\left[\begin{matrix}  
\left\langle\z^{e}f(\z)\right\rangle & \left\langle\z^{e-2}f(\z)\right\rangle\\[15pt]
\left\langle\z^{e+2}f(\z)\right\rangle & \left\langle\z^{e}f(\z)\right\rangle
\end{matrix}\right] 
\label{efq}
\end{equation}
for some integer $e$ and function $f$. After applying column operations (1) and (2) above it becomes
\begin{equation}
\left[\begin{matrix}  
\left\langle\z^{e-2}f(\z)\right\rangle & \left\langle\z^{e-1}f(\z)\right\rangle\\[15pt]
\left\langle\z^{e}f(\z)\right\rangle & \left\langle\z^{e+1}f(\z)\right\rangle
\end{matrix}\right].
\label{efr}
\end{equation}
In order to restore it to form \eqref{efq}, swap its rows and then replace the second row by its negative minus the first row. Since the column operations (1) and (2) were applied to all $S$ pairs of consecutive columns, these row operations restore all the entries of $C''_1$ outside rows $2i-1$ and $2i$ to their original form.

Therefore, if we denote by $C'''_1$ the resulting matrix, we obtain that $C'''_1$ agrees with $C_2$ along all rows except row $2i$, where its entries are the negatives of the corresponding entries of $C_2$. Therefore $\det C'''_1=-\det C_2$, and since $\det C'''_1=\det C''_2$, we obtain $\det C_1+\det C_2=0$. This completes the proof of Theorem \ref{tba}.
%
%
%

\section{The asymptotics of the entries of the $i$th row of $M_i$ when $3|\al-\be$}

In this section we prove that when $\al-\be$ is a multiple of 3, the $R\to\infty$ asymptotics of each entry in the $i$th row of the matrix $M_i$ defined by \eqref{eef} is obtained by applying the operator $\al\frac{\partial}{\partial x_1}+\be\frac{\partial}{\partial y_1}$ to the corresponding entry of the $i$th row of $\widetilde{M'}$, and multiplying the result by $1/R$.

Since the entries of $\widetilde{M'}$ are defined differently in columns $1,\dotsc,2T$ compared to columns $2T+1,\dotsc,2S$ (see \eqref{eda}), and since the same clearly holds for $\widetilde{M'}_{\al,\be}$ (as $\widetilde{M'}_{\al,\be}$ is obtained from the matrix in \eqref{eda} by simply replacing $x_1^{(R)}$ by $x_1^{(R)}+\al$ and $y_1^{(R)}$ by $y_1^{(R)}+\be)$, the assertion of the previous paragraph needs to be checked separately for the first $2T$ entries in row $i$ of $M_i$ and for its last $2S-2T$ entries. We do this in the next two lemmas.

Recall that ${\mathcal D}$ denotes Newton's divided difference operator, whose powers are defined inductively by 
${\mathcal D}^0 f=f$ and ${\mathcal D}^r f(c_j)=({\mathcal D}^{r-1} f(c_{j+1})-{\mathcal D}^{r-1} f(c_j))/(c_{j+r}-c_j)$, $r\geq1$.
We will need the following result on the asymptotics of the coupling function $P$ when acted on in the
indicated way by powers of ${\mathcal D}$. 

Given a two-variable function $G$, denote by $\Delta_{\al,\be}$ the difference operator
\begin{equation*}
\Delta_{\al,\be}\,G(x,y)=G(x+\al,y+\be)-G(x,y).
\end{equation*}  
\begin{lem}        
\label{tga}
Let $r_n$ and $s_n$ be integers so that $\lim_{n\to\infty}r_n/n=u$, $\lim_{n\to\infty}s_n/n=v$, and $(u,v)\neq(0,0)$. Then for any integers $k,l\geq0$, any rational numbers $q,q'$ with $3|1-q,1-q'$ and any integers $c,d,\al,\be$ with $3|\al-\be$ we have as $n\to\infty$ that\footnote{ For a function $f(n)$, $\lfloor\lfloor f(n)\rfloor\rfloor$ stands for the dominant part of $f(n)$ as $\to\infty$.}
\begin{align}
&
\lfloor\lfloor\Delta_{\al,\be}\,\left.{\mathcal D}_y^l\left\{{\mathcal D}_x^k\, P(r_n+x+y+c,s_n+qx+q'y+d))|_{x=a_1}\right\}\right|_{y=b_1}\rfloor\rfloor=
\nonumber
\\[10pt]
&\ \ \ \ \ \ \ \ \ \ \ \ \ \ \ \ \  
\frac{1}{n}\left(\al\frac{\partial}{\partial u}+\be\frac{\partial}{\partial v}\right)
\lfloor\lfloor\left.{\mathcal D}_y^l\left\{{\mathcal D}_x^k\, P(r_n+x+y+c,s_n+qx+q'y+d))|_{x=a_1}\right\}\right|_{y=b_1}\rfloor\rfloor,
\label{ega}
\end{align}
where ${\mathcal D}^k_x$ acts with respect to a fixed integer sequence $a_1,a_2,\dotsc$, and
${\mathcal D}^l_y$ acts with respect to an integer sequence $b_1,b_2,\dotsc$ satisfying $qb_j\in\Z$ for all
$j\geq1$. 

\end{lem}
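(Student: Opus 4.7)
The plan is to combine the explicit form of the leading asymptotics of divided differences of $P$ (from the analysis underlying Proposition \ref{tda}) with a first-order Taylor expansion, using the residue hypothesis $3\mid\al-\be$ to keep the discrete $\zeta$-factor invariant under the shift by $(\al,\be)$.

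First, set
\begin{equation*}
D_n:=\left.{\mathcal D}_y^l\left\{{\mathcal D}_x^k\, P(r_n+x+y+c,s_n+qx+q'y+d)\bigm|_{x=a_1}\right\}\right|_{y=b_1}.
\end{equation*}
The asymptotic analysis that produces formulas \eqref{ede}-\eqref{edf} (itself resting on the integral representation \eqref{eca}, the expansion \eqref{ecb}, and the arguments of \cite[Prop.~4.1]{ec} and \cite[Prop.~4.5]{ef}) gives, as $n\to\infty$,
\begin{equation*}
\lfloor\lfloor D_n\rfloor\rfloor=\frac{1}{n^{k+l+1}}\,\Phi(r_n,s_n;c,d)\,H\!\left(\frac{r_n}{n},\frac{s_n}{n};q,q'\right),
\end{equation*}
where $H$ is a smooth rational function on $\mathbb{R}^2\setminus\{(0,0)\}$ (explicitly the factor in \eqref{ede} involving $[z-x-\zeta(w-y)]^{-(k+l-1)}$ and the binomials, read off with the natural identification of variables), and $\Phi$ is a $\zeta$-factor of the schematic form $\langle\zeta^e\,(\text{a factor independent of }u,v)\rangle$ whose exponent $e$ depends on $r_n,s_n$ only through $(r_n+c)-(s_n+d)\bmod 3$. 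This last structural feature is visible directly in the $\zeta^{\alpha-\beta+\gamma-\delta}$ term of \eqref{ede}.

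Applying $\Delta_{\al,\be}$ shifts $r_n\mapsto r_n+\al$ and $s_n\mapsto s_n+\be$, so it modifies the exponent of $\zeta$ in $\Phi$ by $\al-\be$; the hypothesis $3\mid\al-\be$ together with $\zeta^3=1$ makes this modification trivial, leaving $\Phi$ unchanged. Only the smooth factor $H$ is moved: its arguments become $(u+\al/n,\,v+\be/n)$ with $u=r_n/n$, $v=s_n/n$. A first-order Taylor expansion, uniform on compact subsets of $\mathbb{R}^2\setminus\{(0,0)\}$, gives
\begin{equation*}
H\!\left(u+\tfrac{\al}{n},\,v+\tfrac{\be}{n}\right)-H(u,v)=\frac{1}{n}\left(\al\frac{\partial}{\partial u}+\be\frac{\partial}{\partial v}\right)H(u,v)+O(n^{-2}).
\end{equation*}
Multiplying by $\Phi/n^{k+l+1}$, the dominant part of $\Delta_{\al,\be}D_n$ equals $\tfrac{1}{n}(\al\partial_u+\be\partial_v)$ applied to $\lfloor\lfloor D_n\rfloor\rfloor$, which is exactly \eqref{ega}.

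The main obstacle is the subleading control: the $O(n^{-(k+l+2)})$ remainder in the expansion of $D_n$ is, a priori, of the same order as the expected output $\tfrac{1}{n}(\al\partial_u+\be\partial_v)\lfloor\lfloor D_n\rfloor\rfloor$, so we must verify that this remainder also respects the mod-$3$ invariance of its $\zeta$-factor under the $(\al,\be)$-shift and therefore contributes only $o(n^{-(k+l+2)})$ to $\Delta_{\al,\be}D_n$. This comes down to pushing the asymptotic expansion of $P$ via \eqref{eca}, \eqref{ecb} and \eqref{ecbb} one order further and observing that the $\zeta^{\text{residue-difference}}$ structure persists at the next order; the saddle-point-type argument of \cite[Prop.~4.1]{ec} and \cite[Prop.~4.5]{ef} then carries through in our divided-difference setting exactly as in the proof of Theorem \ref{tdb}.
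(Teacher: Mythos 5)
You have correctly identified the mechanism that makes the lemma true: because $3\mid\al-\be$ and $\z^3=1$, the oscillatory $\z$-factor in the leading asymptotics is unchanged by the shift, and what remains is a smooth amplitude evaluated at $(u+\al/n,v+\be/n)$, to which a first-order Taylor expansion applies. However, the step you flag at the end as "the main obstacle" is not a loose end --- it is the entire content of the lemma, and your proposal does not close it. The quantity $\Delta_{\al,\be}D_n$ that you want to estimate is of order $n^{-(k+l+2)}$, i.e.\ exactly the order of the error term in the one-term expansion $D_n=\lfloor\lfloor D_n\rfloor\rfloor+O(n^{-(k+l+2)})$. Differencing an asymptotic expansion term by term is not legitimate at this order: you would need to prove that the full two-term expansion of $D_n$ has the product structure $\z$-factor times smooth amplitude \emph{at the second order as well}, with the $\z$-factor again invariant mod $3$, and that the $O(n^{-(k+l+3)})$ remainder is uniform under the integer shift. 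Asserting that this "carries through" is precisely assuming a stronger version of the statement to be proved.

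The paper's proof avoids this subtraction of two asymptotic expansions altogether. Instead of expanding $D_n$ and $\Delta_{\al,\be}D_n$ separately, it represents the difference $\Delta_{\al,\be}D_n$ itself as a single contour integral (equation \eqref{egb}): the integrand acquires the extra factor $(-1-t)^{-c}t^{-d}\bigl[(-1-t)^{-\al}t^{-\be}-1\bigr]$, and the hypothesis $3\mid\al-\be$ forces $\z^{\al-\be}-1=0$, so this factor vanishes to first order at the critical point $t=\z$ with linear coefficient $\z^{c-d}(\al\z-\be\z^{-1})$ (equation \eqref{egc}). Applying the asymptotic lemma \cite[Prop.~4.4]{ef} \emph{once} to this difference integral then yields the dominant term \eqref{egdd} directly, with a genuinely smaller error, and the identity \eqref{egee} is a finite algebraic check. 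The case analysis via the symmetries $P(u,v)=P(v,u)=P(-u-v-1,u)$ (needed because the integral representation \eqref{eca} only holds for $x\le-1$) is also required and is absent from your write-up. To repair your argument you would either have to carry out the two-term expansion with uniform, shift-stable error control, or --- more efficiently --- adopt the paper's device of putting the difference operator inside the integral before extracting asymptotics.
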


\begin{proof}
Suppose first that $u<0$. Then by the arguments in the proof of Theorem 4.1 in \cite{ef} we obtain
\begin{align}
&
\Delta_{\al,\be}\,\left.{\mathcal D}_y^l\left\{{\mathcal D}_x^k\, P(r_n+x+y+c,s_n+qx+q'y+d))|_{x=a_1}\right\}\right|_{y=b_1}
\nonumber
\\
&
=
\frac{1}{2\pi i}
\left\langle
  \int_\z^{-1}(-1-t)^{-r_n-c}t^{-s_n-d}[(-1-t)^{-\al}t^{-\be}-1](t-\z)^{k+l}
\right.
\nonumber
\\
&\ \ \ \ \ \ \ \ \ \ \ \ \ \ \
\left.\left.
\times
\left\{
    \frac{1}{k!\,l!}(\z-q\z^{-1})^{k}(\z-q'\z^{-1})^{l}(t-\z)^{k+l}
+c_{k+l+1}(t-\z)^{k+l+1}+\cdots
  \right\}dt
\right\rangle.\right.
\label{egb}
\end{align}
This integral differs from the one in the proof of Theorem 4.1 in \cite{ef} only by the presence of the extra factor $(-1-t)^{-c}t^{-d}[(-1-t)^{-\al}t^{-\be}-1]$ in the integrand\footnote{ The effect of having the factor $(\z-q\z^{-1})^{k}(\z-q'\z^{-1})^{l}$ in \eqref{egb} instead of the corresponding $(\z-q\z^{-1})^{k+l}$ in \cite{ef} is automatically taken into account by the form of the expansion in the curly braces in \eqref{egb}.}. The way this changes the asymptotics is determined by the form of the series expansion of this extra factor around $t=\z$ (see \cite[\S 6.4]{Olver}). Since $3|\al-\be$, this series expansion is
\begin{align}
  (-1-t)^{-c}t^{-d}[(-1-t)^{-\al}t^{-\be}-1]&=(\zeta^{c-d}+c_1(t-\zeta)+\cdots)
\nonumber
  \\[10pt]
&\ \ \ \ \ \times
  (\zeta^{\al-\be}-1+(\al\zeta-\be/\z)(t-\z)+c'_2(t-\zeta)^2+\cdots)
\nonumber
\\[10pt]
&=\zeta^{c-d}(\al\zeta-\be/\z)(t-\z)+\cdots.
\label{egc}
\end{align}
By \cite[Proposition 4.4]{ef}, it follows that the integral in \eqref{egb} is 
\begin{equation}
\frac{(k+l+1)!}{k!\,l!}\frac{\z^{r_n-s_n+c-d}(\al\z-\be\z^{-1})(\z-q\z^{-1})^k(\z-q'\z^{-1})^l}{(-r_n\z+s_n\z^{-1})^{k+l+2}}
+O\left(\frac{1}{n^{k+l+3}}\right).
\label{egd}
\end{equation}
Therefore, by \eqref{egb} we get
\begin{align}
&
\Delta_{\al,\be}\,\left.{\mathcal D}_y^l\left\{{\mathcal D}_x^k\, P(r_n+x+y+c,s_n+qx+q'y+d))|_{x=a_1}\right\}\right|_{y=b_1}
\nonumber
\\[10pt]
&
=
\frac{1}{2\pi i}\frac{(k+l+1)!}{k!\,l!}
\left\langle
\frac{\z^{r_n-s_n+c-d}(\al\z-\be\z^{-1})(\z-q\z^{-1})^k(\z-q'\z^{-1})^l}{(-r_n\z+s_n\z^{-1})^{k+l+2}}
\right\rangle+O\left(\frac{1}{n^{k+l+2}}\right)
\nonumber
\\[10pt]
&
\sim
\frac{1}{2\pi i}\frac{(k+l+1)!}{k!\,l!}
\left\langle
\frac{\z^{r_n-s_n+c-d-1}(\al-\be\z)(1-q\z)^k(1-q')^l}{(-u+v\z)^{k+l+2}}
\right\rangle
\frac{1}{n^{k+l+2}}.
\label{egdd}
\end{align}
On the other hand, by a straightforward extension of Theorem 4.1 of \cite{ef} (see \cite[\S14, Proposition 7.1']{ec}) we have
\begin{align}
&
\left.{\mathcal D}^l_y\left\{{\mathcal D}^k_x \,P(r_n+x+y+c,s_n+qx+q'y+d)|_{x=a_1}\right\}\right|_{y=b_1}=
\nonumber
\\[10pt]
&\ \ \ \ \ \ \ \ \ \ \ \ \ \ \ \ \ \ \ \ 
\frac{1}{2\pi i}{k+l\choose k}
\left\langle\frac{\zeta^{r_n-s_n+c-d-1}(1-q\z)^{k}(1-q'\z)^{l}}{(-r_n+s_n\z)^{k+l+1}}\right\rangle
+O\left(\frac{1}{n^{k+l+2}}\right).
\nonumber
\\[10pt]
&\ \ \ \ \ \ \ \ \ \ \ \ \ \ \ \ \ \ \ \ 
\sim
\frac{1}{2\pi i}{k+l\choose k}
\left\langle\frac{\zeta^{r_n-s_n+c-d-1}(1-q\z)^{k}(1-q'\z)^{l}}{(-u+v\z)^{k+l+1}}\right\rangle
\frac{1}{n^{k+l+1}}.
\label{ege}
\end{align}

We need to check that
the main term in \eqref{egdd} 
is equal to $1/n$ times the result of applying the operator $\al\frac{\partial}{\partial u}+\be\frac{\partial}{\partial v}$ to the main term in \eqref{ege}; i.e., that
\begin{align}
&
\frac{1}{n}
\left(\al\frac{\partial}{\partial u}+\be\frac{\partial}{\partial v}\right)
\frac{1}{2\pi i}{k+l\choose k}
\left\langle\frac{\zeta^{r_n-s_n+c-d-1}(1-q\z)^{k}(1-q'\z)^{l}}{(-u+v\z)^{k+l+1}}\right\rangle
\frac{1}{n^{k+l+1}}
\nonumber
\\[10pt]
&\ \ \ \ \ 
=
\frac{1}{2\pi i}\frac{(k+l+1)!}{k!\,l!}
\left\langle
\frac{\z^{r_n-s_n+c-d-1}(\al-\be\z)(1-q\z)^k(1-q'\z)^l}{(-u+v\z)^{k+l+2}}
\right\rangle
\frac{1}{n^{k+l+2}}.
\label{egee}
\end{align}

This is readily checked.

Since $(u,v)\neq(0,0)$, at least one of $u<0$, $v<0$, and $-u-v<0$ is true. The 
symmetries $P(u,v)=P(-u-v-1,u)$ and $P(u,v)=P(v,u)$
of the coupling function allow one to use the same arguments that proved the case $u<0$ to deduce the 
other two cases (see the proof of Proposition 7.1 in \cite{ec} for details).
\end{proof}

\begin{lem}        
\label{tgb}
Let $x_1^{(n)},y_1^{(n)}\in\Z$ so that $\lim_{n\to\infty}x_1^{(n)}/n=x_1$ and $\lim_{n\to\infty}y_1^{(n)}/n=y_1$. Then for any integer $k\geq0$, any rational number $q$ with $3|1-q$ and any integers $c,d,\al,\be$ with $3|\al-\be$ we have as $n\to\infty$ that
\begin{align}
&
\lfloor\lfloor\Delta_{\al.\be}\,{\mathcal D}_x^k\, U_l(x_1^{(n)}+x+c,y_1^{(n)}+qx+d))|_{x=a_1}\rfloor\rfloor=
\nonumber
\\[10pt]
&\ \ \ \ \ \ \ \ \ \ \ \ \ \ \ \ \  
\frac{1}{n}\left(\al\frac{\partial}{\partial x_1}+\be\frac{\partial}{\partial y_1}\right)
\lfloor\lfloor{\mathcal D}_x^k\, U_l(x_1^{(n)}+x+c,y_1^{(n)}+qx+d))|_{x=a_1}\rfloor\rfloor,
\label{egf}
\end{align}
where ${\mathcal D}^k_x$ acts with respect to a fixed integer sequence $a_1,a_2,\dotsc$.


\end{lem}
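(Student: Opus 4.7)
The plan is to mirror the proof of Lemma \ref{tga} step by step, with the coupling function $P$ replaced by the asymptotic coefficient $U_l$, and exploiting the hypothesis $3|\alpha-\beta$ in exactly the same way.

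The starting point is an integral representation for ${\mathcal D}_x^k U_l(x_1^{(n)}+x+c,\,y_1^{(n)}+qx+d)\big|_{x=a_1}$. Since $U_l(a,b)$ is defined via the large-$r$ expansion of $P(-3r-1+a,-1+b)$ in \eqref{ecb}, and $P$ has the integral representation \eqref{eca}, extracting the $(3r)^{-l-1}$ coefficient and then performing the $k$ divided differences in $x$ produces an integrand of the form
\[
\frac{1}{2\pi i}\int_\zeta^{-1}(-1-t)^{-x_1^{(n)}-c}\,t^{-y_1^{(n)}-d}\,(t-\zeta)^{k}\,\Phi_{k,l,q}(t)\,dt,
\]
with $\Phi_{k,l,q}$ analytic near $t=\zeta$. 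This construction is implicit in the proofs of \cite[Proposition 4.1]{ec} and \cite[Theorem 4.1]{ef}, which produce the $U$-analog of formula \eqref{ege}.

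Next I would apply the displacement operator $\Delta_{\alpha,\beta}$. This introduces the extra factor $(-1-t)^{-\alpha}t^{-\beta}-1$ in the integrand, exactly as in equation \eqref{egb}. Under $3|\alpha-\beta$, one has $\zeta^{\alpha-\beta}=1$, so the Taylor expansion of this factor around $t=\zeta$ has vanishing constant term and starts with $(\alpha\zeta-\beta\zeta^{-1})(t-\zeta)+O((t-\zeta)^2)$, exactly as in \eqref{egc}. Applying \cite[Proposition 4.4]{ef} then extracts the dominant asymptotic, which carries one extra power of $1/n$ compared with the un-displaced case.

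The final step is a direct check that the leading term so obtained equals $\frac{1}{n}(\alpha\partial_{x_1}+\beta\partial_{y_1})$ applied to the dominant part of ${\mathcal D}_x^k U_l$; this reduces to the same bookkeeping as in \eqref{egee} for the coupling function. The hardest part, I expect, will be not the computation itself but the organization of the integral representation for $U_l$: the cleanest route is to apply $\Delta_{\alpha,\beta}$ directly to the integrand representing $P$, perform the coefficient extraction producing $U_l$ only at the very end, and observe that this extraction commutes with the $\Delta_{\alpha,\beta}$ manipulation. Once this point is settled, the proof reduces to precisely the computation already performed for Lemma \ref{tga}.
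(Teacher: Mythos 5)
There is a genuine gap, and it is in the central step. Your plan rests on an integral representation for ${\mathcal D}_x^k\, U_l(x_1^{(n)}+x+c,\,y_1^{(n)}+qx+d)\big|_{x=a_1}$ of the same oscillatory--decaying type as the one for $P$, obtained by ``extracting the $(3r)^{-l-1}$ coefficient'' from \eqref{eca} and commuting this extraction with $\Delta_{\al,\be}$. No such representation exists: by \eqref{ecbb}, $U_l(a,b)$ is (up to the factor $\zeta^{a-b-1}$ and its conjugate) a polynomial of degree $l$ in $a$ and $b$, so in the regime of the lemma, where the arguments are of order $n$, one has ${\mathcal D}_x^k U_l\sim n^{l-k}$ (this is \eqref{egi}), i.e.\ polynomial \emph{growth}; whereas an integral of the form you write down, with $(-1-t)^{-x_1^{(n)}-c}\,t^{-y_1^{(n)}-d}\,(t-\zeta)^{k}$ in the integrand, is $O(n^{-k-1})$ by \cite[Proposition 4.4]{ef}. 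The two asymptotic regimes --- the $r\to\infty$ limit that defines the coefficients $U_s$ and the $n\to\infty$ limit of the lemma --- cannot be interchanged in the way you suggest, and indeed the $U$-analog of \eqref{ege}, namely \eqref{egi}, is not obtained in \cite{ec} or \cite{ef} from an integral representation but from the explicit closed form of $U_l$.

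The paper's proof works entirely with that closed form, $U_l(a,b)=\frac{1}{2\pi i}\left\langle\zeta^{a-b-1}(a-b\zeta)^l\right\rangle$ plus monomials of joint degree $<l$, together with the divided-difference calculus for polynomials (\eqref{egh}, i.e.\ Lemmas 6.3 and 6.4 of \cite{ec}). This exposes a subtlety your proposal cannot see: after applying ${\mathcal D}^k$, the lower-degree tail contributes a term of order $n^{l-k-1}$, which is exactly the order of the dominant term of $\Delta_{\al,\be}\,{\mathcal D}_x^k U_l$ that the lemma computes. The argument goes through only because that contribution is, to leading order, independent of $c$ and $d$ (equation \eqref{egk}), hence cancels in the difference $\Delta_{\al,\be}$, leaving only the contribution of the leading term $\left\langle\zeta^{a-b-1}(a-b\zeta)^l\right\rangle$. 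Without establishing this cancellation, the identification of the dominant term of the left-hand side of \eqref{egf} is unjustified. Your final step (the verification \eqref{egii}) is fine, but it is applied to a leading term that your method does not correctly produce.
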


\begin{proof}
\ \ \
By \cite[(6.8)]{ec} one has
\begin{align*}
U_l(a,b)=&\frac{1}{2\pi i}
\left\langle\zeta^{a-b-1}(a-b\zeta)^l\right\rangle
\nonumber
\\[10pt]
&+\ {\text{\rm monomials in $a$ and $b$ of joint degree $<l$}}.
\end{align*}
This implies, since $3|1-q$, that
\begin{align}
&
U_l(x_1^{(n)}+x+c,y_1^{(n)}+qx+d)=\frac{1}{2\pi i}
\left\langle
  \z^{x_1^{(n)}-y_1^{(n)}+c-d-1}[(1-q\z)x+x_1^{(n)}-y_1^{(n)}\z+c-d\z)]^l
\right\rangle
\nonumber
\\[10pt]
&
+\sum_{\mu,\nu\geq0 \atop \mu+\nu<l}c_{\mu,\nu}(x_1^{(n)}+x+c)^{\mu}(y_1^{(n)}+qx+d)^{\nu},
\label{egg}
\end{align}
where $c_{\mu,\nu}$ is independent of $x_1^{(n)}$ and $y_1^{(n)}$ for all $\mu$ and $\nu$.

Moreover, the argument that proved \cite[Lemma 6.4]{ec} implies that for any constants
$A,B,C,D\in\C$
\begin{align}
{\mathcal D}^k_x(Ax+Bx_1^{(n)}+Cy_1^{(n)}+D)^l|_{x=a_1}&={l\choose k}A^k\left(Bx_1^{(n)}+Cy_1^{(n)}\right)^{l-k}+O\left(n^{l-k-1}\right)
\nonumber
\\[10pt]
&={l\choose k}A^k\left(Bx_1+Cy_1\right)^{l-k}n^{l-k}+O\left(n^{l-k-1}\right).
\label{egh}
\end{align}
The above two equations imply\footnote{ We write $(x_1-y_1\z)^{l-k}$ as $\frac{1}{(x_1-y_1\z)^{k-l}}$ to economize horizontal space.}
\begin{align}
{\mathcal D}^k_xU_l(x_1^{(n)}+x+c,y_1^{(n)}+qx+d)|_{x=a_1}
=
\frac{1}{2\pi i}{l\choose k}
\left\langle \frac{\z^{x_1^{(n)}-y_1^{(n)}+c-d-1}(1-q\z)^k}{(x_1-y_1\z)^{k-l}}\right\rangle n^{l-k}+O\left(n^{l-k-1}\right).
\label{egi}
\end{align}
Since $3|\al-\be$, \eqref{egg} implies
\begin{align}
&
U_l(x_1^{(n)}+x+c+\al,y_1^{(n)}+qx+d+\be)=
\nonumber
\\[10pt]
&\ \ \ \ \ \ \ \ \ \ \ \ \ \ \ \ \ \ \ \ \ \ \ \ \ \ \ \ \ \ 
\frac{1}{2\pi i}
\left\langle
  \z^{x_1^{(n)}-y_1^{(n)}+c-d-1}[(1-q\z)x+x_1^{(n)}-y_1^{(n)}\z+c-d\z+\al-\be\z)]^l
\right\rangle
\nonumber
\\[10pt]
&\ \ \ \ \ \ \ \ \ \ \ \ \ \ \ \ \ \ \ \ \ \ \ \ \ \ \ \ \ \ 
+\sum_{\mu,\nu\geq0 \atop \mu+\nu<l}c_{\mu,\nu}(x_1^{(n)}+x+c+\al)^{\mu}(y_1^{(n)}+qx+d+\be)^{\nu}.
\label{egj}
\end{align}
We claim that when we apply ${\mathcal D}^k$ to the sum in \eqref{egg} we get
\begin{equation}
{\mathcal D}^k\left.\left\{\sum_{\mu,\nu\geq0,\, \mu+\nu<l}c_{\mu,\nu}(x_1^{(n)}+x+c)^{\mu}(y_1^{(n)}+qx+d)^{\nu}\right\}\right|_{x=a_1}
=\lambda n^{l-k-1} + O(n^{l-k-2}),
\label{egk}
\end{equation}
with $\lambda$ independent of $c$ and $d$. Indeed, this follows by expanding the summand in powers of $x$ and applying Lemma 6.3 of \cite{ec}.

Apply ${\mathcal D}^k$ to equations \eqref{egg} and \eqref{egj} and take the difference, so as to obtain
\begin{equation}
\Delta_{\al.\be}\,{\mathcal D}_x^k\, U_l(x_1^{(n)}+x+c,y_1^{(n)}+qx+d))|_{x=a_1}
\label{ebl}
\end{equation}
on the left hand side. Because of \eqref{egk}, ${\mathcal D}^k$ applied to the sums on the right hand sides of \eqref{egg} and \eqref{egj} have the same dominant term, and it is of order $n^{l-k-1}$. Therefore, when taking the difference they cancel. It follows that the dominant term in \eqref{ebl} is the same as the dominant term in
\begin{align}
&
  {\mathcal D}^k\left.\left\{
\frac{1}{2\pi i}
\left\langle
  \z^{x_1^{(n)}-y_1^{(n)}+c-d-1}[(1-q\z)x+x_1^{(n)}-y_1^{(n)}\z+c-d\z+\al-\be\z)]^l
\right\rangle
\right\}\right|_{x=a_1}
\nonumber
\\[10pt]
&\ \ \ \ \ \ \ \ \ \ \ \ 
-
{\mathcal D}^k\left.\left\{
\frac{1}{2\pi i}
\left\langle
  \z^{x_1^{(n)}-y_1^{(n)}+c-d-1}[(1-q\z)x+x_1^{(n)}-y_1^{(n)}\z+c-d\z)]^l
\right\rangle
\right\}\right|_{x=a_1}.
\label{egm}
\end{align}
The same arguments that led to \eqref{egi} yield that the dominant term in \eqref{egm} is
\begin{equation*}
\frac{1}{2\pi i}(l-k){l\choose k}
\left\langle
\frac{\z^{x_1^{(n)}-y_1^{(n)}+c-d-1}(\al-\be\z)(1-q\z)^k}{(x_1-y_1\z)^{k-l+1}}
\right\rangle
n^{l-k-1}.
\end{equation*}
One readily verifies that this is equal to $1/n$ times the result of applying $\al\frac{\partial}{\partial x_1}+\be\frac{\partial}{\partial y_1}$ to the right hand side of \eqref{egi}:
\begin{align}
&
\frac{1}{n}
\left(\al\frac{\partial}{\partial x_1}+\be\frac{\partial}{\partial y_1}\right)
\frac{1}{2\pi i}{l\choose k}
\left\langle \frac{\zeta^{x_1^{(n)}-y_1^{(n)}+c-d-1}(1-q\zeta)^{k}}{(x_1-\zeta y_1)^{k-l}}\right\rangle
n^{l-k}
\nonumber
\\[10pt]
&\ \ \ \ \ 
=
\frac{1}{2\pi i}(l-k){l\choose k}
\left\langle
\frac{\z^{x_1^{(n)}-y_1^{(n)}+c-d-1}(\al-\be\z)(1-q\z)^k}{(x_1-y_1\z)^{k-l+1}}
\right\rangle
n^{l-k-1}.
\label{egii}
\end{align}

In general, there are 9 cases of fixed residues modulo 3 for $x_1^{(n)}$ and $y_1^{(n)}$. Since \eqref{egii} is readily checked for any such fixed class, it follows for general $x_1^{(n)}$ and $y_1^{(n)}$.
\end{proof}

%
%


\begin{prop}
\label{tgc}  
When $\al-\be$ is a multiple of $3$, the $R\to\infty$ asymptotics of each entry in the $i$th row of the matrix $M_i$ defined by \eqref{eef} is obtained by applying the operator $\al\frac{\partial}{\partial x_1}+\be\frac{\partial}{\partial y_1}$ to the corresponding entry of the $i$th row of $\widetilde{M'}$ $($given by equations \eqref{edd}--\eqref{edf}$)$, and multiplying the result by $1/R$.

\end{prop}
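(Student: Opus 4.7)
The plan is to observe that, by the very definition of $M_i$ in \eqref{eef}, the $i$th row of $M_i$ consists of entries of the form $\Delta_{\alpha,\beta} E$, where $E$ is the corresponding entry of the $i$th row of $\widetilde{M'}$ (this is because $\widetilde{M'}_{\alpha,\beta}$ is obtained from $\widetilde{M'}$ by replacing $x_1^{(R)}$ by $x_1^{(R)}+\alpha$ and $y_1^{(R)}$ by $y_1^{(R)}+\beta$, and these are the only modifications affecting row $i$ in the block structure). Thus the proposition reduces to showing that for each such entry $E$, the asymptotic of $\Delta_{\alpha,\beta} E$ equals $(1/R)\bigl(\alpha\partial_{x_1}+\beta\partial_{y_1}\bigr)\lfloor\lfloor E\rfloor\rfloor$.

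The entries naturally split into two types, matching the block structure of $\widetilde{M'}$ in \eqref{eda}--\eqref{edc}. First, for the $2T$ entries lying in the $A$-blocks, each entry is of the form
\[
{\mathcal D}^{l-1}_b{\mathcal D}^{k-1}_a P\bigl(x_1^{(R)}-z_j^{(R)}+a_{1k}-b_{jl}+c,\ y_1^{(R)}-w_j^{(R)}+q_1 a_{1k}-q'_j b_{jl}+d\bigr)
\]
for integer shifts $c,d\in\{-2,-1,0\}$, as read off from \eqref{ecl}. I would set $r_R=x_1^{(R)}-z_j^{(R)}$, $s_R=y_1^{(R)}-w_j^{(R)}$, so that $r_R/R\to x_1-z_j$ and $s_R/R\to y_1-w_j$, which is nonzero by the assumption that the points $(x_i,y_i),(z_j,w_j)$ are distinct. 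Lemma \ref{tga} then applies verbatim (with $q=q_1$, $q'=q'_j$, and the divided difference in $b$ reinterpreted as one in $y=-b$), yielding exactly the required asymptotic.

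Second, for the $2S-2T$ entries lying in the $B$-block, each entry is of the form
\[
{\mathcal D}^{k-1}_a U_{l-1}\bigl(x_1^{(R)}+a_{1k}+c,\ y_1^{(R)}+q_1 a_{1k}+d\bigr)
\]
for $c,d\in\{-1,0,1\}$ as read off from \eqref{ecp}. Lemma \ref{tgb} applies directly, with $q=q_1$, and gives the stated asymptotic.

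The main work has therefore already been done inside Lemmas \ref{tga} and \ref{tgb}: the only thing left is the matching of indices between the general statements of those lemmas and the specific form of the entries of $\widetilde{M'}$. The only mildly delicate point I foresee is checking that the $c,d$-shifts appearing in the $2\times 2$ sub-blocks do not affect the validity of the congruence hypothesis $3\mid \alpha-\beta$ needed in the lemmas (they do not, since $c,d$ are constants that enter only through the overall factor $\zeta^{c-d}$ in \eqref{egc}, which is absorbed into the dominant-term expression \eqref{egee}), and to assemble the two halves into a single statement about the full row.
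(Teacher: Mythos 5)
Your proposal is correct and follows essentially the same route as the paper: the paper's proof likewise observes that row $i$ of $M_i$ consists of the differences $\Delta_{\al,\be}E$ of the corresponding entries of $\widetilde{M'}$, and then invokes Lemma \ref{tga} (with $r_n=x_1^{(n)}-z_j^{(n)}$, $s_n=y_1^{(n)}-w_j^{(n)}$) for the $P$-part and Lemma \ref{tgb} for the $U$-part. Your extra remarks on the shifts $c,d$ and on the sign convention for the divided differences in $b$ are consistent with how those lemmas are set up.
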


\begin{proof} The $i$th row of $M_i$ is the difference between the $i$th row of $\widetilde{M'}_{\al,\be}$ and the $i$th row of $\widetilde{M'}$. By equations \eqref{eda}--\eqref{edc}, the first $2S$ entries in this row are of the form of the left hand side of \eqref{ega}, and the remaining ones are of the form of the left hand side of \eqref{egf}. The statement follows then by Lemma \ref{tga} (applied with $r_n=x_1^{(n)}-z_j^{(n)}$ and $s_n=y_1^{(n)}-w_j^{(n)}$, so that ${\partial}/{\partial u}={\partial}/{\partial x_1}$,  ${\partial}/{\partial v}={\partial}/{\partial y_1}$) and Lemma \ref{tgb}. 
\end{proof}

\section{Average first step is proportional to electric field}

Let ${\bold e}_i$ be the unit vector pointing in polar direction $-\frac{\pi}{6}+(i-1)\frac{\pi}{3}$, for $i=1,\dotsc,6$. Given a collection of holes $O_1,\dotsc,O_n$, let us focus on $O_1$ and its smallest possible displacements, as the other holes are kept fixed. Clearly, these displacements are determined by the vectors ${\bold e}_i$. Form the vector\footnote{ Recall that $O_1+(1,0)$ denotes the translation of $O_1$ by the vector $(1,0)$, etc.}
\begin{align}
&
{\bold v}=\frac{1}{6}\left\{
\frac{\hat\omega(O_1+(1,0),\dotsc,O_n)}{\hat\omega(O_1,\dotsc,O_n)}{\bold e}_1
+
\frac{\hat\omega(O_1+(0,1),\dotsc,O_n)}{\hat\omega(O_1,\dotsc,O_n)}{\bold e}_2
+
\frac{\hat\omega(O_1+(-1,1),\dotsc,O_n)}{\hat\omega(O_1,\dotsc,O_n)}{\bold e}_3
\right.
\nonumber
\\[10pt]
&\ \ 
\left.
+
\frac{\hat\omega(O_1+(-1,0),\dotsc,O_n)}{\hat\omega(O_1,\dotsc,O_n)}{\bold e}_4
+
\frac{\hat\omega(O_1+(0,-1),\dotsc,O_n)}{\hat\omega(O_1,\dotsc,O_n)}{\bold e}_5
+
\frac{\hat\omega(O_1+(1,-1),\dotsc,O_n)}{\hat\omega(O_1,\dotsc,O_n)}{\bold e}_6\right\}.
\label{eia}
\end{align}
It is the average\footnote{ In the fine mesh limit, each fraction inside the curly braces in \eqref{eia} approaches 1.} over the smallest possible displacements of the hole $O_1$, each displacement being weighted proportionally to the number of tilings compatible with the new position of $O_1$. If the hole $O_1$ ``wants'' to move in one of the nearest six positions, this is reflected by sampling uniformly at random from the set of tilings\footnote{ We denote by ${\mathcal T}(O_1,\dotsc,O_n)$ the set of tilings of the plane with holes $O_1,\dotsc,O_n$.}
\begin{align}
&
{\mathcal T}(O_1+(1,0),\dotsc,O_n)\cup{\mathcal T}(O_1+(0,1),\dotsc,O_n)\cup{\mathcal T}(O_1+(-1,1),\dotsc,O_n)
\nonumber
\\[10pt]
&\ \ 
\cup{\mathcal T}(O_1+(-1,0),\dotsc,O_n)\cup{\mathcal T}(O_1+(0,-1),\dotsc,O_n)\cup{\mathcal T}(O_1+(1,-1),\dotsc,O_n);
\label{eib}
\end{align}
the average over the observed positions of $O_1$ is then the vector ${\bold v}$ defined by equation \eqref{eia}.

Since the ${\bold e}_i$'s add up to zero, one readily gets from \eqref{eia} and \eqref{eba} that
\begin{equation}
{\bold v}=\frac{1}{6}\left(T_{1,0}\,{\bold e}_1+T_{0,1}{\bold e}_2+T_{-1,1}{\bold e}_3+T_{-1,0}\,{\bold e}_4+T_{0,-1}{\bold e}_5+T_{1,-1}{\bold e}_6\right).
\label{eic}
\end{equation}

Since ${\bold e}_3={\bold e}_2-{\bold e}_1$ and ${\bold e}_i+3=-{\bold e}_i$, $i=1,\dotsc,3$, we obtain from equation \eqref{ebbbb} that, under the assumptions of Theorem \ref{tba}, we have\footnote{ Recall that the field ${\bold T}$ is given by \eqref{ebbb}, so it is equal, up to a multiplicative constant, to the electric field determined by the charges corresponding to the holes $O_2,\dotsc,O_n$, measured at the location of the hole $O_1$.} 
\begin{align}
&
T_{1,0}\,{\bold e}_1+T_{0,1}{\bold e}_2+T_{-1,1}{\bold e}_3+T_{-1,0}\,{\bold e}_4+T_{0,-1}{\bold e}_5+T_{1,-1}{\bold e}_6
\nonumber
\\[10pt]
&\ \ \ \ \ \ \ \ \ \ \ \ \ \ \ \ \ \ \ \ \ \ 
\sim
\frac{2\q(O_1)}{R}\left\{({\bold T}\cdot{\bold e}_1)\,{\bold e}_1+({\bold T}\cdot{\bold e}_2)\,{\bold e}_2
+({\bold T}\cdot({\bold e}_2-{\bold e}_1))({\bold e}_2-{\bold e}_1)\right\}
\nonumber
\\[10pt]
&\ \ \ \ \ \ \ \ \ \ \ \ \ \ \ \ \ \ \ \ \ \ 
=
\frac{2\q(O_1)}{R}\left\{\left[2({\bold T}\cdot{\bold e}_1)-{\bold T}\cdot{\bold e}_2\right]{\bold e}_1+
\left[2({\bold T}\cdot{\bold e}_2)-{\bold T}\cdot{\bold e}_1\right]{\bold e}_2\right\}.
\label{eid}
\end{align}
Let ${\bold T}=(T_x,T_y)$ be the Cartesian coordinates of ${\bold T}$. Then, since ${\bold e}_1=\left(\frac{\sqrt{3}}{2},-\frac{1}{2}\right)$ and ${\bold e}_2=\left(\frac{\sqrt{3}}{2},\frac{1}{2}\right)$ in Cartesian coordinates, one readily checks that
\begin{equation}
\left[2({\bold T}\cdot{\bold e}_1)-{\bold T}\cdot{\bold e}_2\right]{\bold e}_1+
\left[2({\bold T}\cdot{\bold e}_2)-{\bold T}\cdot{\bold e}_1\right]{\bold e}_2
=\left(\frac32T_x,\frac32T_x\right)=\frac32{\bold T}.
\label{eie}
\end{equation}
Therefore, by \eqref{eic}--\eqref{eie}, we obtain the following result.

\begin{theo}
\label{tia}
Under the assumptions of Theorem $\ref{tba}$, if holes $O_2,\dotsc,O_n$ are fixed, and hole $O_1$ can move a unit step, with the probabilities of each step proportional to the number of tilings with $O_1$ in the new position, then the average step of $O_1$ is, in the limit of large separation between the holes, proportional to $\q(O_1)$ times the electric field determined by the other holes, measured at $O_1$.

More precisely, the asymptotics of the average displacement of $O_1$ is
\begin{equation}
{\bold v}\sim\left(\frac12\q(O_1){\bold T}\right)\frac{1}{R},\ \ \ R\to\infty.
\label{eif}
\end{equation}

\end{theo}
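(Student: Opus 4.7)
The plan is to follow the direct computational strategy suggested by the definition \eqref{eia}: rewrite ${\bold v}$ in terms of the relative changes $T_{\al,\be}^{O_1}$, apply the asymptotics from Theorem \ref{tba} in the form \eqref{ebbbb}, and then collapse the resulting six-term vector sum to $\tfrac{1}{2}\q(O_1){\bold T}$ by a short linear-algebra identity in the plane.

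First I would observe that each ratio $\hat\omega(O_1+(\al,\be),\dotsc,O_n)/\hat\omega(O_1,\dotsc,O_n)$ appearing in \eqref{eia} equals $1+T_{\al,\be}^{O_1}(O_1,\dotsc,O_n)$ by definition \eqref{eba}. Since ${\bold e}_1+\cdots+{\bold e}_6={\bold 0}$, the contributions from the constant ``$1$'' parts cancel, yielding identity \eqref{eic}, which expresses $6{\bold v}$ as the vector sum $\sum_{i=1}^{6} T_{\al_i,\be_i}{\bold e}_i$ over the six unit lattice directions. No asymptotic input is used in this step — it is purely a telescoping of the definition against the geometry of the hexagonal unit vectors.

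Next I would substitute the scaling asymptotics $T_{\al,\be}\sim \q(O_1)({\bold T}\cdot(\al,\be))/R$ from Remark~1 into each of the six terms. Asymptotically this is linear in $(\al,\be)$, so pairing antipodal directions via ${\bold e}_{i+3}=-{\bold e}_i$ doubles each contribution and reduces the six-term sum to a three-term sum. Using ${\bold e}_3={\bold e}_2-{\bold e}_1$ then further collapses it to a linear combination of ${\bold e}_1$ and ${\bold e}_2$ with coefficients $2({\bold T}\cdot{\bold e}_1)-{\bold T}\cdot{\bold e}_2$ and $2({\bold T}\cdot{\bold e}_2)-{\bold T}\cdot{\bold e}_1$, which is precisely the right-hand side of \eqref{eid}.

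The final step is the $2\times 2$ linear-algebra identity \eqref{eie}. Writing ${\bold e}_1=(\sqrt{3}/2,-1/2)$ and ${\bold e}_2=(\sqrt{3}/2,1/2)$ in Cartesian coordinates and a generic ${\bold T}=(T_x,T_y)$, a direct expansion shows that the combination equals $\tfrac{3}{2}{\bold T}$. Combining with the prefactor $2\q(O_1)/R$ from \eqref{eid} and the prefactor $1/6$ from \eqref{eic} yields the asymptotic ${\bold v}\sim \tfrac{1}{2}\q(O_1){\bold T}/R$ claimed in \eqref{eif}. The only mildly delicate point — and the one step requiring some care rather than raw computation — is reconciling the two coordinate conventions: the oblique $60^\circ$ coordinates used in \eqref{ebaa} for $(\al,\be)$ and the Cartesian coordinates used for the ${\bold e}_i$'s and for ${\bold T}$. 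The projection formula at the end of Section~5 certifies that ${\bold T}\cdot(\al,\be)$ as it appears in \eqref{ebbbb} is indeed the Euclidean inner product of ${\bold T}$ with the Cartesian realization of the oblique vector $(\al,\be)$, so that the identifications $(1,0)\leftrightarrow{\bold e}_1$, $(0,1)\leftrightarrow{\bold e}_2$, $(-1,1)\leftrightarrow{\bold e}_2-{\bold e}_1$ are compatible with the asymptotics used above, and no extra conversion factors appear.
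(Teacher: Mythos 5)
Your proposal is correct and follows essentially the same route as the paper: equation \eqref{eic} via the vanishing of $\sum_i{\bold e}_i$, substitution of \eqref{ebbbb}, reduction to \eqref{eid} using ${\bold e}_{i+3}=-{\bold e}_i$ and ${\bold e}_3={\bold e}_2-{\bold e}_1$, and the Cartesian identity \eqref{eie} yielding the factor $\tfrac12$. Your closing remark about reconciling the oblique and Cartesian conventions via the projection formula is a reasonable point of care that the paper handles implicitly.
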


In fact, the same conclusion holds in a stronger sense: If we allow not just unit steps for $O_1$, but any step of length $\leq\rho$, for any fixed positive integer $\rho$, the above arguments are readily seen to extend and prove that the average over all such steps, when the probability of each step is proportional to the number of tilings with $O_1$ in the new position, is still proportional to $\q(O_1)$ times the electric field determined by the other holes, measured at $O_1$. Indeed, note that the set $\left\{(\al,\be):\al^2+\al\be+\be^2\leq\rho\right\}$ can be partitioned into subsets of size 6, with the vectors in each subset being rotations by 60 degrees of one another. Then apply the result of the above calculation to each of these subsets, and add up the resulting equations. This proves the following strengthening of the above result.

\begin{theo}
\label{tib}
For any integer $\rho>0$, if ${\bold v}_\rho$ is defined on the pattern of equation \eqref{eia} but by averaging instead over all displacements $(\al,\be)$ of $O_1$ of length $\leq\rho$, under the assumptions of Theorem $\ref{tba}$ we have 
\begin{equation}
{\bold v_\rho}\sim\left(c_{\rho}\q(O_1){\bold T}\right)\frac{1}{R},\ \ \ R\to\infty,
\label{eig}
\end{equation}
where $c_{\rho}$ is a constant depending only on $\rho$.
\end{theo}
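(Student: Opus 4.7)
The plan is to reduce the general case to the unit-step case of Theorem \ref{tia}, by decomposing the set of allowed displacements into orbits under $60^\circ$ rotation. I would first observe that $\al^2+\al\be+\be^2$ equals, up to a positive constant, the squared Euclidean length of $(\al,\be)$ in our $60^\circ$ coordinate system, so the set $S_\rho:=\{(\al,\be)\in\Z^2\setminus\{(0,0)\}:\al^2+\al\be+\be^2\leq\rho\}$ is stable under rotation by $60^\circ$ about the origin. Since the only lattice point fixed by any nontrivial element of this cyclic group is $(0,0)$, $S_\rho$ decomposes as a disjoint union of orbits, each of size exactly six.

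Next I would define ${\bold v}_\rho$ on the pattern of \eqref{eia}, as $\frac{1}{|S_\rho|}\sum_{(\al,\be)\in S_\rho} \frac{\hat\omega(O_1+(\al,\be),\dotsc,O_n)}{\hat\omega(O_1,\dotsc,O_n)}\,\hat{\bold e}_{\al,\be}$, where $\hat{\bold e}_{\al,\be}$ is the unit vector pointing in the direction of $(\al,\be)$. Since $S_\rho$ is closed under negation, the $\hat{\bold e}_{\al,\be}$'s sum to zero and each ratio $\hat\omega/\hat\omega$ may be replaced by $T_{\al,\be}$ just as in the passage from \eqref{eia} to \eqref{eic}. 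Partitioning the sum over $S_\rho$ into sums over individual $60^\circ$ orbits and inserting the asymptotic \eqref{ebbbb} inside each orbit reduces matters to showing that, for every orbit representative ${\bold u}$,
\begin{equation*}
\Phi_{\bold u}({\bold T}):=\sum_{k=0}^{5}\bigl({\bold T}\cdot R^k{\bold u}\bigr)\,R^k\hat{\bold e}_{\bold u}
\end{equation*}
is a scalar multiple of ${\bold T}$, with the scalar depending only on $|{\bold u}|$ (here $R$ denotes rotation by $60^\circ$).

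The one step that carries any content is this orbit-sum identity, which generalizes the explicit computation \eqref{eid}--\eqref{eie}; it is the place where I would expect the most care to be needed, though the argument is mild. I would handle it by symmetry: $\Phi_{\bold u}$ visibly commutes with $R$ (since the sum runs over a full $R$-orbit and the Euclidean dot product is rotation invariant), and it is a symmetric endomorphism of $\mathbb R^2$, being a sum of rank-one operators $R^k{\bold u}\otimes R^k{\bold u}$; a short calculation shows that any symmetric $R$-equivariant endomorphism of $\mathbb R^2$ is a scalar multiple of the identity, so $\Phi_{\bold u}({\bold T})=\lambda_{\bold u}{\bold T}$ with $\lambda_{\bold u}$ depending only on $|{\bold u}|$. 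A one-line explicit check via $\mathbb R^2\cong\mathbb C$ (in which $\sum_{k=0}^{5}e^{2ik\pi/3}=0$) would pin down $\lambda_{\bold u}$ if desired, but is unnecessary for the stated result. Summing the scalar contributions across the orbits of $S_\rho$ then produces a constant $c_\rho$ that depends only on $\rho$, as required.
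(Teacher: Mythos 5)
Your proposal is correct and follows essentially the same route as the paper: decompose $\{(\al,\be):\al^2+\al\be+\be^2\le\rho\}$ into size-six orbits under $60^\circ$ rotation (all orbits have size six since only the origin is fixed by a nontrivial rotation), and apply the unit-orbit computation of \eqref{eid}--\eqref{eie} to each orbit before summing. The only cosmetic difference is that you establish the per-orbit identity by a symmetry/equivariance argument (a symmetric endomorphism of $\mathbb R^2$ commuting with $R$ is scalar) rather than by the explicit coordinate calculation the paper reuses, which is a clean and valid substitute.
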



\parindent0pt
{\it Remark $4$.} From the point of view of the parallel to physics, this can be interpreted as showing that the electrostatic force emerges as an entropic force (see \cite{Muller} for background). In \cite{ef} we proved that the lozenges line up along the electric field lines. By contrast, Theorems \ref{tia} and \ref{tib} shows that the pull on a given hole created by the fluctuating sea of dimers in the presence of the other holes acts precisely as the pull on the charge corresponding to $O_1$ of the electrostatic force  determined by the other charges.

\parindent15pt


\section{Conjectures for ${\bold F}$-field and ${\bold T}$-field in the presence of boundary}



In \cite{rangle} we presented a conjectural characterization of the correlation of gaps in general regions with boundary in the fine mesh limit. We present below a counterpart for the field ${\bold F}$ of average dimer orientations.

Recall that the field ${\bold F}$ is a discrete field defined at the center of each left-pointing unit triangle $u$ on the triangular lattice to be equal to the average orientation\footnote{ Over all lozenge tilings of the plane with a fixed collection of holes, under the uniform distribution.} of the lozenge that covers $u$ (see \cite{ef} for details). In \cite{ef} we proved that in the bulk, the scaling limit of ${\bold F}$ is equal, up to a multiplicative constant, to the 2D electric field determined by the system of charges obtained by replacing each hole $O$ by an electric charge of magnitude $\q(O)$. In other words, in the bulk, lozenges align along the electric field lines in the scaling limit. We give below the conjectural scaling limit of ${\bold F}$ in the presence of boundary.

It will be useful for the reader, as the conjectural scaling limits for the fields ${\bold F}$ and ${\bold T}$ below are compared to one another, to note that the definitions of these two fields are analogous: ${\bold T}$ records the average first step of a hole\footnote{ Either considering six possible unit steps (each weighted proportionally to the number of tilings compatible with the new position of the holes), or considering all possible steps of length at most $\rho$ for any fixed $\rho>0$; see Section 8.} as the other holes are kept fixed, while ${\bold F}$ records the average orientation of a lozenge in a given position as all holes are fixed. Given this, it is interesting to note the difference between the two conjectures (and it would be interesting to understand the deeper meaning of the difference between the two conjectured limits).

\begin{figure}[t]
\centerline{
\hfill
{\includegraphics[width=0.30\textwidth]{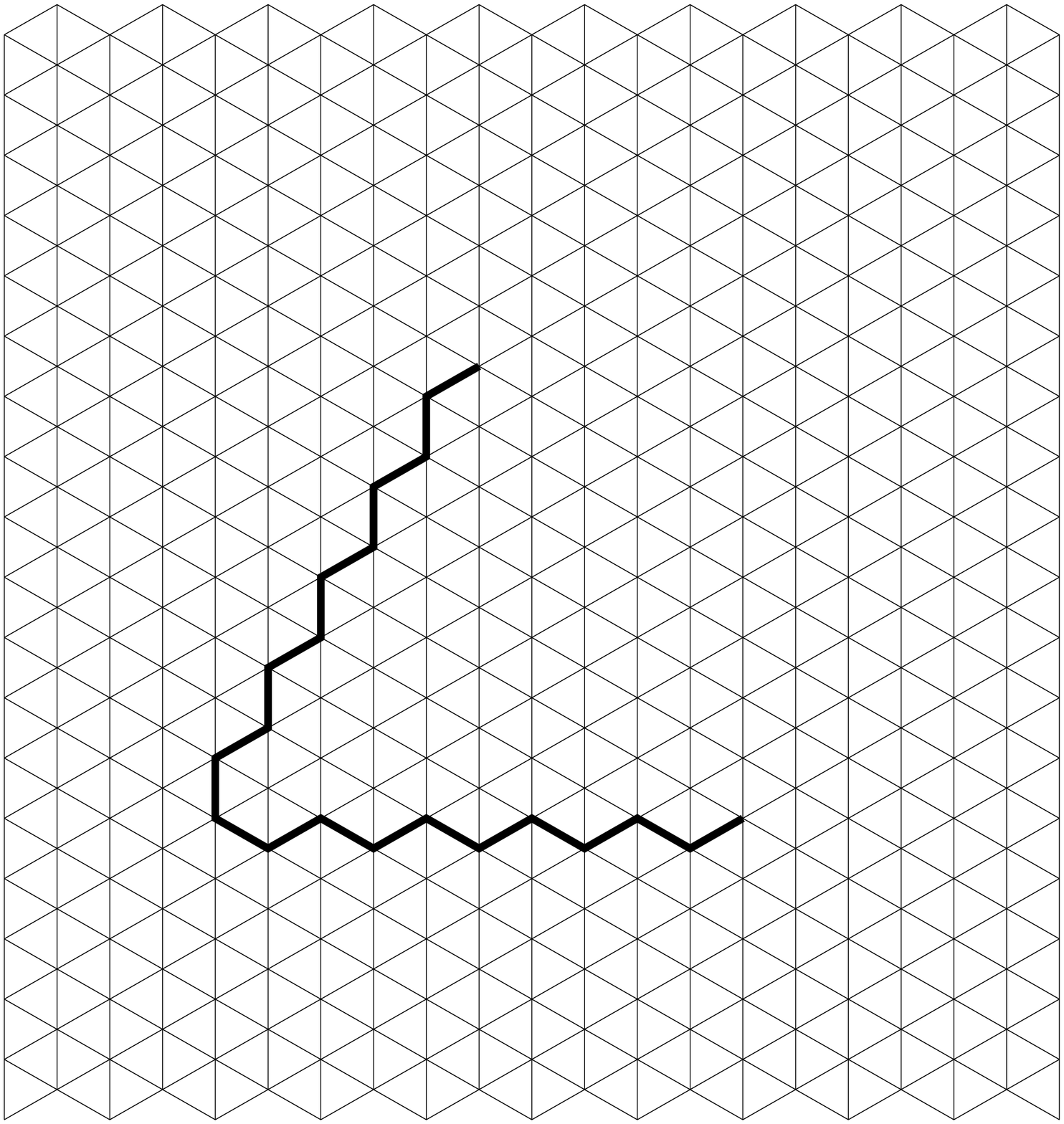}}
\hfill
{\includegraphics[width=0.30\textwidth]{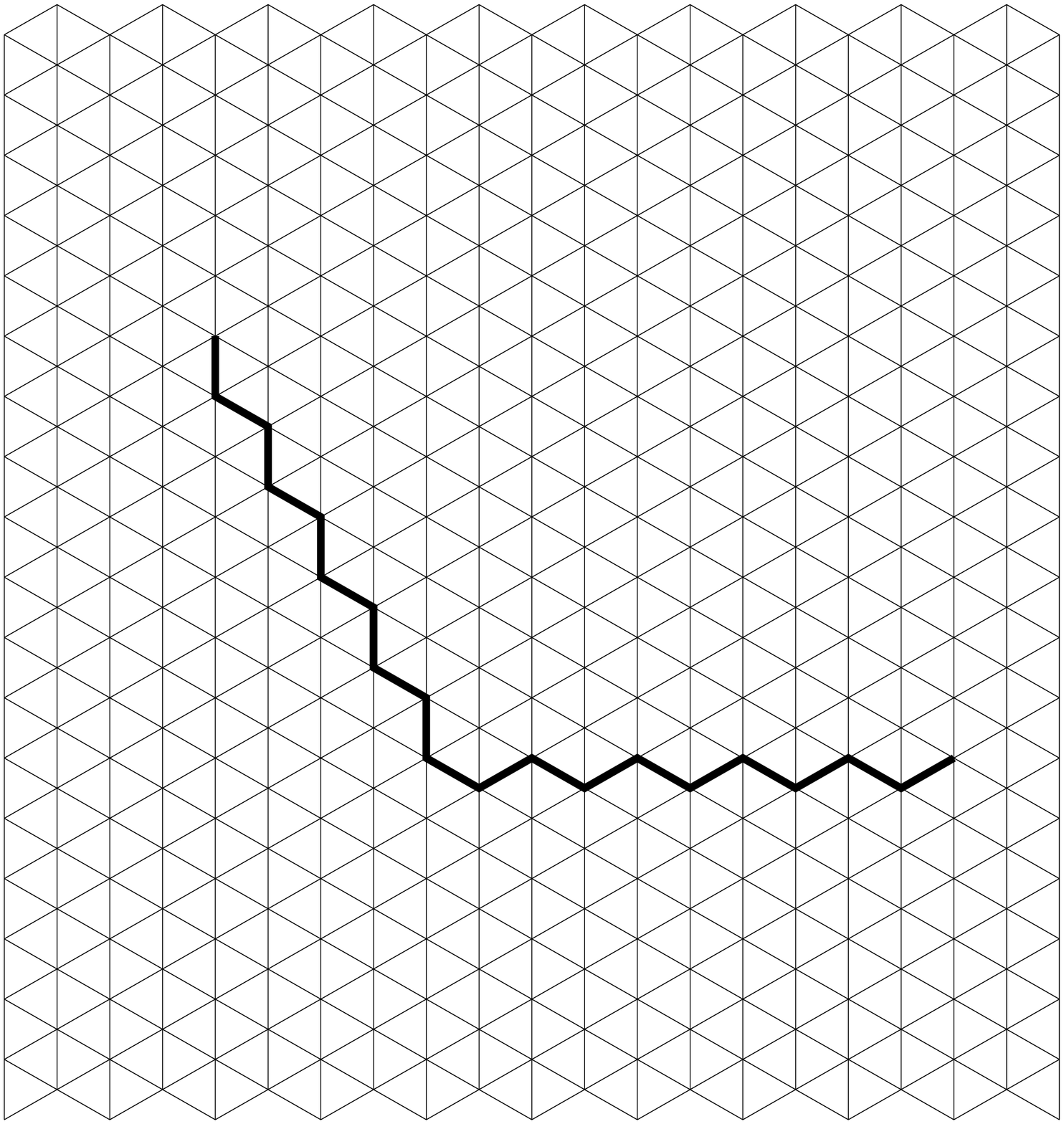}}
\hfill
}
\caption{\label{fja} The two types of zig-zag corners in $\Omega_n$.}
\end{figure}

We point out that even though the fields ${\bold F}$ and ${\bold T}$ are the same up to a multiplicative factor when there is no boundary (cf. \cite[Theorem 1.1]{ef} and Theorem \ref{tba} of the current paper), this is not true in general for regions with boundary. Indeed, in \cite{free} we showed that in a half-plane with open boundary conditions, the field ${\bold T}$ determined by a single triangular hole of side 2 has the same asymptotics as the 2D electric field near a straight line conductor, but the field ${\bold F}$ is equal to zero in the scaling limit.

In order to phrase our general conjectures, let $\Omega$ be a simply connected open set in the plane so that its boundary $\partial \Omega$ consists of a finite union of straight line segments (finite or infinite in length) with polar directions belonging to the set $\{0,\pm\pi/6,\pm\pi/3,\pm\pi/2,\pm2\pi/3,\pm5\pi/6,\pi\}$ (see Figure \ref{fjc} for an example). Color the boundary line segments with polar directions in the set $\{0,\pm\pi/3,\pm2\pi/3,\pi\}$ red (indicated by solid lines in Figure \ref{fjc}), and the boundary line segments with polar directions in the set $\{\pm\pi/6,\pm\pi/2,\pm5\pi/6\}$ blue (indicated by dashed lines in Figure \ref{fjc}). Let $a_1,\dotsc,a_k$ be distinct points in the interior of $\Omega$.

For $n\geq1$, let $\Omega_n$ be a lattice region on the triangular lattice (drawn so that one family of lattice lines is vertical) whose boundary is the union of ``straight zig-zag'' line segments (each picture in Figure \ref{fja} shows two such line segments) and lattice line segments (see Figure \ref{fjb} for an example). We assume that each corner where two zig-zag portions of the boundary meet looks, up to rotation by some multiple of $\pi/3$, as shown in Figure \ref{fja}.
Let $\Omega_n$ have constrained boundary conditions along the zig-zag portions, and free boundary conditions along the lattice segment portions. Suppose that in the scaling limit\footnote{ By the scaling limit we mean here the fine mesh limit, i.e. the limit as the lattice spacing of the triangular lattice on which the regions $\Omega_n$ reside approaches zero.}we have $\Omega_n\to\Omega$ as $n\to\infty$, in such a way that the zig-zag portions of the boundary of $\Omega_n$ approach red segments on the boundary of $\Omega$, and the free portions of the boundary of $\Omega_n$ approach blue segments on the boundary of $\Omega$.

Let $O_1^{(n)},\dotsc,O_k^{(n)}$  be finite unions of unit triangles from the interior of $\Omega_n$, so that for any fixed~$i$, the $O_i^{(n)}$'s are translates of one another for all $n\geq1$ (these will be the gaps). Assume that in the scaling limit as $n\to\infty$, $O_i^{(n)}$ shrinks to $a_i$, for $i=1,\dotsc,k$.

\begin{figure}[t]
\centerline{
\hfill
{\includegraphics[width=0.55\textwidth]{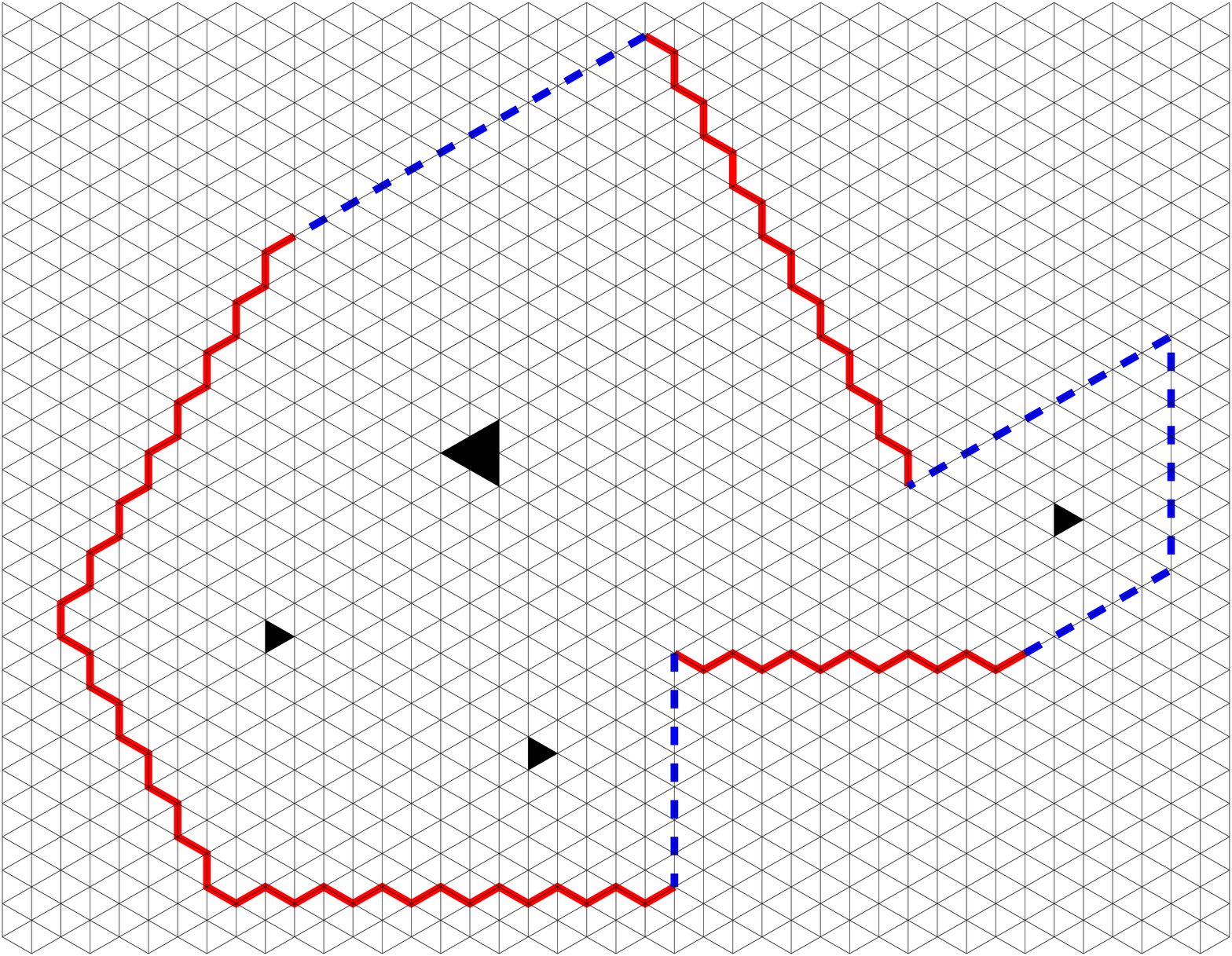}}
\hfill
}
\caption{\label{fjb}  An example of $\Omega_n$.}
\vskip0.1in
\centerline{
\hfill
{\includegraphics[width=0.55\textwidth]{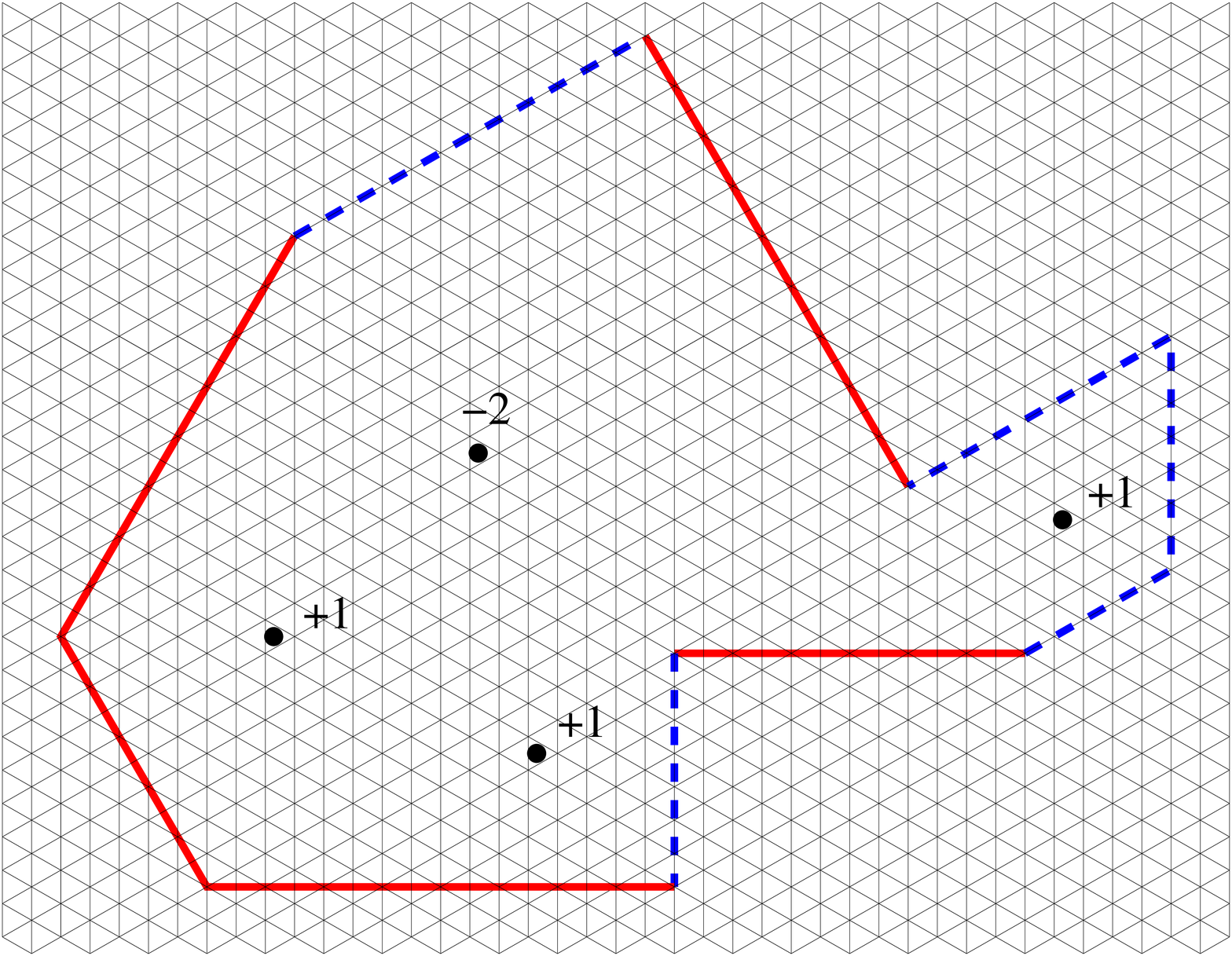}}
\hfill
}
\caption{\label{fjc} Gaps in a general region with boundary, and the corresponding steady state heat flow problem.}
\end{figure}

Corresponding to these regions, consider a 2D physical system $S$ consisting of a uniform block of material having shape $\bar\Omega$, with $k$ point sources at $a_1,\dotsc,a_k$, of strengths given by the $\q(O_i^{(n)})$'s (a heat source with a negative strength is a heat sink; see Figure \ref{fjc}). Let the boundary of this block of material be a perfect insulator along its red segments (which correspond to the constrained boundaries of $\Omega_n$), and let it be kept at a common constant temperature along its blue segments (corresponding to the free boundaries of $\Omega_n$).
Denote by ${\bold h}$ the heat flow vector field of this system in the steady state.

Then we believe that the behavior of the
field of average lozenge orientations ${\bold F}$
is given in the scaling limit by the following.

\begin{con}
\label{tja} Let the holes $O_i^{(n)}$ shrink to distinct points $a_1,\dotsc,a_k\in\Omega$ in the scaling limit as $n\to\infty$. Then
%
\begin{equation}
\lim_{n\to\infty}n\hskip0.01in{\bold F}\sim c\, {\bold h}, 
\label{eja}
\end{equation}
where $c$ is a scalar constant that depends on the region $\Omega$ $($and which portions of its boundary are perfect insulators or kept at a common temperature$)$ and on the shapes of the $O_i^{(n)}$'s, but not the value of the common temperature or the points  $a_1,\dotsc,a_k\in\Omega$ to which the $O_i^{(n)}$'s shrink in the scaling limit.
\end{con}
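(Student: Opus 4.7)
The plan is to adapt the Kasteleyn--Kenyon determinantal apparatus used in \cite{ef} for the bulk case to regions $\Omega_n$ with mixed boundary, and to identify the resulting fine-mesh asymptotics with the steady-state heat flow $\mathbf{h}$. The first step is to prove, for a fixed left-pointing unit triangle $u$ inside $\Omega_n$, a formula expressing $\mathbf{F}(u)$ as a linear combination of the three unit vectors in the lozenge-orientation directions weighted by the probabilities of the three coverings, as in Section 2 of \cite{ef}; each such probability is in turn a ratio of joint gap correlations in $\Omega_n$, and these correlations should admit a Kenyon-type determinantal evaluation in terms of a modified coupling function $P_{\Omega_n}$ that records the effect of the zig-zag and lattice-line portions of $\partial\Omega_n$.

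The second step is an asymptotic analysis of $P_{\Omega_n}$ in the fine-mesh limit. In the bulk $P_{\Omega_n}$ agrees with Kenyon's full-plane coupling function $P$ of \eqref{eca}, whose $R\to\infty$ asymptotics was given in \cite{ec}. Near the boundary, $P_{\Omega_n}$ satisfies the discrete Laplace equation with boundary behavior induced by the dimer constraints: the zig-zag segments force certain lozenges and act, after Kasteleyn signing, as reflecting walls, whereas the free lattice-line segments admit half-protruding lozenges and act as sign-reversing walls. Passing to the continuum, the fine-mesh limit of $P_{\Omega_n}$ should be a Green's function on $\Omega$ for the Laplacian with Neumann conditions on the red segments and Dirichlet conditions on the blue segments --- exactly the boundary-value problem whose solution gives the temperature $T$ in the steady-state system $S$ of the conjecture.

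The third step is to combine the determinantal formula of Step~1 with the asymptotics of Step~2 to compute the scaling limit of $\mathbf{F}$. As in the bulk case handled in \cite{ef}, the leading-order contribution should organize itself into a sum over the holes $O_i^{(n)}$ of $\mathrm{q}(O_i^{(n)})$ times the relevant derivative of the Green's function, producing a vector proportional to $-\nabla T$ at the observation point, i.e.\ to $\mathbf{h}$. The multiplicative constant $c$ in \eqref{eja} should emerge from the universal coefficient in the bulk asymptotics of $P$ together with the hole-shape-dependent factors coming from the block-wise divided-difference reductions of Section~3 above; its independence of the common boundary temperature and of the interior points $a_i$ is then automatic, since adjusting the common temperature only shifts the Green's function by an additive constant and the points $a_i$ enter the limit only through the evaluation of the limiting field, not through the normalization.

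The main obstacle is Step~2: establishing uniform asymptotics of $P_{\Omega_n}$ up to and including the corners of $\Omega$. At a corner where two red or two blue boundary segments meet, the method of images generates a finite reflection group and the discrete Green's function can be built by superposition of finitely many copies of $P$; but at a red/blue corner, or at a zig-zag corner of the kinds depicted in Figure \ref{fja}, the reflection group is generically infinite or fails to be a group outright, and the analysis must instead proceed through direct discrete-PDE estimates for the Laplace equation with mixed boundary conditions. A further subtlety, underscored by the half-plane comparison with \cite{free} (where $\mathbf{F}$ vanishes in the scaling limit while $\mathbf{T}$ does not), is that the precise way in which boundary images enter the asymptotic expression for $\mathbf{F}$ is not the same as for $\mathbf{T}$; correctly tracking these sign patterns so as to recover the heat flow $\mathbf{h}$ rather than some other field built from the same Green's function is the conceptual heart of the conjecture and will require a careful bookkeeping of the Kasteleyn orientations along the boundary.
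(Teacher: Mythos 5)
The statement you are addressing is Conjecture \ref{tja}: the paper offers no proof of it (and indeed presents it precisely because a proof is out of reach), so there is no argument of the author's to compare yours against. What you have written is a research program rather than a proof, and you say as much yourself; I want to make the gaps concrete. Step 1 is fine in principle (local statistics in a finite region are determinants in the inverse Kasteleyn matrix), but Step 2 --- that the inverse Kasteleyn entries of $\Omega_n$ converge, uniformly up to $\partial\Omega$ and its corners, to a Green's function with Neumann conditions on the red segments and Dirichlet conditions on the blue ones --- is asserted, not established, and it is the entire content of the conjecture. No such convergence result exists for regions with the corner types of Figure \ref{fja}, let alone at junctions where a red and a blue segment meet, and your own observation that the method of images fails there means you have no candidate mechanism to replace it beyond an appeal to ``direct discrete-PDE estimates'' that are not supplied. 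Step 3 is equally unsupported: in the bulk case of \cite{ef} the extraction of the field from the determinant hinges on the matrices being asymptotically singular, on the divided-difference repair of Section 3, and ultimately on the exact product formula of Theorem \ref{tdb}; with a boundary present there is no known analogue of that product formula, so the cancellations that isolate the leading field term cannot be carried out by the same route.

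A second, more structural worry is the consistency check you yourself raise. In the free-boundary half-plane of \cite{free}, the field ${\bold F}$ vanishes in the scaling limit even though the steady-state heat flow ${\bold h}$ of the corresponding system (a source facing a constant-temperature wall) is manifestly nonzero; the conjecture can only accommodate this by letting the constant $c$ degenerate to $0$ for that region. Your ``Green's function derivative'' heuristic in Step 3, taken at face value, would instead predict a nonzero multiple of ${\bold h}$ there, so whatever sign bookkeeping of Kasteleyn orientations along the free boundary is supposed to kill the leading term must be identified explicitly before the program can even be said to be consistent with the one boundary case that has been computed. Until that mechanism is exhibited and the corner asymptotics of Step 2 are proved, the statement remains a conjecture.
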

%

%

How about ${\bold T}$? Our considerations from \cite[Section 8]{rangle}, Theorem \ref{tba} and the example we worked out in \cite{free} suggest the following.
\begin{con}
\label{tjb} Let the holes $O_i^{(n)}$ shrink to distinct points $a_1,\dotsc,a_k\in\Omega$ in the scaling limit as $n\to\infty$. Then
%
\begin{equation}
\frac{1}{\sqrt{\alpha^2+\alpha\beta+\beta^2}}
\lim_{n\to\infty}n\,T_{\alpha,\beta}^{O_1^{(n)}}(O_1^{(n)},\dotsc,O_n^{(n)})
\sim
c\,\frac{\nabla_{(\al,\be)}E}{E},
\label{ejb}
\end{equation}
where $E$ is the heat energy\footnote{ A good way to think about the heat energy is that, up to a multiplicative constant, it equals $\int_\Omega h^2$, where $h$ is the magnitude of the heat flow vector.} of the above described system in the steady state,
$\nabla_{(\al,\be)}$ is the directional derivative\footnote{ When regarding $E$ as a function of the coordinates of $a_1$, keeping points $a_2,\dotsc,a_k$ fixed; the directional derivative of a function $f({\bold x})=f(x_1,\dotsc,x_n)$ along a vector ${\bold v}=(v_1,\dotsc,v_n)$ is the function $\nabla_{\bold v}f({\bold x})$ defined by $\nabla_{\bold v}{f}({\bold x})=\lim_{h\to0}\frac{f({\bold x}+h{\bold v})-f({\bold x})}{h|{\bold v}|}$.}, 
and $c$ is a constant that depends on the region $\Omega$ $($and which portions of its boundary are perfect insulators or kept at a common temperature$)$ and on the shapes of the $O_i^{(n)}$'s, but not on $\al$,~$\be$, the value of the common temperature or the points  $a_1,\dotsc,a_k\in\Omega$ to which the $O_i^{(n)}$'s shrink in the scaling limit.

\end{con}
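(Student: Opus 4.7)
The plan is to deduce Conjecture~\ref{tjb} from the conjectural correlation asymptotic of \cite{rangle} by way of a logarithmic differentiation. The prediction of \cite{rangle} is that in the presence of boundary, $\hat\omega(O_1^{(n)},\dotsc,O_k^{(n)})$ behaves, up to a multiplicative factor depending only on the shapes of the gaps and on the geometry of $\Omega$, as a fixed power of the steady-state heat energy $E=E(a_1,\dotsc,a_k)$; schematically, $\hat\omega \sim C\cdot E^{-\gamma}$ for an exponent $\gamma$ determined by the gap shapes. This is consistent with the no-boundary case, where $E$ reduces (up to divergent self-energies that cancel in the normalized correlation) to the Coulomb energy $-\sum_{i<j}\q(O_i)\q(O_j)\log d_{ij}$, and the known asymptotic is a power of it.

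Accepting that formula, Conjecture~\ref{tjb} would follow almost immediately. Displacing $O_1^{(n)}$ by the fixed lattice vector $(\alpha,\beta)$ corresponds, in the fine mesh limit, to displacing $a_1$ continuously by $(\alpha,\beta)/n$. Since $E$ is a smooth function of $a_1$ away from the other $a_j$'s and from $\partial\Omega$, a first-order Taylor expansion gives
\[
T^{O_1^{(n)}}_{\alpha,\beta}(O_1^{(n)},\dotsc,O_n^{(n)})
\sim -\gamma\,\frac{\sqrt{\alpha^2+\alpha\beta+\beta^2}}{n}\,\frac{\nabla_{(\alpha,\beta)}E}{E},
\]
where $\sqrt{\alpha^2+\alpha\beta+\beta^2}$ is the Euclidean length of $(\alpha,\beta)$ in the $60^\circ$ coordinate system and $\nabla_{(\alpha,\beta)}$ is the unit-normalized directional derivative defined in the footnote to Conjecture~\ref{tjb}. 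Dividing by $\sqrt{\alpha^2+\alpha\beta+\beta^2}$ and multiplying by $n$ recovers~\eqref{ejb} with $c=-\gamma$. As a sanity check I would specialize to the bulk and verify that $-\gamma\,\nabla_{(\alpha,\beta)}E_C/E_C$ coincides with $\q(O_1)\,\proj_{(\alpha,\beta)}{\bold T}/R$ for ${\bold T}$ given by~\eqref{ebbb}, which both identifies the constant $c$ in closed form and reproduces Theorem~\ref{tba}.

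The principal obstacle is of course proving the underlying correlation asymptotic of \cite{rangle}, which is itself open. The natural line of attack is to extend the determinant machinery of Sections~3--5 of the present paper to regions with boundary: start from a Kasteleyn-type determinant for tilings of $\Omega_n$ with gaps, replace the infinite-plane coupling function $P$ of~\eqref{eca} by a discrete Green's function $P_{\Omega_n}$ adapted to the mixed Neumann/Dirichlet conditions on the two types of boundary portions, apply the divided-difference transformation of~\eqref{eck} to isolate the asymptotically non-singular part, and try to identify the leading determinant as a product of values of $P_{\Omega_n}$. The hard step is establishing an analog of Theorem~\ref{tdb} in this setting, since $P_{\Omega_n}$ does not admit closed-form product evaluations outside highly symmetric geometries; this seems to require either saddle-point analysis of integral representations of $P_{\Omega_n}$ (tractable in special cases like the half-plane treated in \cite{free}) or an invariance argument exploiting the factorization structure of the continuum Green's function under conformal mapping or the method of images. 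A secondary issue is determining how $c$ in Conjecture~\ref{tjb} compares with the constant $c$ of Conjecture~\ref{tja}: the example of \cite{free}, where ${\bold F}$ vanishes identically in the scaling limit while ${\bold T}$ matches the electric field of a straight-line conductor, shows that the two constants are genuinely distinct in general, and their precise relationship must be read off from the asymptotic analysis itself rather than inferred directly from the heat-flow picture.
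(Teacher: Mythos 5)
The statement you are asked about is Conjecture~\ref{tjb}; the paper offers no proof of it, only the remark that it is suggested by the conjectural correlation asymptotics of \cite[Section 8]{rangle}, by Theorem~\ref{tba}, and by the half-plane example of \cite{free}. Your proposal is therefore not, and cannot be graded as, a proof: its first and essential step is to \emph{assume} the open conjecture of \cite{rangle} (that $\hat\omega$ behaves like a fixed power of the steady-state heat energy $E$), after which \eqref{ejb} is just a first-order Taylor expansion of $\hat\omega(a_1+(\al,\be)/n)/\hat\omega(a_1)$. You are candid about this, and that conditional reduction by logarithmic differentiation is in fact the same motivational chain the author has in mind, so as a reconstruction of the paper's reasoning your write-up is faithful; it simply does not close the gap, because it trades one open conjecture for another.

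One concrete point to be careful about in your ``sanity check.'' In the bulk, $\hat\omega\sim C\prod_{i<j}d_{ij}^{\q(O_i)\q(O_j)/2}$, which is an \emph{exponential} of the 2D Coulomb interaction energy $-\sum_{i<j}\q(O_i)\q(O_j)\ln d_{ij}$, not a power of it; correspondingly, Theorem~\ref{tba} produces $\nabla_{(\al,\be)}\ln\hat\omega$, i.e.\ a plain gradient of the log-interaction energy, with no division by it. Your ansatz $\hat\omega\sim C\,E^{-\gamma}$ together with the footnote's identification $E\propto\int_\Omega h^2$ would instead give the logarithmic gradient of that interaction energy, which is a different quantity. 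So the claim that specializing to the bulk ``reproduces Theorem~\ref{tba}'' does not go through literally; it requires pinning down exactly which functional of the heat flow plays the role of $E$ (effectively, an exponential of the quadratic energy rather than the quadratic energy itself), a normalization the paper leaves implicit. This does not invalidate your heuristic, but it is precisely the kind of issue that would have to be resolved before the reduction — let alone the underlying asymptotics for regions with boundary — could be turned into a proof.
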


Note that equation \eqref{ejb} implies that there exists a field ${\bold T}$ that governs in the scaling limit the relative changes $T_{\al,\be}$'s via equation \eqref{ebbbb}. 

In short, the above two conjectures state that, up to constant multiplicative factors, ${\bold F}$ is the gradient of temperature in the steady state heat flow problem naturally corresponding to our region with holes, and ${\bold T}$ is the logarithmic gradient of heat energy. Since steady state heat flow in the bulk has the same equations as the corresponding electrostatic field in the bulk (see e.g. \cite[\S12]{Feynman}), in the special case when there is no boundary, the statements of Conjectures \ref{eja} and \ref{ejb} reduce to the statements of \cite[Theorem 1.1]{ef} and Theorem \ref{tba} of the current paper, respectively.



\begin{thebibliography} {99}













\bibitem{BaikOne}
  J. Baik, T. Kriecherbauer, K. McLaughlin and P. Miller, Uniform asymptotics for polynomials orthogonal with respect to a general class of discrete weights and universality results for associated ensembles: Announcement of results, {\it Int. Math. Res. Not.} {\bf 2003} (2003), 821--858.

\bibitem{BaikTwo}
  J. Baik, T. Kriecherbauer, K. McLaughlin and P. Miller, ``Discrete orthogonal polynomials. Asymptotics and applications,'' {\it Annals of Math Studies,} Princeton University Press, Princeton, NJ, 2007. 

\bibitem{sc}
  M. Ciucu, A random tiling model for two dimensional electrostatics, {\it Mem. Amer. Math. Soc.}, {\bf 178} (2005), no. 839, 1--106.

\bibitem{ov}
  M. Ciucu, Dimer packings with gaps and electrostatics, {\it Proc. Nat. Acad. Sci.}, {\bf 105}  (2008),  no. 8, 2766--2772.

\bibitem{ec}
  M. Ciucu, The scaling limit of the correlation of holes on the triangular lattice with periodic boundary conditions, {\it Mem. Amer. Math. Soc.}, {\bf 199} (2009),  no. 935, 1--100.

\bibitem{ef}
  M. Ciucu, The emergence of the electrostatic field as a Feynman sum in random tilings with holes, {\it Trans. Amer. Math. Soc.} {\bf 362} (2010), 4921--4954. 

\bibitem{free}
  M. Ciucu and C. Krattenthaler, The interaction of a gap with a free boundary in a two dimensional dimer system, {\it Comm. Math. Phys.} {\bf 302} (2011), 253--289. 

\bibitem{angle}
  M. Ciucu and I. Fischer, A triangular gap of side 2 in a sea of dimers in a $60^\circ$ angle, {\it J. Phys. A: Math. Theor.} {\bf 45} (2012), 494011.
  

\bibitem{gd}
  M. Ciucu, The interaction of collinear gaps of arbitrary charge in a two dimensional dimer system, {\it Comm. Math. Phys.}, {\bf 330} (2014), 1115--1153.

\bibitem{rangle}
  M. Ciucu, Lozenge tilings with gaps in a $90^\circ$ wedge domain with mixed boundary conditions, {\it Comm. Math. Phys.} {\bf 334} (2015), 507--532.
  
\bibitem{fin}
  M. Ciucu, Macroscopically separated gaps in dimer coverings of Aztec rectangles, {\it Comm. Math. Phys.} {\bf 344} (2016), 223--274.


\bibitem{spg}
  M. Ciucu, Gaps in dimer systems on doubly periodic planar bipartite graphs, {\it Proc. of Amer. Math. Soc.} {\bf 145} (2017), 4931--4944. 

\bibitem{DT}
G. David and C. Tomei, The problem of the calissons,
{\it Amer\. Math\. Monthly} {\bf 96} (1989), 429--431.


  




\bibitem{Dub}
  J. Dub\'edat, Dimers and families of Cauchy-Riemann operators I, {\it J. Amer. Math. Soc.} {\bf 28} (2015), 1063--1167 




\bibitem{Feynman}
   R. P. Feynman, ``The Feynman Lectures on Physics,'' vol. II, Addison-Wesley, Reading, Massachusetts, 1964.

\bibitem{FisherDimer}
  M. E. Fisher, Statistical mechanics of dimers on a plane lattice, {\it Phys. Rev.} {\bf 124}, 1664--1672.
  
\bibitem{FS}
  M. E. Fisher and J. Stephenson, Statistical Mechanics of Dimers on a Plane Lattice. II. Dimer Correlations and Monomers, {\it Phys. Rev. (2)} {\bf 132} (1963), 1411--1431.







\bibitem{Jordan}
  C. Jordan, ``Calculus of finite differences,'' Chelsea, New York, 1960.

\bibitem{Kast}
  P. W. Kasteleyn, The statistics of dimers on a lattice : I. The number of dimer arrangements on a quadratic lattice, {\it Physica} {\bf 27}, 1209--1225.


\bibitem{KeLocal}
  R. Kenyon, Local statistics of lattice dimers, {\it Ann. Inst. H. Poincar\'e Probab.
  Statist.} {\bf 33} (1997), 591--618. 
  
\bibitem{KOS}
  R. Kenyon, R., A. Okounkov, A. \& Sheffield, S. (2006), {\it Ann. of Math.}, {\bf 163}, 1019--1056.







  


\bibitem{MacM}
 P. A. MacMahon, Memoir on the theory of the partition of numbers---Part V. 
Partitions in two-dimensional space, {\it Phil. Trans. R. S.}, 1911, A.


\bibitem{Muller}
  I. M\"uller, A history of thermodynamics, Springer Berlin Heidelberg, 2007.

\bibitem{Nienhuis}
  B. Nienhuis, Coulomb gas formulation of two-dimensional phase transitions, ``Phase Transitions,'' vol. 11, 1--53, Academic Press, London, 1987. 
  
\bibitem{Olver}
  F. W. J. Olver, Asymptotics and special functions, Academic Press, New York, 1974.




\bibitem{TF}
  H. N. V. Temperley and M. E. Fisher, Dimer problem in statistical mechanics -- an exact result, {\it Phil. Mag.} {\bf 6}, 1061--1063.
  
\bibitem{Zuber}
  J. B. Zuber and C. Itzykson, Quantum field theory and the two-dimensional Ising model, {\it Phys. Rev. D} {\bf 15} (1977), 2875--2884.
 
\end{thebibliography}
\end{document}